\documentclass[a4paper,11pt]{article}

\usepackage[margin=1in]{geometry}
\usepackage{amsmath,amsthm,amssymb}
\allowdisplaybreaks

\usepackage{hyperref}
\hypersetup{
	colorlinks=true,
	citecolor = blue!80!black,
	linkcolor = red!80!black
}

\newcommand{\email}[1]{\href{mailto:#1}{\nolinkurl{#1}}}
\usepackage[square]{natbib}
\usepackage{xcolor}
\usepackage{paralist}
\usepackage{tikz}
\usepackage{booktabs}
\usepackage{pgfplots}

\usepackage{titlesec}
\titlespacing{\paragraph}{%
	0pt}{
	0.5\baselineskip}{
	1em}
\titlespacing*{\section}
{0pt}{2ex plus 1ex minus .2ex}{1ex plus .2ex}
\titlespacing*{\subsection}
{0pt}{1ex plus 1ex minus .2ex}{0.5ex plus .2ex}

\newtheoremstyle{theoremc}
  {4pt}
  {4pt}
  {\itshape}
  {0pt}
  {\bfseries}
  {.}
  { }
  {\thmname{#1}\thmnumber{ #2}\thmnote{ (#3)}}
\theoremstyle{theoremc}

\newtheorem{theorem}{Theorem}[section]
\newtheorem{lemma}[theorem]{Lemma}
\newtheorem{definition}{Definition}
\newtheorem{proposition}[theorem]{Proposition}

\newtheorem{corollary}[theorem]{Corollary}

\theoremstyle{remark}

\setlength{\belowcaptionskip}{-8pt}


\newcommand{\ie}{i.e.,\xspace}
\newcommand{\eg}{e.g.,\xspace}

\def\clap#1{\hbox to 0pt{\hss#1\hss}}

\usepackage[ruled,vlined]{algorithm2e}

\renewcommand{\vec}[1]{\mathbf{#1}}

\newcommand{\vr}{\vec{r}}
\newcommand{\vs}{\vec{s}}

\newcommand{\vx}{\vec{x}}
\newcommand{\R}{\mathbb{R}}
\newcommand{\m}{\mathbb}
\newcommand{\N}{\mathbb{N}}
\newcommand{\E}[2][]{\mathbb{E}_{#1}\left[#2\right]}

\renewcommand{\Pr}[1]{\text{Pr}\left[#1\right]}

\newcommand{\eps}{\varepsilon}

\sloppy
\widowpenalty=10000
\clubpenalty=10000

\title{Unknown I.I.D.\@ Prophets: Better Bounds, Streaming\\Algorithms, and a New Impossibility}

\author{%
	\ifdefined\hideauthors
	Authors Undisclosed
	\else
	Jos\'e Correa\thanks{
	Departamento de Ingenier\'ia Industrial, Universidad de Chile, Avenue Rep\'ublica 701, Santiago, Chile. Email: \email{correa@uchile.cl}. Supported in part by ANID Chile through grant CMM-AFB 170001.} 
	\and 
	Paul D\"utting\thanks{Department of Mathematics, London School of Economics, Houghton Street, London WC2A 2AE, UK. Email: \email{p.d.duetting@lse.ac.uk}. Part of the work was done while the author was at Google Research, Z\"urich, Switzerland.}
	\and
	Felix Fischer\thanks{School of Mathematical Sciences, Queen Mary University of London, Mile End Road, London E1 4NS, UK. Email: \email{felix.fischer@qmul.ac.uk}. Supported in part by EPSRC grant~EP/T015187/1.}
	\and
	Kevin Schewior\thanks{Department Mathematik/Informatik, Abteilung Informatik, Universit\"at zu K\"oln, Weyertal 121, 50931, K\"oln, Germany. Email: \email{kschewior@gmail.com}}
	\and
	Bruno Ziliotto\thanks{Centre de Recherche en Math\'ematiques de la D\'ecision, Universit\'e Paris-Dauphine, Place du Mar\'echal de Lattre de Tassigny, 75016 Paris, France. Email: \email{ziliotto@math.cnrs.fr}.}\fi}

\date{}

\begin{document}

\maketitle

\begin{abstract}
A prophet inequality states, for some $\alpha\in[0,1]$, that the expected value achievable by a gambler who sequentially observes random variables $X_1,\dots,X_n$ and selects one of them is at least an $\alpha$ fraction of the maximum value in the sequence. 
We obtain three distinct improvements for a setting that was first studied by Correa et al.\ (EC,~2019) and is particularly relevant to modern applications in algorithmic pricing. In this setting, the random variables are i.i.d.\ from an unknown distribution and the gambler has access to an additional $\beta n$ samples for some $\beta\geq 0$. 
We first give improved lower bounds on $\alpha$ for a wide range of values of $\beta$; specifically, $\alpha\geq(1+\beta)/e$ when $\beta\leq 1/(e-1)$, which is tight, and $\alpha\geq 0.648$ when $\beta=1$, which improves on a bound of around $0.635$ due to Correa et al.\@ (SODA,~2020).
Adding to their practical appeal, specifically in the context of algorithmic pricing, we then show that the new bounds can be obtained even in a streaming model of computation and thus in situations where the use of relevant data is complicated by the sheer amount of data available.
We finally establish that the upper bound of $1/e$ for the case without samples is robust to additional information about the distribution, and applies also to sequences of i.i.d.\@ random variables whose distribution is itself drawn, according to a known distribution, from a finite set of known candidate distributions. This implies a tight prophet inequality for exchangeable sequences of random variables, answering a question of Hill and Kertz (Contemporary Mathematics,~1992), but leaves open the possibility of better guarantees when the number of candidate distributions is small, a setting we believe is of strong interest to applications.
\end{abstract}

\section{Introduction}
\label{sec:intro}

Prophet inequalities have been studied extensively in optimal stopping theory and have recently seen a surge of interest in theoretical computer science. They are less pessimistic than classic worst-case competitive analysis as a framework for the study of online algorithms, and have major applications in algorithmic mechanism design and algorithmic pricing. These applications include relatively obvious ones in the design and analysis of posted-price mechanisms, but also more indirect ones like the choice of reserve prices in online advertising auctions~(see, \eg the survey of \citet{HaLi16a}).

The basic prophet inequality problem---sometimes referred to as the \emph{single-choice} or \emph{single-item} prophet inequality---is the following: 
a \emph{gambler} observes a sequence of random variables $X_1,\dots,X_n$ which satisfy certain assumptions, for example that they are drawn independently from possibly distinct known distributions. Upon seeing the realization $X_i=x_i$ the gambler has to decide immediately and irrevocably whether to stop and receive $x_i$ as reward or to continue to the next random variable. Denoting the possibly randomized index at which the gambler stops by $\tau$, the expected reward of the gambler is $\mathbb{E}[X_\tau]$. The goal is to find a stopping rule that is competitive with the expected reward of an all-knowing \emph{prophet} who in particular knows the entire sequence and can simply choose the highest reward in the sequence. The goal thus is to determine the largest $\alpha\in [0,1]$ such that the inequality $\mathbb{E}[X_\tau] \geq \alpha \cdot \mathbb{E}[\max_i\{X_1, \dots, X_n\}]$ holds for all random variables that satisfy the assumptions.

When distributions are known, optimal stopping rules can in principle be found by backward induction. Determining the optimal competitive ratio $\alpha$ may, nevertheless, be an intricate task. Whereas a tight result of $\alpha=1/2$ for non-identical distributions has been known since the 1980s \citep{KrengelS77,KrengelS78,Samuel84}, a bound of $\alpha\approx 0.745$ for the case of identical distributions was only recently shown to be tight~\citep{HillK82,AEEH+17a,CorreaFHOV17}.

An exciting new set of questions arises if we assume that the distributions of the random variables are unknown. In what is arguably the most basic setting of this kind, the gambler faces $n$ i.i.d.~draws from an unknown distribution. Since the random variables come from the same distribution, we may hope to be able to learn from earlier draws and apply what we have learned to later ones. \citet{CorreaDFS19} showed that, perhaps surprisingly, no learning is possible and the optimal solution mirrors the optimal solution to the classic secretary algorithm to achieve a bound of $1/e$; the impossibility persists even with $o(n)$ additional samples from the distribution, whereas a considerably better lower bound can be achieved with $\beta n$ samples, for $\beta>0$, and $O(n^2)$ samples are enough to get arbitrarily close to the optimal bound of $\alpha\approx 0.745$ achievable for a known distribution. The latter can actually be achieved already with $O(n)$ samples~\citep{RubinsteinWW20}. While improved bounds have been obtained for the case with $\beta n$ samples~\citep{CorreaCES20,KaplanNR20}, significant gaps remain between lower and upper bounds across the whole range of values of $\beta$. In a setting with non-identical distributions, a single sample from each distribution is enough to match the optimal bound of $\alpha=1/2$ achievable with full knowledge of the distributions~\citep{RubinsteinWW20}.

\subsection{Our Contribution}

We improve the results of \citet{CorreaDFS19} and the follow-up work in three distinct directions. Firstly, we give a new prophet inequality for i.i.d.~random variables from an unknown distribution and access to additional samples. The inequality is tight when the number of samples is small, and otherwise improves on the state of the art. To obtain it we consider a natural class of algorithms and use tools from variational calculus to analyze the optimal algorithm from the class exactly. Secondly, we show that the new prophet inequality can be translated with arbitrarily small loss into a streaming model of computation. While streaming algorithms for optimal stopping have to our knowledge not previously been considered, they become very interesting when we deal with unknown distributions and large amounts of relevant data. The latter are commonly found in modern applications of prophet inequalities to algorithmic pricing. 
Thirdly, we give an upper bound of $1/e$ for a setting where the random variables are i.i.d.\@ from a distribution $F^j$ which is not itself known but drawn from a known distribution over a finite set of known scenarios $F^1,\dots,F^m$. The bound is established for finite $m$ and $n$, and scenarios of finite support, and shows that the impossibility in the absence of additional samples is robust to additional information regarding the distribution. It also implies a tight prophet inequality of $1/e$ for exchangeable sequences of random variables with a known joint distribution, answering a question of \citet{HillKertz92}.

\paragraph{Unknown Distribution.}

In recent work, \citet{CorreaDFS19} introduced a prophet inequality problem involving~$n$ i.i.d.\@ random variables from an unknown distribution and $k\geq 0$ additional samples. The main motivation for studying this problem arises from modern applications of prophet inequalities, specifically their use in the analysis of posted-price mechanisms and reserve pricing in advertising auctions. While it is common in these applications to model valuations as draws from an underlying distribution, it may not be reasonable to assume that the distribution is known to the auctioneer. The auctioneer may, however, choose to learn the distribution on the fly as opportunities arrive, or may possess some limited historical information in the form of additional samples.

\citet{CorreaDFS19} showed that in the absence of samples the correct value of $\alpha$ is $1/e$. They also gave parametric upper and lower bounds for the case of $\beta n$ samples when $\beta>0$, which for $k=n$ are equal to $1-1/e\approx 0.6321$ and $\ln(2)\approx 0.69$. Follow-up work has produced improved lower bounds for $k<n$~\citep{KaplanNR20} and $k=n$~\citep{CorreaCES20}, the latter being equal to approximately~$0.635$.

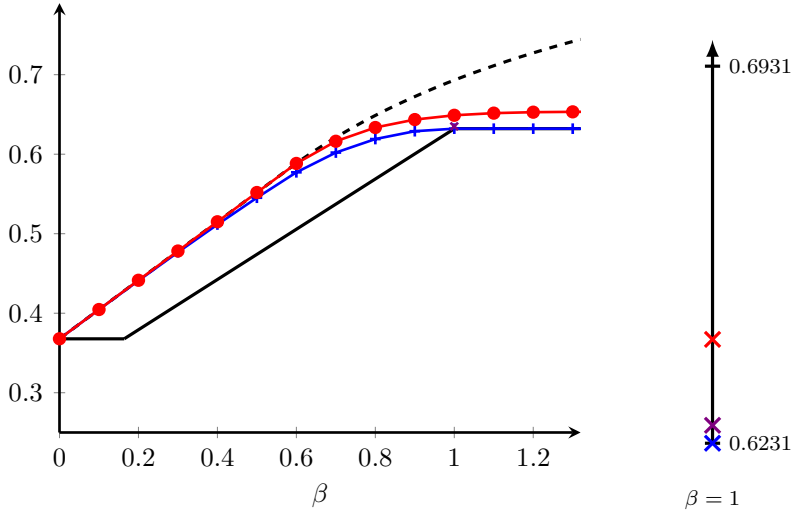
\begin{figure}[t]
\centering
\begin{tikzpicture}
\begin{axis}[xlabel=$\beta$, axis x line = left, axis y line = left, xmin = 0.0, xmax = 1.32, ymin = 0.25, ymax = 0.79, font=\small, line width = 1pt] 
\addplot[no marks,very thick,domain=0:0.58197,dashed] {(1+x)/exp(1)}; 
\addplot[no marks,very thick,domain=0.58197:1.32,dashed] {-x*ln(x/(1+x))};
\addplot[no marks,very thick,domain=0:0.164] {0.3678};
\addplot[no marks,very thick,domain=0.164:1] {0.316*x+0.316};
\addplot[no marks,very thick,domain=1:1.32] {0.6321};
\addplot[solid, blue, mark=+] coordinates {
  (1.4, 0.632120)
  (1.3, 0.632120)
  (1.2, 0.632120)
  (1.1, 0.632120)
  (1.0, 0.632120)
  (0.9, 0.628912)
  (0.8, 0.619052)
  (0.7, 0.602063)
  (0.6, 0.577208)
  (0.5, 0.545239)
  (0.4, 0.511545)
  (0.3, 0.476724)
  (0.2, 0.440991)
  (0.1, 0.404608)
  (0.0, 0.367880)
};
\addplot[solid, red, mark=*] coordinates {
  (1.4, 0.653368)
  (1.3, 0.653280)
  (1.2, 0.652853)
  (1.1, 0.651654)
  (1.0, 0.648957)
  (0.9, 0.643563)
  (0.8, 0.633580)
  (0.7, 0.616281)
  (0.6, 0.588379) 
  (0.5, 0.551819)
  (0.4, 0.515031)
  (0.3, 0.478243) 
  (0.2, 0.441455)
  (0.1, 0.404667)
  (0.0, 0.367880)
};
\addplot[solid, violet, mark=x] coordinates {
  (1.0, 0.635)
};
\end{axis}
\end{tikzpicture}
\hspace*{1cm}
\begin{tikzpicture}
\scriptsize
\draw[-latex,very thick] (1,1) node[below,yshift=-0.5cm]{$\beta = 1$} -- (1,6.35);
\draw[-,very thick] (0.9,1) -- (1.1,1) node[right] {$0.6231$};
\draw[-,very thick,blue] (0.9,0.9) -- (1.1,1.1);
\draw[-,very thick,blue] (0.9,1.1) -- (1.1,0.9);
0.2377
\draw[-,very thick,violet] (0.9,1.1377) -- (1.1,1.3377);
\draw[-,very thick,violet] (0.9,1.3377) -- (1.1,1.1377);
\draw[-,very thick,red] (0.9,2.277) -- (1.1,2.477);
\draw[-,very thick,red] (0.9,2.477) -- (1.1,2.277);
1.377
\draw[-,very thick] (0.9,6) -- (1.1,6) node[right] {$0.6931$};
\end{tikzpicture}
\caption{Visualization of the lower bound established in this paper for varying $\beta$ (solid, red), and comparison with the parametric upper bound (dashed black) and parametric lower bound (solid black) of \citet{CorreaDFS19} as well as the lower bounds of \citet{KaplanNR20} (blue) and \citet{CorreaCES20} (violet).}\label{fig:samples-bound}
\end{figure}

The guarantee of $1-1/e$ for $k=n$ samples, achieved by~\citet{CorreaDFS19} and also by \citet{KaplanNR20}, can be obtained by setting a threshold for the acceptance of each arriving value that is equal to the maximum of a uniform random subset of size $n-1$ of all samples and values seen thus far. The key ingredient in the analysis of this algorithm is a fresh-samples lemma, which states that each of the selected sets of values is distributed like $n-1$ fresh samples. The lemma implies that, conditioned on stopping, the expected value obtained is the same as that of the prophet. The competitive ratio is thus at least, and in fact equal to, the probability of stopping at all. Since conditioned on reaching a random variable we stop with probability $1/n$, this probability is $1-(1-1/n)^n \geq 1-1/e$.

We study a natural generalization of the algorithm to what we refer to as \emph{maximum-of-random-subset} (MRS) algorithms. MRS algorithms may use as the threshold in any given step a random subset \emph{of any size} of the samples and values seen so far, and are thus characterized by a function mapping each step to the size of the set used at that step.

Our analysis of MRS algorithms is based on the observation that, with the help of a fresh-samples lemma, the expected value obtained by the algorithm can be written as a linear combination of expected maxima of certain numbers of fresh draws from the distribution. On the other hand, the value obtained by the prophet, to which we are comparing ourselves, is simply the expected maximum of $n$ fresh draws. We can thus determine the worst-case ratio between the two values by writing any maximum of $\gamma n$ draws as the integral $\int_0^\infty F^{\gamma n}(x)\;\mathrm{d}x$ and then considering the integrals pointwise for each~$x$. After substituting $a=F^n(x)$, we are left with the minimization of a single rational function in a single variable. As we will see, this technique leads to a tight analysis of MRS algorithms.

To determine the best MRS algorithm using an arbitrary number of samples, we take the supremum over all functions characterizing MRS algorithms of the solution to the aforementioned single-variable optimization problem. An upper bound on the solution of this supremum-infimum (control) problem can be obtained by swapping the supremum and the infimum, solving the inner supremum using the Euler--Lagrange equation, and tackling the remaining optimization problem by multivariable calculus. A lower bound can be obtained by substituting a solution of the form obtained from the Euler--Lagrange equation into the supremum-infimum problem and again solving the remaining optimization problem by multivariable calculus. It turns out that the resulting upper and lower bounds match, which indirectly implies a minimax result. The best guarantee that can be obtained using MRS algorithms is approximately $0.653$ and can be realized with around $1.443\cdot n$ samples.

For the case in which only a bounded number of $\beta n$ samples is available, we conjecture a certain structure of the optimal MRS algorithm. Given this structure, we obtain a similar control problem as before. Using numerical methods, we obtain lower bounds for different values of $\beta$, which we conjecture to be tight up to errors in the numerical approximation.  Our lower bounds improve on the state of the art for a broad range of values of $\beta$. In particular we establish the existence of an MRS algorithm using~$n$ samples that achieves a guarantee of roughly $0.649$, which presents a considerable improvement over the previous best guarantee of $0.635$ for the exact same setting due to \citet{CorreaCES20}.

An interesting question going forward concerns the types of algorithms that could be used to close the remaining gap between the lower and upper bounds. Two natural candidates are algorithms that may skip an initial fraction of the values, effectively turning them into samples, and algorithms that may consider other (empirical) order statistics of random subsets of variables than just the maximum. We will see that the former class, which includes the classic algorithm for the secretary problem when $\beta=0$, leads to a tight guarantee of $(1+\beta)/e$ when $\beta\leq 1/(e-1)\approx 0.58$.

A summary of our results for i.i.d.~random variables from an unknown distribution and a comparison to the results from prior work can be found in \autoref{fig:samples-bound}.

\paragraph{Streaming Prophet Inequalities.}

As we have mentioned earlier, the study of prophet inequalities for i.i.d.~random variables from an unknown distribution is motivated by applications in posted pricing and in reserve pricing in advertising auctions. Such applications are often subject to an additional constraint, imposed by a vast amount of individual transactions that limits the way in which data about these transactions can be accessed and used. In particular, simply storing all past data in its entirety for later use is often impossible. 

The canonical model for studying algorithms subject to this kind of constraint is the \emph{streaming model}~(see, \eg the survey of \citet{Muthukrishnan05}). In the streaming model, an algorithm is allowed only a limited number of sequential passes over the input, often just a single one, and can only store a limited amount of information, logarithmic in the size of the input or constant. In the prophet problem, allowing only a single pass over the sequence of random variables is very natural, but the streaming model additionally limits the amount of memory available to an algorithm to $O(\log (k+n))$ or $O(1)$.

Streaming algorithms for the prophet problem have to our knowledge not previously been studied. This should not be surprising, since stopping rules typically rely on information about underlying distributions like their mean or quantiles, and distributions were until very recently assumed to be known from the outset. When distributions are not known, as in the model of \hbox{\citet{CorreaDFS19}}, any information about them must be inferred from the sequence of random variables and streaming algorithms suddenly become very desirable. This is true in particular when optimal stopping is used as a tool for modern applications in algorithmic pricing, which often involve very large amounts of relevant data.

MRS algorithms as described above are not streaming algorithms, as they may require access to arbitrary subsets of the random variables seen so far at any point. We will see, however, that any MRS algorithm can be implemented as a streaming algorithm with $\eps$ additive loss in the guarantee by storing $O_\eps(1)$ samples. Note that computing a threshold for accepting the next value for \emph{each} such value independently would essentially require remembering all of the seen values. Instead, we recycle the first computed threshold for some number of times steps and recompute it whenever it becomes too bad of an approximation for the threshold the MRS algorithm would actually set. We show that it suffices to (non-adaptively, i.e., independently of the observed values) choose $O_\eps(1)$ time points for recomputing the threshold. The computation of the thresholds is implemented by an on-the-fly construction of the corresponding random subsets which only requires to store a single sample and $O(\log n)$ additional space.

\citet{CorreaDFS19} showed that $O_\eps(n^2)$ samples are enough to get within $\eps$ of the tight bound of around $0.745$ for the case of a known distribution. Quite remarkably, the same can already be achieved with just $O_\eps(n)$ samples~\citep{RubinsteinWW20}. The algorithms underlying both of these results rely heavily on \emph{empirical quantiles}, and strong communication-complexity lower bounds for quantile estimation~\citep{GuhaM09} suggest that they cannot be implemented in the streaming model. This suggests that a gap might exist between the performance achievable respectively by streaming algorithms and algorithms that have random access to past random variables. We leave this interesting question for future work.

\paragraph{Unknown Distribution with Prior.}

The prophet problem for i.i.d.\@ random variables from an unknown distribution is subject to a strong impossibility result that matches the obvious lower bound of $1/e$, and any improvement requires a significant number of additional samples. It is natural to ask whether an improvement is possible if information about the distribution is available not in the form of samples but more directly.

We explore this question by considering a setting where a distribution $F^j$ is drawn from a set of distributions $F^1,\dots,F^m$ according to a prior distribution $\theta$, and the gambler is presented with random variables $X_1,\dots,X_n$ drawn independently from $F^j$. The gambler knows the prior distribution $\theta$ including its support, but not the distribution $F^j$ that has been drawn. This setting models situations where a decision maker is aware of the existence of a number of scenarios and has a good understanding of each scenario, but does not know which scenario is currently unfolding. For example, a merchant may be aware that there are good and bad days for selling a particular item, and may be aware how the valuation potential buyers have for the item is distributed on good days and on bad days, but may not be aware whether it is a good or a bad day.

Given the additional information about the distribution, we may hope to be able to improve on the obvious lower bound of $1/e$. We will see, however, that no such improvement is possible even when $n$, $m$, and the distributions $F^1,\dots,F^m$ are all finite. This shows that the impossibility result for unknown distributions is rather robust. It also implies tightness of a known prophet inequality of $1/e$ due to \citet{EltonKertz91} for exchangeable sequences of random variables with a known joint distribution, answering a question of \citet{HillKertz92}. Here, a sequence of random variables is called exchangeable if their joint distribution is invariant under permutations of the sequence, which is the case for the sequence of random variables for which we establish the upper bound. The upper bound does not preclude the existence of better lower bounds in settings where $n$ or $m$ is small. We believe such settings to be of significant practical interest and leave their study as an interesting problem for future work.

The proof of the upper bound uses an appropriate minimax argument to provide a reduction to the impossibility result for an unknown distribution. However, the argument itself requires a few new ideas and a modification of the key construction.

The basic idea behind our proof is to interpret the new prophet problem as a two-player zero-sum game, or equivalently a min-max problem, where the first player chooses a prior $\theta$ and the second player chooses a stopping rule. The payoff that the first player seeks to minimize and the second player seeks to maximize is the expected reward from the stopping rule minus $1/e$ times the expected maximum in the sequence. If we could reverse the order of minimization and maximization, and thus turn the problem into a max-min problem, we would be looking at a situation where player~2, the maximizer, moves first and chooses a stopping rule without knowing $\theta$, and player~1, the minimizer, gets to choose a difficult $\theta$ with knowledge of the stopping rule. This max-min problem is, in fact, more difficult than the prophet problem for i.i.d.~random variables from an unknown distribution considered by \citet{CorreaDFS19}. 

Indeed, the construction of \citeauthor{CorreaDFS19} relies on the infinite version of Ramsey's theorem \citep{Rams30a}, and in particular leads to distributions with infinite support for which the order of minimization and maximization cannot be reversed.

Minimax theorems do exist that can handle finite strategy spaces of player 2 and compact metric strategy spaces for player 1~\citep[\eg][Proposition 1.17]{MSZ}, and this is the case we would get if the difficult instances for unknown i.i.d.~distributions would have a support that is finite and bounded by a number that depends only on $n$. The argument sketched above could thus be rescued through a variant of the construction of \citet{CorreaDFS19} with this property. 
We provide such a construction by using a \emph{finite} version of Ramsey's theorem as given for example by \citet{CFS10}.

An interesting aspect of our argument for the game theory connoisseur is that the minimax theorem of course requires mixed strategies.  This is clearly not a problem for player 2, but for player~1 this involves mixing over stopping rules. The validity of the above argument thus requires that any mixture of stopping rules can be implemented as a stopping rule. We will see that this readily follows from Kuhn's celebrated theorem on behavior strategies in extensive form games \citep{Kuhn53}.

\subsection{Further Related Work}

\paragraph{Prophet Inequalities and Pricing.} A comprehensive overview of early work on the classic single-choice prophet inequality can be found in the survey of \citet{HillKertz92}. Starting from the work of \citet{HajiaghayiKS07}, prophet inequalities have seen a surge of interest in the theoretical-computer-science literature. 

Two important directions have considered extensions to richer domains where more than one element can be selected~\citep[\eg][]{KlWe19a,Alaei14,DuettingK15,FeldmanGL15,DuettingFKL17,Rubinstein16,RubinsteinS17,ChawlaMT19,GravinW19,EzraFGT20,DuettingKL20}), and random- or best-order models, both in the single-item setting and in combinatorial domains~\citep[\eg][]{EHLM15a,AEEH+17a,ACK18a,EhsaniHKS18,CSZ19a,AgrawalSZ19}. 

Prophet inequalities with inexact priors were studied by \citet{DuettingK19}, and prophet inequalities for unknown distributions with or without access to samples by \citet{AzarKW14,BabaioffBDS17,CorreaDFS19,RubinsteinWW20,CorreaCES20,KaplanNR20}.

Exchangeable random variables are generally positively correlated. Apart from the classical work of \citet{rinott1987} on negatively correlated random variables, the only work we are aware of to explicitly study correlation in the context of prophet inequalities is that of \citet{ImmorlicaSW20}. 

Problems in pricing where distributional information is unavailable and must be learned have also been considered in operations research and management science. These problems typically differ significantly from those studied in computer science, and solutions often involve some form of regret minimization. Examples include recent results of \citet{GoZe17a} and those described in a survey of \citet{Boer15}.

\paragraph{Streaming Algorithms.}

To the best of our knowledge, we are the first to use the streaming model of computation to study prophet inequality problems. Related work has previously considered streaming algorithms for submodular maximization \citep[\eg][]{BadanidiyuruMKK14,FeldmanNSZ20}. The algorithms in this literature are typically not required to make irrevocable online decisions, but they also don't assume a generative model for the data.

More closely related is work in the streaming literature that concerns the computation of various aggregate statistics of large amounts of stochastic data. For example, most prophet inequalities rely on the mean or median, or on more fine-grained information about the underlying distributions such as quantiles. While the empirical mean and variance of a sequence of values can be updated efficiently~\citet{West79}, strong lower bounds exist for the estimation of the median and other quantiles~\citet{GuhaM09}.

Even more specifically, our streaming implementation of MRS algorithms shares certain characteristics with reservoir sampling and its variants~\citep[\eg][]{Vitter85,Li94}, where the goal is to produce at any point in time a random subset of size~$k$ of the values seen so far. Compared to reservoir sampling, however, we require subsets of varying size at certain points in time as well as a better space complexity.

\paragraph{Algorithms from Data.}

In analyzing MRS algorithm with and without streaming, we optimize over a class of algorithms that has limited information about the problem at hand. Related problems have been considered under the umbrella of application-specific algorithm selection and data-driven algorithm design \citep[\eg][]{GuptaR17,AilonCCLMS11,BalcanDV18,BalcanDW18}, in particular in the context of designing revenue-optimal auctions from samples~\citep[\eg][]{ColeR14,MorgensternR15}.

\section{Preliminaries}
\label{sec:prelims}

Denote by $\N$ the set of positive integers and let $\N_0=\N\cup\{0\}$. For $i\in\N$, let $[i]=\{1,\dots,i\}$.

\paragraph{Sequences of Random Variables.}

We consider sequences of non-negative and integrable random variables $X_1,\dots,X_n$ drawn independently from a distribution $F$. The distribution $F$ will either be unknown, or it will be drawn from a known prior distribution $\theta\in\Theta$, where $\Theta$ is the set of finite distributions over real distributions with finite support. A sequence of random variables is called exchangeable if their joint distribution is invariant under permutations of the sequence. The set of exchangeable sequences contains the sequences described above but is more general.

\paragraph{Stopping Rules.}

A stopping rule observes a sequence of random variables and decides to stop based on the random variables seen so far and possibly some additional information regarding the distribution from which the random variables are drawn. We consider two types of stopping rules, corresponding respectively to the cases where random variables are drawn from an unknown distribution and from a distribution that has itself been drawn according to a prior.

In the case of an unknown distribution, we consider stopping rules that in addition to the random variables observed so far may depend on $k$ additional independent samples $S_1,\dots,S_k$ from the same distribution as the random variables. 
Such a stopping rule can be expressed as a family $\vr$ of functions $r_1,\dots,r_n$, where $r_i:\R_+^{k+i}\rightarrow[0,1]$ for all $i=1,\dots,n$. Here, for any $\vs\in\R_+^k$ and $\vx\in\R_+^n$, $r_i(s_1\dots,s_k,x_1,\dots,x_i)$ is the probability of stopping at $X_i$ when we have observed samples $S_1=s_1\dots,S=s_k$ and values $X_1=x_1,\dots,X_i=x_i$ and have not stopped at $X_1,\dots,X_{i-1}$.
The \emph{stopping time} $\tau$ of such a stopping rule $\vr$ is the random variable with support $\{1,\dots,n\}\cup\{\infty\}$ such that for all $\vs\in\R_+^k$ and $\vx\in\R_+^n$,
\begin{align*}
	\Pr{\tau=i\mid S_1=s_1,\dots,S_k=s_k,X_1=x_1,\dots,X_n=x_n} = \Biggl(\prod_{j=1}^{i-1}\bigl(&1-r_j(s_1,\dots,s_k,x_1,\dots,x_j)\bigr)\Biggr)\\
	&\cdot r_i(s_1\dots,s_k,x_1,\dots,x_i) .
\end{align*}

In the case of a distribution drawn from a prior $\theta$, we consider stopping rules that in addition of the random variables observed so far may depend on $\theta$. 
Such a stopping rule can be expressed as a family $\vr$ of functions $r_1,\dots,r_n$, where $r_i:\R_+^{i}\times\Theta\rightarrow[0,1]$ for all $i=1,\dots,n$. Here, for any $\vx\in\R_+^n$ and $\theta\in\Theta$, $r_i(x_1,\dots,x_i,\theta)$ is the probability of stopping at $X_i$ when we have observed the values $X_1=x_1,\dots,X_i=x_i$, have not stopped at $X_1,\dots,X_{i-1}$, and when the prior distribution is $\theta$. 
The \emph{stopping time} $\tau$ of such a stopping rule $\vr$ is thus the random variable with support $\{1,\dots,n\}\cup\{\infty\}$ such that for all $\vx\in\R_+^n$ and $\theta\in\Theta$,
\begin{align*}
	\Pr{\tau=i\mid X_1=x_1,\dots,X_n=x_n, \theta}= & \Biggl(\prod_{j=1}^{i-1}\bigl(1-r_j(x_1,\dots,x_j,\theta)\bigr)\Biggr)
	\cdot r_i(x_1,\dots,x_i,\theta) .
\end{align*}

\paragraph{Prophet Inequalities.} 

For a given stopping rule we will be interested in the expected value $\E{X_\tau}$ of the variable at which it stops, where we use the convention that $X_{\infty}=0$, and will measure its performance relative to the expected maximum $\E{\max\{X_1,\dots,X_n\}}$ of the random variables $X_1,\dots,X_n$. We will say that a stopping rule achieves approximation guarantee~$\alpha$, for $\alpha\leq 1$, if for any distribution or prior over distributions, $\E{X_\tau}\geq\alpha\,\E{\max\{X_1,\dots,X_n\}}$.

\paragraph{Streaming Algorithms.}

In the case of an unknown distribution we will be interested specifically in stopping rules that can be implemented as streaming algorithms. We will assume that the stream consists of the samples $S_1,\dots,S_k$ followed by the values $X_1,\dots,X_n$. A streaming algorithm is then allowed a single pass over the sequence, and its space complexity is required to be logarithmic both in the length of the sequence, which will in fact be $O(n)$, and in $\max\{S_1,\dots,S_k,X_1,\dots,X_n\}$. Since the streaming algorithms we consider exclusively store values that occur in the stream, we will express their space complexity in terms of $n$ only.

\medskip

For ease of exposition we will assume continuity of distributions in proving lower bounds and use discrete distributions to prove upper bounds. All results can be shown to hold in general by standard arguments, to break ties among random variables and to approximate a discrete distribution by a continuous one.

\section{I.I.D.\@ Random Variables from an Unknown Distribution}
\label{sec:subset}

We now turn to the setting of unknown i.i.d.\@ random variables, which was first studied by \citet{CorreaDFS19}. Here a lower bound of $\alpha=1/e$ can be obtained via the optimal solution to the secretary problem, and this bound cannot be improved upon without access to additional samples from the distribution. A sharp phase transition occurs when there are linearly many samples in the number of variables. In particular, \citeauthor{CorreaDFS19} gave an algorithm that with~$n$ samples achieves an approximation ratio of $\alpha \geq 1-1/e \approx 0.6321$. The same bound was obtained by \citet{KaplanNR20}, and improved to $0.635$ by \citet{CorreaCES20}. We improve on this bound with~$n$ samples, and also obtain improved bounds for the case of $\beta n$ samples for variable $\beta>0$.

We do so by considering a natural class of algorithms that can be thought of as fixing a function $f:[n]\rightarrow\N$, and accepting variable $X_i$ if its value exceeds the maximum of $f(i)$ fresh samples from the distribution. In fact, we will not draw fresh samples at each time step but rather choose uniformly at random a subset of size $f(i)\leq k+i-1$ from the set containing the $k$ samples available to the algorithm at the outset and the values of the random variables $X_1,\dots,X_{i-1}$ observed so far. We will refer to algorithms that follow this general strategy as \emph{maximum-of-random-subset} (MRS) algorithms, and will see in \autoref{subsec:structural-lemma} that they behave as if they would in fact draw fresh samples at each time step. The bound of $1-1/e$ of \citeauthor{CorreaDFS19} can in fact be achieved with an MRS algorithm where for all $i$, $f(i)=n-1$.

We will first consider MRS algorithms that may use an arbitrary number~$k$ of samples, and give a tight bound on their approximation guarantee. Interestingly, the optimal such guarantee of $\alpha\approx 0.6534$ is obtained with a bounded number of samples, namely $k\approx 1.4434 \cdot n$. We then give bounds for MRS algorithms that use at most $\beta n$ samples for $0\leq\beta\leq 1.4434$, and show specifically that there exists an MRS algorithm using $n$ samples with an approximation guarantee of $\alpha\geq 0.6489>0.635 > 1-1/e$.

\begin{theorem}
	\label{thm:unconstrained}
	Consider a sequence of $n$ random variables $X_1,\dots,X_n$ drawn independently from an unknown distribution. Then, as $n\rightarrow\infty$, the best MRS algorithm with an unconstrained number $k$ of samples achieves an approximation guarantee of $\alpha\approx 0.6534$ and requires $k\approx 1.4434 \cdot n$ samples.
\end{theorem}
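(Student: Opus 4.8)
The plan is to set up the analysis of MRS algorithms in two stages: first express the expected value obtained by any MRS algorithm in closed form via the fresh-samples lemma, then optimize over the choice of MRS algorithm and determine the worst-case distribution simultaneously. For the first stage, I would invoke the structural lemma promised in \autoref{subsec:structural-lemma}: for an MRS algorithm characterized by $f:[n]\to\N$, conditioned on reaching step $i$, the threshold it uses at step $i$ is distributed as the maximum of $f(i)$ fresh draws from $F$. Writing $\gamma_i = f(i)/n$ and passing to the limit $n\to\infty$, the probability of stopping at or before each step and the value collected upon stopping both become integrals of powers of $F$; after the standard substitution $a = F^n(x)$ (so that $F^{\gamma n}(x) = a^{\gamma}$) the ratio $\E{X_\tau}/\E{\max_i X_i}$ becomes a functional of the form $\int_0^1 g(a)\,\mathrm{d}a \big/ \int_0^1 (\text{something})\,\mathrm{d}a$ whose worst case over distributions reduces, pointwise in $a$, to minimizing a single rational function of $a$. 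This is exactly the "consider the integrals pointwise" step described in the introduction, and it yields, for a fixed MRS profile (i.e.\ a fixed nondecreasing function $\gamma:[0,1]\to[0,\infty)$ recording the limiting subset sizes), a clean expression $\Phi(\gamma) = \inf_{a\in[0,1]} \rho(a;\gamma)$ for the guarantee.

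The second stage is the supremum–infimum (control) problem $\sup_{\gamma}\inf_{a}\rho(a;\gamma)$. I would bound it from above by swapping $\sup$ and $\inf$: fix $a$, solve the inner $\sup_{\gamma}$ by the Euler–Lagrange equation associated with the functional defining $\rho(a;\gamma)$, plug the resulting extremal $\gamma$ back in, and then minimize the remaining expression over $a$ (and over any finitely many scalar parameters, such as the total sample budget $\beta$, left after the variational step) by ordinary multivariable calculus. This gives $\alpha \le \alpha^\star$ with $\alpha^\star$ the numerical constant $\approx 0.6534$ and simultaneously identifies the optimal budget $\beta^\star \approx 1.4434$. For the matching lower bound I would exhibit an explicit MRS profile $\gamma^\star$ of the form suggested by the Euler–Lagrange solution — concretely, $f(i)$ should be taken to be $0$ (or $1$) for an initial fraction of the steps and then ramp up according to the extremal curve up to $\beta^\star n$ — substitute it into $\inf_a \rho(a;\gamma^\star)$, and verify by multivariable calculus that the infimum equals the same constant $\alpha^\star$. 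Since the two bounds coincide, the swap of $\sup$ and $\inf$ was lossless, which incidentally certifies a minimax statement as a by-product; taking $n\to\infty$ throughout and noting $f$ must be integer-valued (handled by rounding, which costs $o(1)$) completes the proof.

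The main obstacle I anticipate is the variational step and the verification that the candidate $\gamma^\star$ is genuinely the pointwise minimizer for every $a\in[0,1]$, not merely a critical point. The functional $\rho(a;\gamma)$ is a ratio of integrals in $\gamma$, so the Euler–Lagrange condition is that of a fractional (Dinkelbach-type) problem; one has to be careful that the extremal is admissible as an MRS profile (nonnegative, nondecreasing, bounded by the available budget) and that the constraint $f(i)\le k+i-1$ is respected throughout, which is precisely what forces the characteristic "skip an initial segment, then ramp up" shape. Establishing that the boundary pieces where the constraint is active, glued to the interior Euler–Lagrange arc, yield a global optimum — rather than doing a full variational second-order analysis — is where the real work lies; once the extremal profile is in hand, evaluating $\alpha^\star$ and $\beta^\star$ is a routine, if slightly involved, computation. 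The continuity-of-distributions assumption from \autoref{sec:prelims} lets me ignore tie-breaking, and the reduction to a single scalar $a$ via $a=F^n(x)$ makes the worst-case distribution essentially the one whose $F^n$ realizes the minimizing $a$ on a point mass, so no separate hard-instance construction is needed here.
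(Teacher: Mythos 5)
Your overall plan---formulate the sup-inf control problem via the fresh-samples lemma and the substitution $a=F^n(x)$, bound it from above by swapping the order of optimization and applying the Euler--Lagrange equation, and then match it from below by plugging the extremal arc back into the original sup-inf---is the same strategy as the paper's, and the minimax-as-a-by-product observation is also made there. However, there is a concrete error in the shape of the optimal MRS profile you propose, and it would make your lower-bound construction fail to reach $0.6534$.

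You describe $\gamma$ as a \emph{nondecreasing} function and say the extremal profile should ``take $f(i)$ to be $0$ (or $1$) for an initial fraction of the steps and then ramp up according to the extremal curve up to $\beta^\star n$.'' This is the opposite of what the Euler--Lagrange equation gives. In the paper the admissible branch of the ODE $-h''/(h')^3 = 1/\ln a$ is $h(y)=\sqrt{\kappa}-\sqrt{\kappa-\mu y}$, so $g(y)=1/h'(y)=(2/\mu)\sqrt{\kappa-\mu y}$ is \emph{non-increasing}: the optimal algorithm uses the \emph{largest} subsets (about $1.4434\,n$) at the very first steps and ramps \emph{down} to $0$ at $y=1$. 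A profile that starts at $0$ and ramps up is essentially the secretary-problem profile $g(y)\approx y$, which yields only $1/e$. So the lower bound you sketch does not certify $\alpha^\star\approx0.6534$, and the matching of the two bounds---the crux of the argument---would break.

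Relatedly, you lean on the constraint $f(i)\le k+i-1$ to ``force the characteristic skip-then-ramp-up shape'' and anticipate gluing a boundary arc to the Euler--Lagrange interior. That structure belongs to the \emph{constrained-samples} case (\autoref{thm:nsamples}), where $k=\beta n$ is too small and $g(y)\le\beta+y$ genuinely binds on an initial interval. In \autoref{thm:unconstrained} the sample budget is unconstrained; the optimal $k\approx1.4434\,n$ is determined \emph{a posteriori} from $g(0)$, and the constraint $g(y)\le\beta+y$ is slack everywhere, so there is no boundary arc and no skipped initial segment. Correcting the monotonicity of $g$ and dropping the boundary-arc step would bring your argument in line with the paper's proof.
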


\begin{theorem}
	\label{thm:nsamples}
	Consider a sequence of $n$ random variables $X_1, \dots, X_n$ drawn independently from an unknown distribution. Then there exists an MRS algorithm that uses $k=n$ samples and achieves an approximation ratio of $\alpha\geq 0.6489$ as $n\rightarrow\infty$.
\end{theorem}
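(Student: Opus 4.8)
\textbf{Proof proposal for \autoref{thm:nsamples}.}

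The plan is to exhibit a specific MRS algorithm with $k=n$, described by a size function $f\colon[n]\to\N$, and to lower-bound its approximation guarantee using the machinery developed for \autoref{thm:unconstrained}. By the fresh-samples lemma of \autoref{subsec:structural-lemma}, the value collected by an MRS algorithm with size function $f$ can be written, in the limit $n\to\infty$, as a linear combination (an integral against the empirical ``profile'' induced by $f$) of expected maxima $\E{\max$ of $\gamma n$ fresh draws$}$, while the prophet's value is the expected maximum of $n$ fresh draws. Writing each expected maximum of $\gamma n$ draws as $\int_0^\infty F^{\gamma n}(x)\,\mathrm dx$ and substituting $a=F^n(x)\in[0,1]$, the worst-case ratio becomes the infimum over $a\in[0,1]$ of a single rational function in $a$ whose coefficients are determined by $f$ (equivalently, by the limiting rescaled function $\phi(t)=\lim f(tn)/n$ on $[0,1]$, together with the constraint $\phi(t)\le 1+t$ coming from $f(i)\le k+i-1=n+i-1$).

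First I would set up the continuous relaxation: describe the limiting object as a nondecreasing function $\phi\colon[0,1]\to[0,\infty)$ with $\phi(t)\le 1+t$, derive the closed-form expression for the resulting ratio $R(\phi;a)$, and hence for the guarantee $\alpha(\phi)=\inf_{a\in[0,1]}R(\phi;a)$. Second, following the ``lower bound'' half of the strategy announced for \autoref{thm:unconstrained}, I would plug in a candidate $\phi$ of the form suggested by the Euler--Lagrange analysis — on some initial interval $[0,t^\star]$ the unconstrained optimizer, and on $[t^\star,1]$ the binding constraint $\phi(t)=1+t$ — leaving one or two free parameters (the switch point $t^\star$, and possibly a constant in the Euler--Lagrange solution). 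Third, I would carry out the remaining finite-dimensional optimization: compute $\alpha(\phi)$ as a function of these parameters by identifying the minimizing $a$, optimize over the parameters, and verify numerically that the optimal value is at least $0.6489$. Finally I would note that this continuous $\phi$ is realized, up to $o(1)$ error, by an actual integer-valued size function $f$ for each finite $n$ (rounding $f(i)=\lceil n\,\phi(i/n)\rceil$, clipped to $\min\{n+i-1,\cdot\}$), so that the guarantee holds in the limit, and that $0.6489>0.635>1-1/e$ gives the claimed comparison.

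The main obstacle I expect is the inner minimization over $a$: the rational function $R(\phi;a)$ need not be unimodal, so one must argue carefully (via its derivative, or via convexity after a change of variables) that the infimum is attained either at an endpoint $a\in\{0,1\}$ or at an interior critical point that can be located in closed form, and that this critical $a$ varies continuously with the parameters of $\phi$ so that the outer optimization is well-posed. A secondary technical point is justifying the passage to the limit — that the discrete sum defining the algorithm's value converges to the claimed integral uniformly enough that the $o(1)$ rounding errors do not erode the bound below $0.6489$ — but this is routine given the continuity estimates already needed for \autoref{thm:unconstrained}. Since the theorem only asserts a lower bound (not tightness), I would not need to also run the matching ``upper bound via swapping $\sup$ and $\inf$'' argument here; it suffices to display one good $\phi$ and evaluate it.
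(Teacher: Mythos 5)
Your high-level plan coincides with the paper's: pass to a continuous relaxation with a size profile $g$ on $[0,1]$, use the fresh-samples lemma plus the substitution $a=F^n(x)$ to reduce to $\inf_a$ of a one-variable expression, take a candidate profile suggested by the Euler--Lagrange equation, and optimize numerically over the remaining parameters. However, there is a concrete error in the structure of the candidate profile that, as written, makes the construction infeasible.

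You propose a \emph{nondecreasing} $\phi$ that equals the unconstrained optimizer on an initial interval $[0,t^\star]$ and then follows the binding constraint $\phi(t)=1+t$ on $[t^\star,1]$. Both halves of this are backwards. The unconstrained optimizer from \autoref{thm:unconstrained} is $g(y)=2\sqrt{\muopt-\muopt y}/\muopt$, which is strictly \emph{decreasing}, with $g(0)\approx 1.4434$ and $g(1)=0$; so the optimal $\phi$ is not nondecreasing, and the unconstrained solution violates the constraint $g(y)\le 1+y$ precisely near $y=0$ (where $1.4434>1$), not near $y=1$ (where $0<2$). Since the unconstrained solution is decreasing and the constraint boundary $1+y$ is increasing, the two cross exactly once, and the constraint binds on an \emph{initial} interval. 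Accordingly, the paper sets $g(y)=\beta+y$ on $[0,t]$, and on $[t,1]$ uses the Euler--Lagrange solution with the boundary condition $g(t)=\beta+t$ and normalization $g(1)=0$, yielding a profile that first increases along the constraint and then decreases. If you instead put the unconstrained E--L solution on $[0,t^\star]$, its value $\approx 1.4434$ at $y=0$ already violates $\phi(0)\le 1$, so the profile you would plug in is not realizable by any MRS algorithm with $k=n$ samples; the ``clip to $\min\{n+i-1,\cdot\}$'' you mention in the rounding step would silently change the algorithm and invalidate the analytic expression for the ratio. Swapping the two pieces (constraint on $[0,t]$, E--L on $[t,1]$, with continuity at $t$) fixes this and recovers the paper's control problem $\mathcal{Q}$, whose numerical optimization at $\beta=1$ gives the claimed $\alpha\geq 0.6489$.
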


Towards proving these theorems, we first show a structural lemma in \autoref{subsec:structural-lemma} that allows us to analyze MRS algorithms as if they would draw fresh samples at each step. In \autoref{subsec:three-step}, we then consider as a warm-up an MRS algorithm using a simple piecewise linear function that improves upon the the approximation guarantee $1-1/e$. The remainder of the section is concerned with computing the best MRS algorithm. In \autoref{subsec:unconstrained}, we consider the setting in which an unbounded number of samples is available, leading to the proof of \autoref{thm:unconstrained}. In \autoref{subsec:constrained}, we then turn to the setting in which only a limited number of samples is available, proving \autoref{thm:nsamples} and additional bounds for other values of $k$.

\subsection{Definition and Structural Lemma}
\label{subsec:structural-lemma}

Let $k\in\N$ and consider a function $f:[n]\rightarrow\mathbb{N}$ where $f(i)\leq k+i-1$ for all $i\in[n]$. Then the MRS algorithm based on $f$ proceeds as follows: given that it arrives at random variable $X_i$, it selects a uniformly random subset $\mathcal{R}_i=\{R^1_i,\dots,R^{f(i)}_i\}$ of size $f(i)$ from the set $\{S_1,\dots,S_k,X_1,\dots,X_{i-1}\}$ of~$k$ samples and the first $i-1$ random variables and sets $\max\mathcal{R}_i$ as threshold for $X_i$. We have the following lemma.

\begin{lemma}\label{lem:bayes}
	Consider some MRS algorithm based on $f:[n]\rightarrow\mathbb{N}$ and $i\in[n]$. Conditioned on the fact that the algorithm arrives at step $i$, the distribution of the set $\{S_1,\dots,S_k,X_1,\dots,X_{i-1}\}$ of values seen before step $i$ is identical to the distribution of a set of $k+i-1$ fresh samples from $F$.
\end{lemma}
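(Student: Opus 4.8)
The plan is to argue that the event ``the algorithm arrives at step $i$'' is, as an event, symmetric in the $k+i-1$ values $S_1,\dots,S_k,X_1,\dots,X_{i-1}$, and therefore conditioning on it does not bias the (already i.i.d., hence exchangeable) joint distribution of these values. Concretely, I would first observe that the prior joint distribution of $(S_1,\dots,S_k,X_1,\dots,X_{i-1})$ is that of $k+i-1$ i.i.d.\ draws from $F$, so it is exchangeable; the lemma asks us to show this exchangeability (in fact the full distribution) survives the conditioning. Since the claim is about the distribution of the \emph{set} $\{S_1,\dots,S_k,X_1,\dots,X_{i-1}\}$, it suffices to show that for any measurable event $A$ that is symmetric under permutations of its $k+i-1$ arguments, $\Pr{(S_1,\dots,S_k,X_1,\dots,X_{i-1})\in A \mid \tau \ge i}$ equals the corresponding unconditional probability for $k+i-1$ fresh i.i.d.\ samples — equivalently, that the event $\{\tau\ge i\}$ is (up to null sets) a symmetric event in these variables.

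The key step is therefore to write out $\{\tau \ge i\}$ explicitly and check its symmetry. The algorithm reaches step $i$ iff it did not stop at any of $X_1,\dots,X_{i-1}$; it stops at step $j<i$ iff $X_j > \max \mathcal{R}_j$, where $\mathcal{R}_j$ is a uniformly random size-$f(j)$ subset of $\{S_1,\dots,S_k,X_1,\dots,X_{j-1}\}$. Crucially, $X_j$ itself is \emph{not} among the candidates for $\mathcal{R}_j$ — the threshold at step $j$ uses only values strictly before $j$. I would integrate out the internal randomness of the subset choices, so that
\[
\Pr{\tau \ge i \mid S_1=s_1,\dots,X_{i-1}=x_{i-1}} \;=\; \prod_{j=1}^{i-1}\Pr{X_j \le \max \mathcal{R}_j \;\middle|\; \text{values } s_\cdot, x_\cdot},
\]
and then argue this product is a symmetric function of the multiset $\{s_1,\dots,s_k,x_1,\dots,x_{i-1}\}$. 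For a fixed multiset of $k+i-1$ real numbers, the probability of reaching step $i$ depends only on the \emph{relative order} of these numbers (ties excluded by the continuity assumption), because each comparison $X_j \le \max\mathcal{R}_j$ is an order-type event and the subset $\mathcal{R}_j$ is chosen uniformly. Hence if we relabel the $k+i-1$ values by a permutation $\pi$, the induced map on order types is a bijection, and summing/integrating over all order-consistent relabelings of a fixed multiset gives the same total probability — i.e.\ $\Pr{\tau\ge i \mid \cdot}$ is invariant under permuting its arguments. Combined with the exchangeability of the prior, Bayes' rule then gives that the conditional law of the set is that of $k+i-1$ fresh samples.

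The main obstacle, and the place I'd spend the most care, is making the ``depends only on the relative order / symmetric in the arguments'' claim rigorous, since the subsets $\mathcal{R}_1,\dots,\mathcal{R}_{i-1}$ are \emph{nested and correlated} (each $\mathcal{R}_j$ is drawn from a growing pool that includes the earlier $X$'s), so one cannot treat the $i-1$ comparison events as independent or analyze them in isolation. The clean way around this is an explicit finite sum: condition on the realized multiset of values, note that conditionally the distribution over which \emph{labeled} assignment occurs is uniform over all orderings consistent with the multiset, and observe that the algorithm's reach-step-$i$ indicator is a function of the labeled assignment alone (it is a deterministic function of the order type once we also average over the $\mathcal{R}_j$'s, whose distribution is itself label-symmetric). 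Then the quantity $\Pr{\tau\ge i,\ \{S,X\text{'s}\}\in A}$ factors as (probability of the multiset) $\times$ (average over labelings of the reach-$i$ indicator), and the second factor is manifestly permutation-invariant. A short appeal to the continuity assumption handles ties, and a remark that the argument is unaffected by whether $f(j)$ uses samples or earlier variables — it only ever looks at indices $<j$ — closes the proof.
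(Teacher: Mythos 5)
Your proposal is correct in essence but takes a genuinely different route from the paper. The paper proves the lemma by induction on~$i$: assuming the set seen before step $i^\star-1$ is distributed as $k+i^\star-2$ fresh samples, it observes that the single stopping decision at step $i^\star-1$ depends only on the ranks of $R^1_{i^\star-1},\dots,R^{f(i^\star-1)}_{i^\star-1},X_{i^\star-1}$, which form $f(i^\star-1)+1$ uniform draws without replacement from $[k+i^\star-1]$ and are therefore independent of the realized set $\mathcal{T}'$. This cleanly sidesteps the nested/correlated-subsets issue you flag by only ever reasoning about one step at a time. You instead handle all $i-1$ steps at once by conditioning on the multiset of values and arguing that the reach-step-$i$ indicator, once integrated over the random subsets, depends only on the order type of the labeled assignment; its average over a uniformly random labeling of a fixed multiset (with distinct elements) is therefore a constant independent of the multiset. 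Both arguments are sound; the induction buys a much shorter one-step rank computation, while your direct argument makes the underlying probabilistic fact --- independence of $\{\tau\ge i\}$ from the symmetric $\sigma$-algebra generated by the multiset --- more explicit.

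One misstatement in the middle of your proposal should be fixed, as it would mislead a reader attempting to execute the plan. You assert that the desired conclusion is ``equivalently, that the event $\{\tau\ge i\}$ is (up to null sets) a symmetric event,'' and later that ``$\Pr{\tau\ge i\mid\cdot}$ is invariant under permuting its arguments.'' Neither is true. Already for $k=1$, $f(1)=1$ one has $\{\tau\ge 2\}=\{X_1\le S_1\}$, which is not symmetric in $(S_1,X_1)$, and $\Pr{\tau\ge 2\mid S_1=s_1,X_1=x_1}=\mathbf{1}[x_1\le s_1]$ is not a symmetric function of its arguments. Moreover symmetry of the event would not even help: a symmetric event such as $\{\max\{S_1,\dots,X_{i-1}\}\le c\}$ certainly biases the conditional law of the set. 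What your closing paragraph correctly proves --- and what should replace the ``equivalently'' clause --- is the genuinely different statement that $\Pr{\tau\ge i\mid\{S_1,\dots,S_k,X_1,\dots,X_{i-1}\}=M}$ is constant in $M$ over multisets of distinct reals, i.e.\ $\{\tau\ge i\}$ is \emph{independent} of the multiset. With that correction the proof goes through.
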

\begin{proof}
	We show the claim by induction on~$i$, and start by observing that it clearly holds for $i=1$. Now suppose the claim holds for $i=1,\dots,i^\star-1$. Then, conditioned on the fact that the algorithm arrives at step $i^\star-1$, the set $\mathcal{T}=\{S_1,\dots,S_k,X_1,\dots,X_{i^\star-2}\}$ has the same distribution as the one of a set of $k+i^\star-2$ fresh samples, so the distribution of the set $\mathcal{T}'=\{S_1,\dots,S_k,X_1,\dots,X_{i^\star-1}\}$ is the same as the one of a set of $k+i^\star-1$ fresh samples. We will argue that the decision of the algorithm to stop at $X_{i^\star-1}$ or to continue does not depend on the realization of $\mathcal{T}'$, which implies the claim.
	
	Since $F$ is continuous, we may assume that all the values $S_1,\dots,S_k,X_1,\dots,X_{i^\star-1}$ are distinct, so that each of these values can be identified with a unique rank in~$[k+i^\star-1]$. By definition, the decision of an MRS algorithm to stop or continue only depends on the ranks of the values $R^1_{i^\star-1},\dots,R^{f(i^\star-1)}_{i^\star-1},X_{i^\star-1}$.
	 Since the distribution of $\mathcal{T}$, from which $R^1_{i^\star-1},\dots,R^{f(i^\star-1)}_{i^\star-1}$ are drawn, is that of $k+i^\star-1$ fresh samples, and since $X_{i^\star-1}$ \emph{is} a fresh sample, those ranks are $f(i^\star-1)+1$ uniform draws without replacement from~$[k+i^\star-1]$ and thus independent of the realization of~$\mathcal{T}'$.
\end{proof}

Note that this lemma immediately implies that a set of size $f(i)$ selected uniformly at random from $\{S_1,\dots,S_k,X_1,\dots,X_{i-1}\}$ is also distributed like~$f(i)$ fresh samples from~$F$. This insight is what will enable us to give a mathematical expression for the value obtained by an MRS algorithm.

\subsection{Warm-Up: Three-Step Functions}
\label{subsec:three-step}

As a warm-up, let us convince ourselves that an MRS algorithm with access to $k=n$ samples can improve on the bound of $1-1/e$. We will assume for simplicity that~$n$ is a multiple of~$3$, but note that the conclusion holds in general as $n\to\infty$. Let $f:[n]\to\N$ with $f(i)=n-1$ for $i=1,\dots,\frac{n}{3}$, $f(i)=4n/3-1$ for $i=\frac{n}{3}+1,\dots,\frac{2n}{3}$, and $f(i)=2n/3-1$ for $i=\frac{2n}{3}+1,\dots,n$. Observe that the MRS algorithm for~$f$ uses only~$n$ samples. Consider $X_1,\dots,X_n$ drawn independently from an arbitrary distribution~$F$, and let~$\tau$ be the stopping time of the algorithm for these random variables. 
Then, 
\begin{align*}
	\mathbb{E}\left[\max\{X_1,\dots,X_n\}\right] = \int_{0}^{\infty} 1- F(x)^n \; \mathrm{d}x
\end{align*}
and, by \autoref{lem:bayes}, 
\begin{align*}
	\mathbb{E}[X_\tau] = & \underbrace{\sum_{i=1}^{n/3} \left(1-\frac{1}{n}\right)^{i-1}\frac{1}{n}}_{=: T_1} \int_{0}^{\infty} 1-F(x)^n \;\mathrm{d}x + {} \\
	& (1-T_1) \underbrace{\sum_{i=1}^{n/3} \left(1-\frac{3}{4n}\right)^{i-1}\frac{3}{4n}}_{=:T_2} \int_{0}^{\infty} 1-F(x)^{4n/3} \; \mathrm{d}x + {} \\
	&(1-T_1)(1-T_2) \underbrace{\sum_{i=1}^{n/3} \left(1-\frac{2}{3n}\right)^{i-1}\frac{2}{3n}}_{=:T_3} \int_{0}^{\infty} 1-F(x)^{2n/3} \; \mathrm{d}x .
\end{align*}
For $n\rightarrow\infty$, $T_1\rightarrow 1-1/e^{1/3}$, $T_2\rightarrow 1-1/e^{1/4}$, and $T_3\rightarrow 1-1/e^{1/2}$, and thus
\begin{align*}
\mathbb{E}[X_\tau] = \left(1-\frac{1}{e^{1/3}}\right) \int_{0}^{\infty} 1-F(x)^n \; \mathrm{d}x + {} & \frac{1}{e^{1/3}} \left(1-\frac{1}{e^{1/4}}\right) \int_{0}^{\infty} 1-F(x)^{4n/3} \; \mathrm{d}x + {} \\
& \frac{1}{e^{1/3}} \frac{1}{e^{1/4}} \left(1-\frac{1}{e^{1/2}}\right) \int_{0}^{\infty} 1-F(x)^{2n/3} \; \mathrm{d}x.
\end{align*}
To show that $\mathbb{E}[X_\tau]\geq\alpha\mathbb{E}[\max\{X_1,\dots,X_n\}]$, it suffices to show that for all $a\in[0,1]$,
\begin{align*}
\left(1-\frac{1}{e^{1/3}}\right) (1-a) + \frac{1}{e^{1/3}} \left(1-\frac{1}{e^{1/4}}\right) \left(1-a^{4/3}\right) + \frac{1}{e^{1/3}} \frac{1}{e^{1/4}} \left(1-\frac{1}{e^{1/2}}\right) \left(1-a^{2/3}\right) \geq \alpha (1-a).
\end{align*}
By dividing both sides of the inequality by $1-a$ and minimizing the resulting left-hand side over $a\in[0,1)$, we see that this is the case for $\alpha\approx 0.6370>1-1/e$.

As we will see later in \autoref{subsec:control-problem}, requiring the approximation ratio to hold pointwise for each value of the variable of integration rather than just for the sum of integrals is without loss. We will thus be able to obtain tight bounds using this technique.

\subsection{Proof of \autoref{thm:unconstrained}}
\label{subsec:unconstrained}

To prove \autoref{thm:unconstrained} we would like to find the best possible choice of $f:[n]\to\N$, without any restriction on the number of samples it is allowed to use. For a particular choice of~$f$, we will again determine the approximation ratio~$\alpha$ of the corresponding MRS algorithm by minimizing the ratio between $\mathbb{E}[X_\tau]$ and $\mathbb{E}[\max\{X_1,\dots,X_n\}]$. We will again express both in terms of integrals, and turn the minimization over distributions~$F$ into a minimization over~$a\in[0,1)$ by requiring the approximation ratio to hold pointwise for each value of the variable of integration. Maximization over~$f$ and minimization over~$a$ will yield a max-min control problem, which we will solve optimally.

\subsubsection{Formulation as a Control Problem}  \label{subsec:control-problem}
 
Fix $n\in\N$, and consider an MRS algorithm given by the function $f:[n]\rightarrow\mathbb{N}$. We can construct a continuous function $g:[0,1]\rightarrow\mathbb{R}_+$ from $f$ by setting $g(i/n)=f(i)/n$ for all $i\in[n]$ and linearly interpolating between these values. Similarly, if we were only given a continuous function $g$ in the first place, we could obtain $f$ from $g$ by setting $f(i)=\lceil g(i/n)\cdot n\rceil$ for all $i\in[n]$. In what follows we will compute the optimal such function $g$ and thereby the optimal MRS algorithm for all values of $n$.
To do so, consider a sequence $X_1,\dots,X_n$ of random variables drawn i.i.d.\@ from a distribution~$F$, and denote the stopping time of the MRS algorithm on this sequence by~$\tau$. Then
\begin{align*}
	\E{X_\tau} &= \sum_{i=1}^n\Pr{\text{$A$ arrives at $X_i$}}\cdot\Pr{\text{$A$ accepts $X_i$}\mid\text{$A$ arrives at $X_i$}}\cdot\E{X_i\mid\text{$A$ accepts $X_i$}} \\
	&= \sum_{i=1}^n\prod_{j=1}^{i-1}\left(1-\frac{1}{\lceil g(\frac{j}{n})\cdot n\rceil+1}\right)\cdot \frac{1}{\lceil g(\frac{i}{n})\cdot n\rceil+1}\cdot \int_0^\infty \! \bigl(1-F^{\lceil g(\frac{i}{n})\cdot n\rceil+1}(x)\bigr)\;\mathrm{d}x\\
	&= \sum_{i=1}^n\exp\left(\sum_{j=1}^{i-1}\ln\left(1-\frac{1}{\lceil g(\frac{j}{n})\cdot n\rceil+1}\right)\right)\cdot \frac{1}{\lceil g(\frac{i}{n})\cdot n\rceil+1}\cdot \int_0^\infty \! \bigl(1-F^{\lceil g(\frac{i}{n})\cdot n\rceil+1}(x)\bigr)\;\mathrm{d}x \\
	&=\sum_{i=1}^n\exp\biggl(-\sum_{j=1}^{i-1}\biggl(\frac{1}{\lceil g(\frac{j}{n})\cdot n\rceil+1}-O\Bigl(\frac{1}{n^2}\Bigr)\biggr)\biggr)\cdot \frac{1}{\lceil g(\frac{i}{n})\cdot n\rceil+1}\cdot \int_0^\infty \! \bigl(1-F^{\lceil g(\frac{i}{n})\cdot n\rceil+1}(x)\bigr)\;\mathrm{d}x\\
	&= e^{-O(\frac1n)} \sum_{i=1}^n\exp\Biggl(-\sum_{j=1}^{i-1}\biggl(\frac{1}{\lceil g(\frac{j}{n})\cdot n\rceil+1}\biggr)\Biggr)\cdot \frac{1}{\lceil g(\frac{i}{n})\cdot n\rceil+1}\cdot \int_0^\infty \! \bigl(1-F^{\lceil g(\frac{i}{n})\cdot n\rceil+1}(x)\bigr)\;\mathrm{d}x ,
\end{align*}
where for the fourth equality we have used that the Laurent series of $\ln(1-\frac{1}{x})-(-\frac{1}{x})$ at $x=\infty$ is $\sum_{i=2}^\infty -\frac{x^{-i}}{i}=-O(\frac{1}{x^2})$ (and assumed w.l.o.g.\ that $g$ is bounded away from $0$).   
Thus, for $n\rightarrow\infty$,
\begin{align*}
	\E{X_\tau}=&\int_0^1\exp\left(-\int_0^y\frac{1}{g(z)}\;\mathrm{d}z\right)\cdot\frac{1}{g(y)}\cdot\int_0^\infty\left(1-F^{g(y)\cdot n}(x)\right)\mathrm{d}x\;\mathrm{d}y\\
	=&\int_0^\infty\int_0^1\exp\left(-\int_0^y\frac{1}{g(z)}\;\mathrm{d}z\right)\cdot\frac{1}{g(y)}\cdot\left(1-F^{g(y)\cdot n}(x)\right)\mathrm{d}y\;\mathrm{d}x,
\end{align*}
where we exchange the order of integration in the second step using Fubini's theorem, which may be applied, because the integrand is clearly positive.

Our goal is to find the maximum value $\alpha\in\mathbb{R}_+$ for which $\E{X_\tau}\geq\alpha\cdot\E{\max\{X_1,\dots,X_n\}}$ or, equivalently,
\begin{equation}\label{eq:MRS-goal}
	\int_0^\infty\int_0^1\exp\left(-\int_0^y\frac{1}{g(z)}\;\mathrm{d}z\right)\cdot\frac{1}{g(y)}\cdot\left(1-F^{g(y)\cdot n}(x)\right)\mathrm{d}y\;\mathrm{d}x\geq \int_0^\infty \alpha\cdot\left(1-F^n(x)\right)\;\mathrm{d}x .
\end{equation}
A sufficient condition for the latter is that for all $a\in[0,1]$,
\begin{equation}\label{eq:MRS-goal-pointwise}
	\int_0^1\exp\left(-\int_0^y\frac{1}{g(z)}\;\mathrm{d}z\right)\cdot\frac{1}{g(y)}\cdot\left(1-a^{g(y)}\right)\mathrm{d}y\geq\alpha\cdot(1-a),
\end{equation}
and this condition is in fact also necessary. Indeed, if~\eqref{eq:MRS-goal-pointwise} is violated for some $\alpha$ and $a$, then~\eqref{eq:MRS-goal} is violated for $\alpha$ and the cumulative distribution function~$F$ of a random variable that has value~$0$ with probability~$a$ and value~$1$ with probability $(1-a)$. This choice of~$F$ makes the integrand on the right-hand side of~\eqref{eq:MRS-goal} greater than the integrand on the left-hand side for all~$x$ for which the integrands are non-zero, \ie for all $x<1$, thus violating~\eqref{eq:MRS-goal}.

To determine the approximation ratio of the MRS algorithm~$A$ we can thus focus on finding the maximum value~$\alpha$ such that~\eqref{eq:MRS-goal-pointwise} is satisfied for all $a$. Since~\eqref{eq:MRS-goal-pointwise} is trivially satisfied for $a=1$, we are interested in the optimum value of the control problem 
\begin{align} \label{eq:controlp}
	\mathcal{P} &= \sup_{g:[0,1]\rightarrow \R_+}\inf_{a\in[0,1)}\left\{\int_0^1\exp\left(-\int_0^y\frac{1}{g(z)}\;\mathrm{d}z\right)\cdot\frac{1-a^{g(y)}}{g(y)\cdot(1-a)}\; \mathrm{d}y\right\} \notag\\
	&= \sup_{\substack{h:[0,1]\rightarrow\mathbb{R}_+,\\h(0)=0}}\inf_{a\in[0,1)}\left\{\int_0^1e^{-h(y)}\cdot h'(y)\cdot\frac{1-a^{\frac{1}{h'(y)}}}{1-a}\;\mathrm{d}y\right\},
\end{align}
where the second equality can be seen to hold by choosing $h:[0,1]\rightarrow\mathbb{R}_+$ such that $h(y)=\int_0^y\frac{1}{g(z)}\;\mathrm{d}z$ for all $y\in[0,1]$, which implies that $g(y)=\frac{1}{h'(y)}$.

\subsubsection{Solving the Control Problem}
\label{subsec:upperbound}
\label{subsec:lowerbound}

We solve the control problem~$\mathcal{P}$ by giving matching upper and lower bounds. For the upper bound we swap supremum and infimum and apply the Euler--Lagrange equation to the supremum, which is now the inner problem, to write any optimal function~$h$ in terms of~$a$ and a single parameter~$\mu$. We then guess the value of~$a$ at which the infimum is attained and solve the remaining supremum over~$\mu$. For the lower bound we replace~$h$ by its parametric form, guess the values of the parameters at which the supremum is attained, and solve the remaining infimum over~$a$. In both cases we obtain the same value of approximately $0.6534$. Inspection of the optimal function~$h$ reveals that it is non-increasing, which implies that $g(0)\cdot n=\frac{1}{h'(0)}n\approx 1.4434\cdot n$ samples are sufficient to implement the optimal MRS algorithm.

It is worth pointing out that the change of the order of supremum and infimum and the substitution of a particular form of~$h$ are potentially lossy but turn out to be without loss. This means that a minimax theorem holds for $\mathcal{P}$, and that a universal worst-case distribution~$F$ exists that applies to all MRS algorithms.

\paragraph{Upper Bound.}

By the max-min inequality,
\begin{align}  \label{eq:infsup}
	\mathcal{P} \leq \inf_{a\in[0,1)}\sup_{\substack{h:[0,1]\rightarrow\mathbb{R}_+,\\h(0)=0}}\left\{\int_0^1e^{-h(y)}\cdot h'(y)\cdot\frac{1-a^{\frac{1}{h'(y)}}}{1-a}\;\mathrm{d}y\right\}.
\end{align}
Now the inner problem can be written as
\begin{align*}
\sup_{\substack{h:[0,1]\rightarrow\mathbb{R}_+,\\h(0)=0}} \int_0^1 L(y,h(y),h'(y))\;\mathrm{d}y ,
\end{align*}
where
\[
	L(y,h(y),h'(y))=e^{-h(y)}\cdot h'(y)\cdot\frac{1-a^{\frac{1}{h'(y)}}}{1-a} .
\]
A necessary condition for optimality of $h$ is the Euler--Lagrange equation
\begin{align}
	\frac{\partial}{\partial h} L(y,h(y),h'(y)) - \frac{\mathrm{d}}{\mathrm{d}y}\;\frac{\partial}{\partial h'} L(y,h(y),h'(y)) = 0 , \label{eq:euler-lagrange}
\end{align}
where
\begin{align}
	\frac{\partial}{\partial h}\;L(y,h(y),h'(y))=-e^{-h(y)}\cdot h'(y)\cdot\frac{1-a^{\frac{1}{h'(y)}}}{1-a} \label{eq:part1}
\end{align}
and
\begin{align}
	\frac{\mathrm{d}}{\mathrm{d}y}\;\frac{\partial}{\partial h'}\;L(y,h(y),h'(y))=&\frac{\mathrm{d}}{\mathrm{d}y}\left(e^{-h(y)}\cdot\frac{1-a^{\frac{1}{h'(y)}}}{1-a}+e^{-h(y)}\cdot h'(y)\cdot\frac{\ln a\cdot a^{\frac{1}{h'(y)}}}{(1-a)\cdot (h'(y))^2}\right) \notag\\
	=&{}-e^{-h(y)}\cdot h'(y)\cdot\frac{1-a^{\frac{1}{h'(y)}}}{1-a}+e^{-h(y)}\cdot\frac{\ln a\cdot a^{\frac{1}{h'(y)}}\cdot h''(y)}{(1-a)\cdot(h'(y))^2} \notag\\
	&{}+e^{-h(y)}\cdot(h''(y)-(h'(y))^2)\cdot\frac{\ln a\cdot a^{\frac{1}{h'(y)}}}{(1-a)\cdot(h'(y))^2} \notag\\ 
	&{}+e^{-h(y)}\cdot h'(y)\cdot\left(-\frac{(\ln(a))^2 a^\frac{1}{h'(y)} h''(y)}{(1-a)(h'(y))^4}-\frac{2 \ln(a) a^\frac{1}{h'(y)} h''(y)}{(1-a)(h'(y))^3}\right).	
	\label{eq:part2}
\end{align}
Substitution of~\eqref{eq:part1} and~\eqref{eq:part2} into~\eqref{eq:euler-lagrange} and simplification yields that
\begin{align*}
	-e^{-h(y)} \frac{\ln(a) a^\frac{1}{h'(y)}}{1-a} - e^{-h(y)} \frac{(\ln(a))^2 a^\frac{1}{h'(y)} h''(y)}{(1-a) (h'(y))^3} &= 0.
\end{align*}
Since $e^{x}>0$ for all $x$ and $1-a>0$ for $a\in[0,1)$, an equivalent requirement is that 
\[
	-\frac{h''(y)}{(h'(y))^3} = \frac{1}{\ln(a)}
\]
with the boundary condition $h(0)=0$.

Solving this second-order nonlinear ordinary differential equation yields two classes of parametric solutions
\begin{align*}
h_1(y) &= \sqrt{\kappa-\mu y} - \sqrt{\kappa}, \quad h_2'(y) = - \frac{\mu}{2\sqrt{\kappa-\mu y}}, \quad \mu = - 2 \ln(a) \geq 0, \quad \kappa \geq \mu, \quad\text{and}\\
h_2(y) &= \sqrt{\kappa}  - \sqrt{\kappa-\mu y}, \quad h_2'(y) = \textcolor{white}{-}\frac{\mu}{2\sqrt{\kappa-\mu y}}, \quad \mu = - 2 \ln(a) \geq 0, \quad \kappa \geq \mu,
\end{align*}
where only the latter guarantees that $g(y) = 1/h'(y) \geq 0$.

\newcommand{\muopt}{\bar\mu}
\newcommand{\aopt}{\bar{a}}
Let $\muopt\approx 1.9202$ be the unique value such that
\[
	1 - \frac{e^{\sqrt{\muopt}}}{\sqrt{\muopt}} +\frac{e^{\frac{\muopt}{2}}}{\sqrt{\muopt}} = 0,
\]
and $\aopt=e^{-\frac{\muopt}{2}}\approx 0.3829$.

By setting $h=h_2$ and $a=\aopt$ in \eqref{eq:infsup}, and showing that the remaining supremum over~$\kappa$ is attained for~$\kappa=\muopt$, we conclude that
\[
	\mathcal{P} \leq \frac{e^{-\sqrt{\muopt}}(1-e^{\sqrt{\muopt}}+\sqrt{\muopt})}{e^{-\frac{\muopt}{2}}-1} \approx 0.6534 .
\]

\paragraph{Lower Bound.}

By restricting the supremum in~\eqref{eq:controlp} to functions of the form $h(y)=\sqrt{\mu}-\sqrt{\mu\cdot(1-y)}$ for some $\mu\in\mathbb{R}_+$, which satisfy the boundary condition that $h(0)=0$, we see that
\begin{align*}
	\mathcal{P} \geq &\sup_{\mu\in\mathbb{R}_+}\inf_{a\in[0,1)}\left\{\frac{e^{-\sqrt{\mu}}}{1-a}\cdot\int_0^1\frac{e^{\sqrt{\mu \cdot(1-y)}}\cdot\mu\cdot\left(1-a^{\frac2\mu\sqrt{\mu \cdot(1-y)}}\right)}{2\cdot\sqrt{\mu \cdot(1-y)}}\;\mathrm{d}y\right\}\\
	=&\sup_{\mu\in\mathbb{R}_+}\inf_{a\in[0,1)}\left\{\frac{e^{-\sqrt{\mu}}}{1-a}\cdot\left[\frac{e^{\sqrt{\mu\cdot(1-y)}\cdot(1+\frac{2\ln a}{\mu})}}{1+\frac{2\ln a}{\mu}}-e^{\sqrt{\mu\cdot(1-y)}}\right]_0^1\right\}\\
	=&\sup_{\mu\in\mathbb{R}_+}\inf_{a\in[0,1)}\left\{\frac{e^{-\sqrt{\mu}}}{1-a}\cdot\left(\frac{1}{1+\frac{2\ln a}{\mu}}-1-\frac{e^{\sqrt{\mu}\cdot(1+\frac{2\ln a}{\mu})}}{1+\frac{2\ln a}{\mu}}+e^{\sqrt{\mu}}\right)\right\}\\
	=&\sup_{\mu\in\mathbb{R}_+}\inf_{\substack{b\in\mathbb{R}_+;\\b\notin\{0,1\}}}\left\{\frac{e^{-\sqrt{\mu}}}{1-e^{-\frac{\mu b}{2}}}\cdot\left(\frac{1}{1-b}-1-\frac{e^{\sqrt{\mu}\cdot(1-b)}}{1-b}+e^{\sqrt{\mu}}\right)\right\},	
\end{align*}
where the last equality can be seen to hold by setting $b=-\frac{2\ln a}{\mu}$ and $a=e^{-\frac{\mu b}{2}}$. 

By setting $\mu=\muopt$ in the last expression and showing that the remaining infimum over~$b$ is attained for $b\to 1$, we conclude that
\[
	\mathcal{P} \geq \frac{e^{-\sqrt{\muopt}}(1-e^{\sqrt{\muopt}}+\sqrt{\muopt})}{e^{-\frac{\muopt}{2}}-1} ,
\]
which equals the upper bound.

\medskip

The resulting optimal choice of~$g$, given by~$g(y)=1/h'(y)=2\sqrt{\muopt-\muopt y}/\muopt$, is non-increasing in~$y$ and thus has a maximum value of $g(0)=2/\sqrt{\muopt}\approx 1.4434$. This means that the optimal MRS algorithm can be implemented with slightly fewer than $3n/2$ samples.

\subsection{Proof of \autoref{thm:nsamples}}
\label{subsec:constrained}

We finally consider MRS algorithms that have access to $\beta n$ samples for some $\beta<2/\sqrt{\muopt}$. This imposes the constraint that $g(y)\leq \beta+y$ for all $y\in[0,1]$, and since the optimal MRS algorithm for the unconstrained case uses more than $\beta n$ samples the constraint must bind for some non-empty subset of $[0,1]$. To obtain a lower bound on the performance of the best MRS algorithm we may in fact assume that the constraint binds on $[0,t]$ for some $t\in[0,1]$, such that $g(y)=\beta+y$ and $h(y)=\int_{0}^{y} 1/g(z) \; dz = \ln(\beta+y)-\ln(\beta)$ for all $y\in[0,t]$. Proceeding as in \autoref{subsec:upperbound}, we can write the performance of the best MRS algorithm from the restricted class as a control problem
\begin{align*}
\mathcal{Q} &= \sup_{\substack{t\in[0,1],\\h:[t,1]\rightarrow\mathbb{R}_+,\\h(t) = \ln(\beta+t)}} \inf_{a \in [0,1)} \left\{\frac{\int_{0}^{t} \frac{\beta}{(\beta+y)^2}\cdot\left(1-a^{\beta+y}\right) \;\mathrm{d}y + \int_{t}^{1} e^{-h(y)} \cdot h'(y) \cdot \left(1-a^\frac{1}{h'(y)}\right)\; \mathrm{d}y}{1-a}\right\} \\
	&= \sup_{\substack{t\in[0,1],\\h:[t,1]\rightarrow\mathbb{R}_+,\\h(t) = \ln(\beta+t)}} \inf_{a \in [0,1)} \left\{\int_{0}^{t} \frac{\beta(1-a^{\beta+y})}{(\beta+y)^2(1-a)} \;\mathrm{d}y + \int_{t}^{1} e^{-h(y)} \cdot h'(y) \cdot \frac{1-a^\frac{1}{h'(y)}}{1-a}\;\mathrm{d}y\right\}.
\end{align*}

Note that the objective is now a sum of two integrals. The first integral is constant with respect to~$h$. The second integral has the same integrand as the integral in problem $\mathcal{P}$ from \autoref{subsec:upperbound}, but it begins at~$t$ rather than~$0$ and involves a function~$h$ that is subject to a different boundary condition, $h(t)=\ln(\beta+t)-\ln(\beta)$ instead of $h(0)=0$. As our application of the Euler--Lagrange equation in \autoref{subsec:upperbound} relied neither on the limits of integration nor on the boundary condition we obtain the same differential equation as before, $-h''(y)/(h'(y))^3 = 1/\ln(a)$, but subject to the new boundary condition that $h(t)=\ln(\beta+t)-\ln(\beta)$. 

Since $g(y)=1/h'(y)$ for $y\in(0,1)$ and thus
\[
	g(y)\cdot g'(y) = \frac{((g(y))^2)'}{2} = \frac{1}{2} \left(\frac{1}{(h'(y))^2}\right)' = - \frac{h''(y)}{(h'(y))^3}
\]
for $y\geq t$, we can alternatively solve the first-order non-linear differential equation $g(t)\cdot g'(t)=1/\ln(a)$. From the requirement that $g(y)\geq 0$ for all~$y$ we conclude that 
\begin{align*}
	g(y) = \sqrt{2}\cdot\sqrt{\frac{1}{\ln(a)}\cdot y + \kappa}
\end{align*}
for some $\kappa\geq -1/\ln(a)$, and by choosing $\kappa$ to satisfy the boundary condition that $g(t)=\beta+t$ we obtain
\begin{align*}
	g(y) = \sqrt{2} \cdot \sqrt{\frac{1}{\ln(a)}\cdot y + \frac{1}{2}\left(-\frac{2}{\ln(a)}\cdot t + t^2 + 2\beta t + \beta^2\right)} .
\end{align*}

In analogy to \autoref{subsec:lowerbound} we may derive a lower bound on the value of~$\mathcal{Q}$ by considering the parametric class of functions
\begin{align*}
	g(y) &= \sqrt{2} \cdot \sqrt{cy+\frac{1}{2}\left(-2ct+t^2+2\beta t+\beta^2\right)},
\end{align*}
where $c\leq 0$, and we may in fact choose $c=(\beta+t)^2/(2(t-1))$ such that $g(1)=0$ as before. Then 
\[
	g(y) = (\beta+t)\sqrt{\frac{y-1}{t-1}}
\]
and
\[
	h(y) = \ln(\beta +t) - \ln(\beta ) + \frac{2(y-1)}{\sqrt{\frac{(y-1) (\beta +t)^2}{t-1}}} - \frac{2 (t-1)}{\sqrt{(\beta +t)^2}}	
\]

\begin{table}
\begin{center}
\begin{tabular}{*{4}{c}}
\toprule
$\beta$ & $\alpha\geq{}$ & $t\approx{}$ & $a\approx{}$ \\ 
\midrule
1.4 & 0.653368 & 0.025503 & 0.383230 \\
1.3 & 0.653280 & 0.087540 & 0.387562 \\
1.2 & 0.652853 & 0.155180 & 0.398509 \\
1.1 & 0.651654 & 0.230674 & 0.419390 \\
1.0 & 0.648957 & 0.317590 & 0.455588 \\
0.9 & 0.643563 & 0.421611 & 0.515673 \\
0.8 & 0.633580 & 0.551596 & 0.612066 \\
\vdots &\vdots &\vdots &\vdots\\
\bottomrule
\end{tabular}
\hspace{.02\textwidth}
\begin{tabular}{*{4}{c}}
\toprule
$\beta$ & $\alpha\geq{}$ & $t\approx{}$ & $a\approx{}$ \\ 
\midrule
\vdots &\vdots &\vdots &\vdots\\
0.7 & 0.616281 & 0.720814 & 0.758359 \\
0.6 & 0.588379 & 0.949784 & 0.959047 \\
0.5 & 0.549306 & 1.000000 & 1.000000 \\
0.4 & 0.501105 & 1.000000 & 1.000000 \\
0.3 & 0.439901 & 1.000000 & 1.000000 \\
0.2 & 0.358351 & 1.000000 & 1.000000 \\
0.1 & 0.239789 & 1.000000 & 1.000000 \\
\bottomrule
\end{tabular}
\caption{Lower bounds on the performance of the optimal MRS algorithm with access to $\beta n$ samples for varying values of $\beta$. The bounds arise as the minimum over~$a$ of a function in~$t$, and the values of~$t$ and~$a$ corresponding to each bound are given alongside it.}
\label{tab:samples-bound}
\end{center}
\end{table}

We can now substitute~$h$ into $\mathcal{Q}$ and solve the integrals to obtain a simpler control problem with a supremum over~$t$ and an infimum over~$a$. While we cannot solve this problem exactly, we may conjecture in analogy to problem~$\mathcal{P}$ that for the optimal choice of~$t$ the infimum over~$a$ is attained for $a\to e^{2(t-1)/(\beta+1)^2}$. We can then determine the value of~$t$ for which the conjectured infimum is smallest, which turns out to be unique, and obtain a lower bound on~$\mathcal{Q}$ and thus on the approximation guarantee of the best MRS algorithm by substituting this value of~$t$ into the simplified control problem and solving the remaining minimization problem over~$a$.

\autoref{tab:samples-bound} shows a selection of bounds obtained in this way for different values of~$\beta$, along with the choice of~$t$ that leads to each bound and the corresponding optimal choice of~$a$. The lower bounds are also shown graphically in \autoref{fig:samples-bound}. For $\beta=1$ in particular we obtain a lower bound of $\alpha\geq 0.6489$.

We conjecture that the bounds shown in \autoref{tab:samples-bound} are in fact tight up to errors in the numerical approximation. More specifically, we believe that it is without loss of generality to assume that $g(y)=\beta+y$ for all $y<t$ and some $t\in[0,1]$, $g(t)=\beta+t$, and $g(1)=0$.

\subsection{A Tight Bound for At Most $n/(e-1)$ Samples}

Comparison to the upper bound of \citet{CorreaDFS19} reveals that the bound in the previous section is in fact tight when $\beta=1/(e-1)$. The optimal MRS algorithm in this case stops at the first value that exceeds all samples, a behavior that should remind us of the optimal algorithm for the case $\beta=0$, which is identical to the optimal algorithm for the secretary problem. Indeed, $\beta=1/(e-1)$ implies that $\beta/(\beta+1)=1/e$, so the only difference to the case $\beta=0$ is that the algorithm does not need to skip any values because it is given the correct number of samples for free.

For the intermediate case where $0\leq\beta\leq 1/(e-1)$, we may now conjecture that we should skip values until the combined number of samples and skipped values amounts to a (rounded) $1/e$ fraction of the combined number of samples and values, and stop at the first value thereafter that exceeds all samples and values seen so far. It turns out that this algorithm matches the upper bound of \citet{CorreaDFS19}, and is thus optimal, when $0\leq\beta\leq 1/(e-1)$ and $n\rightarrow\infty$. Note that the limit is needed in our analysis to account for possible losses incurred when rounding.
\begin{theorem}
Let $\beta\leq 1/(e-1)$, $n\in\N$, $k=\lfloor\beta n\rfloor$, and $m=\lfloor(\frac{1+\beta}{e}-\beta)n\rfloor$. Consider a sequence of i.i.d.\@ random variables $S_1,\dots,S_k,X_1,\dots,X_n$. Let $\tau$ be the stopping time of the algorithm that stops at $X_j$ if (i)~it has not stopped previously, (ii)~$j>m$, and (iii)~$X_j>\max\{S_1,\dots,S_k,X_1,\dots,X_{j-1}\}$. Then, as $n\to\infty$, $\E{X_\tau}\geq \frac{1+\beta}{e}\cdot \E{\max\{X_1,\dots,X_n\}}$.
\end{theorem}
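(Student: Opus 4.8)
The plan is to compute $\E{X_\tau}$ exactly using the fresh-samples machinery and show that, conditioned on stopping, the expected accepted value equals the prophet's value, so that the approximation ratio is simply the probability of stopping. First I would invoke \autoref{lem:bayes} (with the obvious reindexing so that the first $m$ values are treated as additional samples): conditioned on the algorithm arriving at $X_j$ for $j>m$, the set $\{S_1,\dots,S_k,X_1,\dots,X_{j-1}\}$ is distributed as $k+j-1$ fresh i.i.d.\@ samples from $F$. Since the algorithm accepts $X_j$ precisely when $X_j$ exceeds the maximum of these $k+j-1$ values, the event ``arrive at $X_j$ and accept'' has, conditional on arrival, probability $\frac{1}{k+j}$, and moreover conditioned on acceptance the value $X_j$ is the maximum of $k+j$ fresh samples. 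Hence $\E{X_j\mid\text{accept }X_j}=\int_0^\infty\bigl(1-F^{k+j}(x)\bigr)\,\mathrm{d}x\geq\int_0^\infty\bigl(1-F^{n}(x)\bigr)\,\mathrm{d}x=\E{\max\{X_1,\dots,X_n\}}$ whenever $k+j\geq n$; I would check that this holds for all relevant $j$, or more carefully handle the regime $k+j<n$ (see the obstacle below).

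Next I would compute the probability of stopping. Writing $p_j$ for the probability of arriving at step $j$ (i.e.\ not having stopped before), we have $p_{m+1}=1$ and $p_{j+1}=p_j\bigl(1-\frac{1}{k+j}\bigr)$ for $j\geq m+1$, so $p_j=\prod_{\ell=m+1}^{j-1}\frac{k+\ell-1}{k+\ell}=\frac{k+m}{k+j-1}$ by telescoping. The overall stopping probability is therefore
\[
\sum_{j=m+1}^{n} p_j\cdot\frac{1}{k+j}=\sum_{j=m+1}^{n}\frac{k+m}{(k+j-1)(k+j)}=(k+m)\left(\frac{1}{k+m}-\frac{1}{k+n}\right)=1-\frac{k+m}{k+n}.
\]
Plugging in $k=\lfloor\beta n\rfloor$ and $m=\lfloor(\frac{1+\beta}{e}-\beta)n\rfloor$ gives $k+m=\lfloor\beta n\rfloor+\lfloor(\frac{1+\beta}{e}-\beta)n\rfloor$ and $k+n=\lfloor\beta n\rfloor+n$, so as $n\to\infty$ the floors wash out and $1-\frac{k+m}{k+n}\to 1-\frac{(1+\beta)/e}{1+\beta}=1-\frac1e$. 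Wait—that is not the claimed bound; the point is that $\E{X_\tau}=\bigl(1-\frac{k+m}{k+n}\bigr)\cdot\E{\max\{X_1,\dots,X_n\}}$ is being compared not against $\E{\max}$ with the same normalization but that the guarantee is $\frac{1+\beta}{e}$: indeed $\bigl(1-\frac{k+m}{k+n}\bigr)\to\frac{n-m}{n+k}\cdot\frac{1}{1}$; substituting the values of $m$ and $k$ yields $\frac{n-(\frac{1+\beta}{e}-\beta)n}{n+\beta n}=\frac{1+\beta-\frac{1+\beta}{e}}{1+\beta}=1-\frac1e$ again. So to land on $\frac{1+\beta}{e}$ one must instead compare $\E{X_\tau}$ to $\E{\max}$ directly and observe that the accepted value, when $k+j\geq n$, already beats $\E{\max}$, while for $k+j<n$ one uses $\E{X_j\mid\text{accept}}=\int_0^\infty(1-F^{k+j})\geq\frac{k+j}{n}\int_0^\infty(1-F^{n})$ (by the standard inequality $1-a^{k+j}\geq\frac{k+j}{n}(1-a^n)$ for $a\in[0,1]$, $k+j\leq n$), and then re-sum $\sum_j p_j\cdot\frac{1}{k+j}\cdot\min\{1,\frac{k+j}{n}\}$; the $\frac{1}{k+j}$ and $k+j$ cancel in the $k+j\leq n$ terms, producing $\sum_{j:k+j\leq n}\frac{p_j}{n}+\sum_{j:k+j>n}\frac{p_j}{k+j}$, which I would evaluate and show tends to $\frac{1+\beta}{e}$.

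\textbf{Key steps, in order.} (1) Reindex and apply \autoref{lem:bayes}; (2) derive the per-step arrival/acceptance probabilities and the telescoped formula $p_j=\frac{k+m}{k+j-1}$; (3) for each $j$ bound $\E{X_j\mid\text{accept}}$ against $\E{\max\{X_1,\dots,X_n\}}$, using $1-a^{k+j}\ge\frac{k+j}{n}(1-a^n)$ in the regime $k+j<n$ and $1-a^{k+j}\ge 1-a^n$ when $k+j\geq n$, and note $\int_0^\infty(1-F^n)=\E{\max}$; (4) assemble $\E{X_\tau}=\sum_{j=m+1}^n p_j\cdot\frac{1}{k+j}\cdot\E{X_j\mid\text{accept}}\geq \bigl(\text{sum}\bigr)\cdot\E{\max}$ and evaluate the resulting Riemann-type sum as $n\to\infty$; (5) check that $\beta\le\frac1{e-1}$ is exactly what makes $m\ge 0$, i.e.\ $\frac{1+\beta}{e}\ge\beta$, so the algorithm is well-defined. \textbf{Main obstacle.} The delicate point is handling the steps with $k+j<n$: there the accepted value is only the max of fewer than $n$ fresh samples and does \emph{not} beat $\E{\max\{X_1,\dots,X_n\}}$ outright, so one needs the convexity-type inequality $1-a^{k+j}\geq\frac{k+j}{n}(1-a^n)$ and must verify that after the cancellation the loss incurred over this initial block, together with the rounding errors from the two floor functions, vanishes in the limit $n\to\infty$; getting the constant to come out to exactly $\frac{1+\beta}{e}$ rather than something smaller is where the specific choice $m=\lfloor(\frac{1+\beta}{e}-\beta)n\rfloor$ must be used precisely.
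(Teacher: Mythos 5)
Your setup is correct and in fact mirrors the paper's: treating the $m$ skipped values as extra samples so that (a reindexed version of) \autoref{lem:bayes} gives that, conditioned on arriving at step $j>m$, the threshold $\max\{S_1,\dots,S_k,X_1,\dots,X_{j-1}\}$ is the maximum of $k+j-1$ fresh draws; the telescoping arrival probability $p_j=(k+m)/(k+j-1)$; the per-step acceptance probability $1/(k+j)$; and the identity $\E{X_j\mid\text{accept at }j}=\int_0^\infty(1-F^{k+j}(x))\,\mathrm{d}x$. Reassembling these and passing to the pointwise-in-$a$ form yields exactly the inequality the paper must prove,
\[
(\beta+\delta)\int_{(1+\beta)/e}^{1+\beta}\frac{1-a^t}{(1-a)\,t^2}\,\mathrm{d}t\ \geq\ \frac{1+\beta}{e}\qquad\text{for all }a\in[0,1),
\]
with $\delta=\tfrac{1+\beta}{e}-\beta$, so up to this point the two arguments coincide (the paper phrases the computation via $\Pr{X_\tau\geq x}$ rather than \autoref{lem:bayes}, but the resulting expression is the same).

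The gap is in your step~(4). You propose bounding $\frac{1-a^t}{1-a}\geq\min\{1,t\}$ term by term, but this bound is not simultaneously tight: for $t<1$ the ratio is \emph{decreasing} in $a$ (tight only as $a\to 1$), while for $t>1$ it is \emph{increasing} in $a$ (tight only at $a=0$). Carrying out your plan gives
\[
(\beta+\delta)\int_{(1+\beta)/e}^{1+\beta}\frac{\min\{1,t\}}{t^2}\,\mathrm{d}t
=\frac{1+\beta}{e}\bigl(1-\ln(1+\beta)\bigr)+\frac{\beta}{e},
\]
which is strictly smaller than $\frac{1+\beta}{e}$ for every $\beta>0$, since $(1+\beta)\ln(1+\beta)>\beta$. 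So the assembled Riemann-type sum does \emph{not} tend to $\frac{1+\beta}{e}$ as you assert; your argument as written establishes only a weaker constant, and the constant degrades with $\beta$ (at $\beta=1/(e-1)$ it is about $0.53$ rather than $1/(e-1)\approx 0.58$). The paper avoids this by keeping the whole function $H(a)=\int_{(1+\beta)/e}^{1+\beta}\frac{1-a^t}{(1-a)t^2}\,\mathrm{d}t$ together: it observes $\lim_{a\to 1}H(a)=\int_{(1+\beta)/e}^{1+\beta}\frac{\mathrm{d}t}{t}=1$, and then proves $H$ is nonincreasing on $[0,1)$ by differentiating twice more (reducing to showing $\int_{(1+\beta)/e}^{1+\beta}\frac{a^{t-1}(t-1)}{t}\,\mathrm{d}t\leq 0$), where the hypothesis $\beta\leq 1/(e-1)$ enters only at the very last inequality. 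The monotonicity argument is precisely what lets the $t<1$ and $t>1$ parts of the integrand compensate rather than being minimized at incompatible values of $a$; any term-by-term bound necessarily loses this interaction and cannot recover the tight constant.
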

\begin{proof}
	To simplify the exposition, we will assume that $\frac{1+\beta}{e}-\beta$ and $\beta n$ are integers and drop the symbols $\lfloor$ and $\rfloor$. Let $\delta=\frac{1+\beta}{e}-\beta$ and $x\in\m{R}_+$. We will show that for large $n$ and uniformly in $F$,
\begin{equation} \label{dom}
\text{Pr}(X_{\tau} \geq x) \geq \left(\frac{1+\beta}{e}+o(1)\right) \cdot \text{Pr}(\max \left\{X_1,\dots,X_n\right\} \geq x). 
\end{equation}
The theorem then follows by integrating over $x$. 

We have
\begin{eqnarray*}
\text{Pr}(X_{\tau} \geq x)&=& \sum_{i=\delta n+1}^n \text{Pr}(\left\{X_{i} \geq x\right\} \cap \left\{\tau=i\right\}),
\end{eqnarray*}
and for $i \in \left\{\delta n+1, \dots, n \right\}$, $\text{Pr}(\left\{X_{i} \geq x\right\} \cap \left\{\tau=i\right\})$ is equal to
\begin{align*}
&\phantom{=\cdot}\text{Pr}(\left[X_{i} \geq \max \left\{x,S_1,\dots,S_{\beta n},X_1,\dots, X_{i-1}
\right\}\right] 
\\
&\phantom{=\cdot\text{Pr}(}\cap
  \left[\max\left\{S_1,\dots,S_{\beta n},X_1,\dots, X_{\delta n} \right\} \geq \max\left\{X_{\delta n+1},\dots,X_{i-1} \right\} \right])
\\
&=\phantom{\cdot}\text{Pr}\left(X_{i} \geq \max \left\{x,S_1,\dots,S_{\beta n},X_1,\dots, X_{i-1}
\right\}\right)
\\
&\phantom{=}\cdot
\text{Pr}(\max\left\{S_1,\dots,S_{\beta n},X_1,\dots, X_{\delta n} \right\} \geq \max\left\{X_{\delta n+1},\dots,X_{i-1} \right\}).
\end{align*}
We have
\begin{equation*}
\text{Pr}(\max\left\{S_1,\dots,S_{\beta n},X_1,\dots, X_{\delta n}\right\} )
\geq
 \max\left\{X_{\delta n+1},\dots,X_{i-1}\right\} )
=\frac{\beta +\delta }{\beta +\frac{i-1}{n}},
\end{equation*}
and $\text{Pr}(X_{i} \geq \max\left\{x,S_1,\dots,S_{\beta n},X_1,\dots, X_{i-1}\right\})$ is equal to
\begin{eqnarray*}
&&\text{Pr}(X_i \geq x | X_{i} \geq \max \left\{S_1,\dots,S_{\beta n},X_1,\dots, X_{i-1}
\right\})
\cdot
 \text{Pr}(X_{i} \geq \max \left\{S_1,\dots,S_{\beta n},X_1,\dots, X_{i-1} \right\}
)
\\
&=& 
\text{Pr}(\max\left\{S_1,\dots,S_{\beta n},X_1,\dots, X_{i} \right\}
\geq x)
\cdot
\text{Pr}(X_{i} = \max \left\{S_1,\dots,S_{\beta n},X_1,\dots,X_i\right\})
\\
&=&
\frac{1-F^{\beta n+i}(x)}{\beta n +i}.
\end{eqnarray*}
Thus, 
\begin{eqnarray*}
\text{Pr}(X_{\tau} \geq x)&=&
(\beta +\delta) \frac{1}{n} \sum_{i=\delta n}^n \frac{1-F^{\beta n+i}(x)}{(\beta +\frac{i}{n})(\beta+\frac{i-1}{n})} \geq (\beta +\delta) \frac{1}{n} \sum_{i=\delta n}^n \frac{1-F^{\beta n+i}(x)}{\left(\beta +\frac{i}{n} \right)^2}.
\end{eqnarray*}
Set $a=F^n(x)$, and notice that $\text{Pr}(\max\left\{X_1,\dots,X_n\right\} \geq x)=1-a^n$. Hence, to prove~\eqref{dom}, it is enough to prove that for all $a \in [0,1]$ and large $n$, 
\begin{equation} \label{dom2}
(\beta+\delta) \frac{1}{n} \sum_{i=\delta n}^n \frac{1-a^{\beta+\frac{i}{n}}}{(\beta +\frac{i}{n})^2} \geq \left(\frac{(1+\beta)}{e}+o(1) \right) (1-a),
\end{equation}
where the $o(1)$ is independent of $a$. Let $g(t)=\frac{1-a^{\beta+t}}{(\beta+t)^2}$. There exists $C>0$ such that for all $t \in [0,1]$ and $a \in [0,1]$, $|g'(t)| \leq C(1-a)$. By property of the Riemann integral, it follows that for all $a \in [0,1]$ and $n \geq 1$,
 \begin{eqnarray*}
\left| (\beta+\delta) \frac{1}{n} \sum_{i=\delta n}^n \frac{1-a^{\beta+\frac{i}{n}}}{(\beta +\frac{i}{n})^2} -(\beta+\delta) \int_{\delta}^1 \frac{1-a^{\beta+t}}{(\beta+t)^2} \;\mathrm{d}t  \right| \leq \frac{C(1-a)}{n}.
 \end{eqnarray*}
Since $\beta+\delta=\frac{1+\beta}{e}$, to prove~\eqref{dom2}, it is thus enough to prove that for all $a \in [0,1]$,
\begin{equation*}
\int_{\delta}^1 \frac{1-a^{\beta+t}}{(\beta+t)^2} \;\mathrm{d}t  \geq 1-a. 
\end{equation*}
The above inequality clearly holds for $a=1$, and thus by the change of variables $t'=t+\beta$, we want to prove that for all $a \in [0,1)$, 
\begin{equation*}
\int_{\frac{1+\beta}{e}}^{1+\beta} \frac{1-a^t}{(1-a)t^2} \;\mathrm{d}t \geq 1. 
\end{equation*}
It is enough to prove that the above integral is decreasing in $a$. Indeed, its limit as $a$ goes to 1 is $1$. Define 
\begin{equation*}
H(a)=\int_{\frac{1+\beta}{e}}^{1+\beta} \frac{1-a^t}{(1-a)t^2} \;\mathrm{d}t.
\end{equation*}
We have
\begin{equation*}
H'(a)=\int_{\frac{1+\beta}{e}}^{1+\beta} \frac{-(1-a)ta^{t-1}+(1-a^t)}{(1-a)^2t^2} \;\mathrm{d}t.
\end{equation*}
Thus, we want to prove that the function $I$ defined by
\begin{equation*}
I(a)=\int_{\frac{1+\beta}{e}}^{1+\beta} \frac{-(1-a)ta^{t-1}+(1-a^t)}{t^2} \;\mathrm{d}t
\end{equation*}
is negative. Notice that
\begin{equation*}
I(1)=0,
\end{equation*}
thus it is enough to prove that $I$ is increasing, which means that $I'$ is positive. We have 
\begin{equation*}
I'(a)=-\int_{\frac{1+\beta}{e}}^{1+\beta} \frac{a^{t-2}(1-a)(t-1)}{t} \;\mathrm{d}t.
\end{equation*}
Let
\begin{equation*}
J(a)=\int_{\frac{1+\beta}{e}}^{1+\beta} \frac{a^{t-1}(t-1)}{t} \;\mathrm{d}t.
\end{equation*}
Thus, we want to prove that $J$ is negative. For all $a \in [0,1)$ and $t \in [\frac{1+\beta}{e},1+\beta]$, we have
$\frac{a^{t-1}(t-1)}{t} \leq \frac{(t-1)}{t}$, and thus
\begin{equation*}
J(a) \leq \int_{\frac{1+\beta}{e}}^{1+\beta} \frac{t-1}{t} \;\mathrm{d}t=\left(1-\frac{1}{e}\right)(1+\beta)-1 \leq \left(1-\frac{1}{e}\right)\left(1+\frac{1}{e-1}\right)-1=0. 
\end{equation*}
This finishes the proof. 
\end{proof}

\section{Streaming Prophet Inequalities}

We will now argue that with arbitrarily small additive loss $\varepsilon$ in the guarantee, our MRS algorithms can be implemented as streaming algorithms. To this end, consider $x_0\in[0,1]$, $\bar{y}\in \mathbb{R}_+$, and the continuous function $g:[0,1]\rightarrow[0,\bar{y}]$ based on which the MRS algorithm is defined. The property that we need our functions $g$ to satisfy (and our algorithms from \autoref{sec:subset} do satisfy) is that the graph of $g$ has $O_\varepsilon(1)$ intersection points with the (infinitely many) horizontal lines at height $0,\varepsilon,2\varepsilon,\dots$. 

\begin{theorem}\label{thm:streaming}
	Let $\varepsilon>0$. Assume there exists an MRS algorithm with guarantee $\alpha$ for the unknown-distribution setting with $O(n)$ samples. Further assume that the MRS algorithm is based on continuous function $g$ with $|\{x\in[0,1]:\exists q\in\mathbb{N}: g(x)=q\cdot\varepsilon\}|=O_\varepsilon(1)$. Then there exists a streaming algorithm using $O_\varepsilon(\log n)$ space and achieving a guarantee of $\alpha-\varepsilon$ in the same setting.
\end{theorem}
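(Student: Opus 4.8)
I would first replace the given MRS algorithm by a \emph{lazy} variant that recomputes its threshold at only $O_\varepsilon(1)$ steps, and then show (i)~that this costs at most $O(\varepsilon)$ in the guarantee and (ii)~that it runs in one pass with $O_\varepsilon(\log n)$ space. Fix a partition of $[n]$ into segments $[s_1,s_2),\dots,[s_M,s_{M+1})$ with $s_1=1$ and $s_{M+1}=n+1$, and for each $j$ an integer $m_j\le k+s_j-1$. The lazy algorithm, upon reaching $X_i$ with $i\in[s_j,s_{j+1})$, accepts iff $X_i$ exceeds the threshold $T_j:=\max\mathcal R_{s_j}$, where $\mathcal R_{s_j}$ is a \emph{single} uniformly random size-$m_j$ subset of $\{S_1,\dots,S_k,X_1,\dots,X_{s_j-1}\}$, drawn once at step $s_j$. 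Repeating the rank-based induction of \autoref{lem:bayes} for this algorithm shows that, conditioned on reaching any step $i\in[s_j,s_{j+1})$, the set $\{S_1,\dots,S_k,X_1,\dots,X_{i-1}\}$ is distributed like $k+i-1$ fresh samples, and moreover $T_j$ is then distributed like the maximum of $m_j+(i-s_j)$ fresh samples with $X_i$ an independent fresh sample (the ``$+(i-s_j)$'' appears because not having stopped at $X_{s_j},\dots,X_{i-1}$ says exactly that these $i-s_j$ fresh samples are dominated by $T_j$). Hence $\Pr{\tau=i}$ and $\E{X_i\mid\tau=i}$ coincide, term by term, with those of the genuine MRS algorithm for the function $\tilde f$ with $\tilde f(i)=m_j+(i-s_j)$ on segment $j$; equivalently, the lazy algorithm has the same distribution of $X_\tau$ as the MRS algorithm for the ``sawtooth'' $\tilde g$ that on $[s_j/n,s_{j+1}/n)$ rises with unit slope from $m_j/n$, and so its guarantee is the value $\inf_{a\in[0,1)}\int_0^1 e^{-\int_0^y 1/\tilde g(z)\,\mathrm dz}\cdot(1-a^{\tilde g(y)})/(\tilde g(y)(1-a))\,\mathrm dy$ from \autoref{subsec:control-problem}.

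Next I would choose the segments so that $\tilde g$ is uniformly close to $g$. Between consecutive points of $\{x:g(x)\in\{\varepsilon,2\varepsilon,\dots\}\}$ the function $g$ stays inside a band of width $\varepsilon$, so subdividing each such band-interval into pieces of length at most $\varepsilon$ and setting $m_j:=\lceil g(s_j/n)\cdot n\rceil$ gives $\|\tilde g-g\|_\infty\le 3\varepsilon$, using only $M=O_\varepsilon(1)$ segments by the hypothesis on $g$; the feasibility constraint $m_j\le k+s_j-1$ is met at the price of $O(\varepsilon n)=O(n)$ extra samples. It remains to show that a sup-norm-$O(\varepsilon)$ perturbation of $g$ changes the value of the control problem by only $O(\varepsilon)$. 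I would fix the worst-case parameter $a\in[0,1)$ and split $[0,1]$ according to whether $g<\delta$ for a suitable $\delta=\delta(\varepsilon)$: on $\{g\ge\delta\}$ the integrand is bounded by $\max(1,1/\delta)$ and changes by $O_\delta(\varepsilon)$ under the perturbation (using $|1/g-1/\tilde g|\le O(\varepsilon/\delta^2)$ inside the exponential), while on $\{g<\delta\}$ the entire contribution of each function is $O(\delta)$, since $(1-a^g)/(g(1-a))$ stays bounded there and $e^{-\int_0^y 1/g}\le e^{-|\{g<\delta\}\cap[0,y]|/\delta}$; taking e.g.\ $\delta=\varepsilon^{1/3}$ and then rescaling $\varepsilon$ yields the $O(\varepsilon)$ bound. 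Combined with the $O(1/n)$ rounding losses already handled in \autoref{subsec:control-problem}, the lazy algorithm has guarantee $\alpha-\varepsilon$ as $n\to\infty$.

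Finally I would implement the lazy algorithm in one pass. The boundaries $s_1,\dots,s_M$ and sizes $m_1,\dots,m_M$ are fixed in advance---so the recomputation points are non-adaptive---and occupy $O_\varepsilon(\log n)$ space. Scanning the stream $S_1,\dots,S_k,X_1,\dots,X_n$, I run $M$ selection-sampling procedures in parallel: for each $j$ keep a counter $\mathrm{cnt}_j$ and a value $\mathrm{thr}_j$, and when the $t$-th stream element $v_t$ arrives with $t\le N_j:=k+s_j-1$, add $v_t$ to $\mathcal R_{s_j}$ with probability $(m_j-\mathrm{cnt}_j)/(N_j-t+1)$, updating $\mathrm{cnt}_j$ and setting $\mathrm{thr}_j\leftarrow\max(\mathrm{thr}_j,v_t)$ if it is added. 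This is the standard procedure yielding a uniform size-$m_j$ subset of the first $N_j$ stream elements---exactly $\{S_1,\dots,S_k,X_1,\dots,X_{s_j-1}\}$---and it finalises $T_j:=\mathrm{thr}_j$ just as $X_{s_j}$ is about to be read. When $X_i$ arrives, the algorithm (unless already stopped) accepts iff $X_i>T_{j(i)}$, and then feeds $X_i$ to the still-active samplers of later segments. Each of the $O_\varepsilon(1)$ samplers stores one stream value and an $O(\log n)$-bit counter, and flipping a biased coin with the required probability needs $O(\log n)$ reusable working bits, for a total of $O_\varepsilon(\log n)$ space.

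The main obstacle is the quantitative continuity of the control-problem value under sup-norm perturbations of $g$: all the difficulty sits at points where $g$ is near $0$, where $1/g$ is large and the naive Lipschitz estimate for $e^{-\int_0^y 1/g}$ breaks down, which is why the truncation at level $\delta$ sketched above is needed.
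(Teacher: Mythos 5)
Your proposal follows essentially the same route as the paper. The paper likewise replaces the MRS algorithm by a lazy one that recomputes its threshold at $O_\varepsilon(1)$ non-adaptive time points (chosen from the intersections of $g$ with the $\varepsilon$-grid and the vertical lines at multiples of $\varepsilon$), observes that the resulting algorithm behaves like an MRS algorithm for a sawtooth $\tilde g$ with unit slope on each segment satisfying $g\le\tilde g\le g+2\varepsilon$ (your Lemma-3.1-style argument and the conditioning on $T_j$ coincide with the paper's \autoref{lem:streaming-gtilde}), bounds the loss of switching from $g$ to $\tilde g$ in the control problem, and then implements each threshold computation by an on-the-fly uniform-subset sampler storing one value plus $O(\log n)$-bit counters (your selection-sampling routine is the paper's \autoref{lem:on-the-fly}). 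The one place where the two differ in substance is the quantitative error estimate. The paper's \autoref{lem:streaming-error-bound} works only with the one-sided bound $0\le\tilde g-g\le2\varepsilon$ and uses the specific behaviour of the Section~3 functions near $y=1$ (namely $g(1-\varepsilon)=\Omega(\sqrt\varepsilon)$), splitting $[0,1]$ at $1-\varepsilon$ and obtaining loss $O(\sqrt\varepsilon)$; your version splits according to a level set $\{g<\delta\}$ and aims at a more function-agnostic argument, at the cost of a worse exponent ($O(\varepsilon^{1/3})$ with your $\delta=\varepsilon^{1/3}$), both recovered after rescaling $\varepsilon$. One caveat on your level-set estimate: the claim that the contribution from $\{g<\delta\}$ is $O(\delta)$ is not true for arbitrary $g$ (e.g.\ if $g$ sits just below $\delta$ on a macroscopic interval); what makes it work here, as in the paper, is the structure of the Section~3 functions where $\{g<\delta\}$ is a short terminal interval of length $O(\delta^2)$ on which $e^{-h}/g$ integrates to $O(\delta)$. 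You should make this dependence on the concrete $g$ explicit, as the paper does implicitly through the inequality $1/\tilde g\ge 1/g - O(\sqrt\varepsilon)$. Also, your ``$O(\varepsilon n)$ extra samples'' remark is unnecessary: since $\tilde g(s_j/n)=g(s_j/n)\le\beta+s_j/n$ and $\tilde g$ then rises with slope exactly $1$, $\tilde g(y)\le\beta+y$ throughout, so the original sample budget suffices.
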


The algorithm divides the $x$-range $[0,1]$ into strips of width $\varepsilon$ and the $y$-range $[0,\bar{y}]$ into strips of width $\varepsilon$. This creates $\gamma \leq \lceil 1/\varepsilon\rceil + O_\varepsilon(1)$ 
intersection points with $g$. Let $0=x_1 \leq x_2 \leq \dots \leq x_\gamma=1$ be the corresponding $x$-coordinates of these intersection points.

For all $i=1,\dots,\gamma-1$, the algorithm uses a single threshold that is distributed like the maximum of $\lceil g(x_i)\cdot n\rceil$ fresh samples for all steps in $[x_i\cdot n,x_{i+1}\cdot n)$. We observe that the emerging algorithm can be viewed as an MRS algorithm again. Towards this, let $\tilde{g}:[0,1]\rightarrow[0,\bar{y}+\varepsilon]$ be the function that is equal to $g(x_i)$ at $x_i$ and then grows linearly with slope $1$ until (and not including) $x_{i+1}$. We can (essentially) view the new algorithm as the MRS algorithm based on $\tilde{g}$.

\begin{lemma}\label{lem:streaming-gtilde}
	Let $j\in[x_i\cdot n,x_{i+1}\cdot n)\cap\mathbb{Z}$ for some $i\in\{1,\dots,\gamma-1\}$. Conditioned on arriving in step $j$, the above algorithm sets a threshold for $X_j$ that is distributed like the maximum of $\tilde{g}(j/n)\cdot n\pm O(1)$ fresh samples.
\end{lemma}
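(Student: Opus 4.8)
The plan is to reduce this statement to the fresh-samples lemma (\autoref{lem:bayes}) together with a careful bookkeeping of how many values are ``in the pool'' at step $j$ versus how many the threshold is computed from. First, recall how the streaming algorithm computes its threshold: at the start of strip $i$, namely at the first integer step $j_i \geq x_i\cdot n$, it constructs (on the fly) a uniformly random subset of size $\lceil g(x_i)\cdot n\rceil$ of the $k + j_i - 1$ values $\{S_1,\dots,S_k,X_1,\dots,X_{j_i-1}\}$ seen so far, records its maximum, and reuses this number as the threshold for every step $j\in[x_i\cdot n, x_{i+1}\cdot n)$. So the threshold for $X_j$ is \emph{not} recomputed at step $j$; it is the one frozen at step $j_i$.

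The first step is to invoke \autoref{lem:bayes} at step $j_i$: conditioned on the algorithm arriving at step $j_i$, the set $\{S_1,\dots,S_k,X_1,\dots,X_{j_i-1}\}$ is distributed like $k+j_i-1$ fresh i.i.d.\ samples from $F$, and hence (by the remark immediately following the lemma) a uniformly random size-$\lceil g(x_i)\cdot n\rceil$ subset of it is distributed like $\lceil g(x_i)\cdot n\rceil$ fresh samples. But we need the conditioning to be ``arriving at step $j$'' rather than ``arriving at step $j_i$''. Here I would argue as in the proof of \autoref{lem:bayes} that the decisions taken at steps $j_i, j_i+1, \dots, j-1$ depend only on the \emph{ranks} of $X_{j_i},\dots,X_{j-1}$ relative to the frozen random subset, so conditioning further on not having stopped in those steps does not change the distribution of the recorded maximum (it is a function of ranks only). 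Thus, conditioned on arriving at step $j$, the threshold for $X_j$ is still distributed like the maximum of $\lceil g(x_i)\cdot n\rceil$ fresh samples.

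It remains to compare $\lceil g(x_i)\cdot n\rceil$ with $\tilde g(j/n)\cdot n$. By construction $\tilde g$ equals $g(x_i)$ at $x_i$ and grows with slope $1$ on $[x_i,x_{i+1})$, so $\tilde g(j/n) = g(x_i) + (j/n - x_i)$, i.e.\ $\tilde g(j/n)\cdot n = g(x_i)\cdot n + (j - x_i n)$. On the other hand, the number of values available at step $j$ is $k + j - 1$, of which the threshold uses $\lceil g(x_i)\cdot n\rceil = g(x_i)\cdot n + O(1)$, whereas the ``ideal'' MRS algorithm based on $\tilde g$ would at step $j$ use $\tilde g(j/n)\cdot n = g(x_i)\cdot n + (j - x_i n)$ of them. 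Since the width of each strip is at most $\varepsilon$, we have $0 \le j - x_i n < \varepsilon n$, which is $\Theta(n)$ rather than $O(1)$ — so the two quantities differ by $\Theta(\varepsilon n)$, \emph{not} $O(1)$, which appears to contradict the stated lemma. The resolution I would pursue is that the intended reading of $\tilde g$ is the \emph{piecewise-constant} function taking value $g(x_i)$ on all of $[x_i,x_{i+1})$ (the ``grows linearly with slope $1$'' in the surrounding text refers only to bounding the total number of samples ever stored, not to the size of each individual subset); with this reading $\tilde g(j/n)\cdot n = g(x_i)\cdot n$, and the only discrepancy between this and $\lceil g(x_i)\cdot n\rceil$ is the ceiling, i.e.\ an additive $O(1)$, which is exactly the claimed $\pm O(1)$.

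The main obstacle is therefore not any hard estimate but pinning down precisely which function $\tilde g$ the lemma refers to and making the rank-invariance argument (transporting the conditioning from step $j_i$ to step $j$) rigorous; once that is settled, the statement follows from \autoref{lem:bayes} and elementary counting. I would close by remarking that the ceiling contributes at most one extra or one fewer sample, and that the on-the-fly construction of the random subset (storing a running maximum plus $O(\log n)$ bits of state for the reservoir-style sampling) does not affect the distribution, only the space usage, which is handled separately in the proof of \autoref{thm:streaming}.
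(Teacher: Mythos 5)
There is a genuine gap here, and it shows up precisely at the point where you diagnose an apparent contradiction and then ``resolve'' it by reinterpreting $\tilde g$. The paper's definition is indeed the piecewise-\emph{linear} one with slope~$1$, and it is crucial; the claim is not fixed by ignoring the slope, and your piecewise-constant reading is not what the lemma is about.

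The missing idea is that the conditioning on ``arriving at step $j$'' changes which set the threshold is the maximum of. You correctly noted that the algorithm freezes $T=\max Y_S$ at step $j_0$, where $Y_S$ is a uniformly random size-$\ell$ subset ($\ell=\lceil g(x_i)\cdot n\rceil$) of the pool at $j_0$, and that $\ell$ can differ from $\tilde g(j/n)\cdot n$ by $\Theta(\varepsilon n)$. But the event of arriving at step $j$ is exactly the event that $X_{j_0},\dots,X_{j-1}$ are all $\le T$, i.e.\ that $\max Y_S=\max Y$ where $Y=Y_S\cup\{X_{j_0},\dots,X_{j-1}\}$. So \emph{conditioned on arriving at $j$}, the threshold equals $\max Y$, and $|Y|=\ell+(j-j_0)=g(x_i)\cdot n+(j-x_in)\pm O(1)=\tilde g(j/n)\cdot n\pm O(1)$ --- which is why $\tilde g$ grows with slope~$1$. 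By \autoref{lem:bayes}, $Y$ is distributed like $|Y|$ fresh samples (conditioned on arriving at $j_0$), and since $Y_S$ and $Y\setminus Y_S$ are both fresh, the event $\max Y=\max Y_S$ is a pure rank event whose probability does not depend on the realization of $Y$; hence the further conditioning leaves the distribution of $Y$ unchanged, giving the claim. Your proposal instead tries to transport the conditioning from $j_0$ to $j$ while keeping the threshold equal to the max of a size-$\ell$ fresh sample, and then (correctly) finds that this does not match $\tilde g(j/n)\cdot n\pm O(1)$; the fix is not to alter $\tilde g$ but to recognize that the conditioned threshold is the max of the \emph{enlarged} set $Y$.
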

\begin{proof}
Let $j_0$ be the first integer in $[x_i\cdot n,x_{i+1}\cdot n)$. Denote the subset of values selected uniformly at random from $\{S_1,\dots,S_{\beta n}, X_1,\dots,X_{j_0-1}\}$ by $Y_S=\{S'_1,\dots,S'_\ell\}$ where $\ell=\lceil g(j_0/n)\cdot n\rceil=\tilde{g}(j_0/n)\cdot n\pm O(1)$ by continuity of $g$. By \autoref{lem:bayes}, this set and the set $Y=\{S'_1,\dots,S'_\ell,X_{j_0},\dots,X_{j-1}\}$ are distributed like sets of $\ell$ and $|Y|=\ell+(j-j_0)-1$, respectively, fresh samples. Note that $|Y|=\tilde{g}(j/n)\cdot n\pm O(1)$. It suffices to show that, conditioned on arriving in step $j_0$ and any such set $Y$, (i) the probability of arriving in step $j$ is independent of $Y$, and (ii), if the algorithm arrives in step $j$, the threshold it sets in step $j$ is $\max Y$.

Towards showing (i) and (ii), again condition on arriving in step $j_0$ and any set $Y$. Note that the algorithm arrives in step $j$ if and only if $\max Y=\max Y_S$. This implies that, throughout steps $j_0,\dots,j$, the algorithm sets $\max Y$ as threshold, showing (ii). Finally notice that, since both $Y_S$ and $Y\setminus Y_S$ are sets of fresh draws from $F$, $\max Y=\max Y_S$ happens with probability independent of $Y$, showing (i).	
\end{proof}

Further note that our construction ensures that $g(x) \leq \tilde{g}(x) \leq g(x)+ 2\varepsilon$ for all $x\in[0,1]$. See \autoref{fig:streaming-algo} for a visualization of the construction.

\begin{lemma}\label{lem:streaming-error-bound}
Suppose that the MRS algorithm defined by $g$ achieves approximation ratio $\alpha$ via~\eqref{eq:MRS-goal-pointwise} and that $|\tilde{g}(x) - g(x)| \leq 2\varepsilon$ for all $x\in[0,1]$, then the MRS algorithm defined by $\tilde{g}$ achieves approximation ratio $\alpha - O(\sqrt{\eps})$.
\end{lemma}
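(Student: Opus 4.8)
The plan is to push the perturbation $g\mapsto\tilde g$ through the one-variable reformulation of \autoref{subsec:control-problem}. By \autoref{lem:streaming-gtilde}, conditioned on reaching step $j\in[x_in,x_{i+1}n)$ the algorithm based on $\tilde g$ sets a threshold distributed like the maximum of $\tilde g(j/n)\cdot n\pm O(1)$ fresh samples, so the expected-value computation of \autoref{subsec:control-problem} carries over verbatim (the $\pm O(1)$ is lower order and disappears as $n\to\infty$, just as the Laurent-series error does there), showing that the guarantee of the $\tilde g$-algorithm equals $\inf_{a\in[0,1)}R_{\tilde g}(a)$, where for positive $q$ I abbreviate $h_q(y)=\int_0^y\mathrm dz/q(z)$ and
\[
	R_q(a)\;=\;\int_0^1 e^{-h_q(y)}\cdot\frac{1-a^{q(y)}}{q(y)\,(1-a)}\;\mathrm dy .
\]
Since the hypothesis is precisely that $R_g(a)\ge\alpha$ for all $a$ (this is \eqref{eq:MRS-goal-pointwise}), it suffices to prove $R_g(a)-R_{\tilde g}(a)=O(\sqrt\eps)$ uniformly in $a\in[0,1)$.

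The first step removes the exponential factor by monotonicity: from $g\le\tilde g$ we get $h_{\tilde g}\le h_g$ pointwise, hence $e^{-h_{\tilde g}(y)}\ge e^{-h_g(y)}\ge 0$, and $s\mapsto(1-a^s)/s$ is nonincreasing, so $(1-a^{\tilde g(y)})/\tilde g(y)\le(1-a^{g(y)})/g(y)$; combining the two facts,
\[
	R_g(a)-R_{\tilde g}(a)\;\le\;\frac1{1-a}\int_0^1 e^{-h_g(y)}\left(\frac{1-a^{g(y)}}{g(y)}-\frac{1-a^{\tilde g(y)}}{\tilde g(y)}\right)\mathrm dy ,
\]
with nonnegative integrand. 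I would then split $[0,1]=A\cup B$ with $A=\{y:g(y)\ge\sqrt\eps\}$ and $B=\{y:g(y)<\sqrt\eps\}$. On $A$ the perturbation is $O(\sqrt\eps)$-small relative to $g$: an elementary calculus estimate gives $\bigl|\tfrac{\mathrm d}{\mathrm ds}\tfrac{1-a^s}{s}\bigr|\le C\,(1-a)/s^{2}$ uniformly for $a\in[0,1)$ and $s$ in the relevant bounded range, so the bracket on $A$ is at most $2C\eps(1-a)/g(y)^2$; peeling off one factor $1/g(y)\le\eps^{-1/2}$ and using $e^{-h_g(y)}/g(y)=-\tfrac{\mathrm d}{\mathrm dy}e^{-h_g(y)}$, the $A$-part of the bound is at most $2C\sqrt\eps\int_0^1\!\bigl(-\tfrac{\mathrm d}{\mathrm dy}e^{-h_g(y)}\bigr)\mathrm dy\le 2C\sqrt\eps$. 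On $B$, both $g$ and $\tilde g$ are $O(\sqrt\eps)$-small; using the identity $\tfrac{1-a^{g}}{g}-\tfrac{1-a^{\tilde g}}{\tilde g}=\tfrac{(\tilde g-g)(1-a^{g})}{g\tilde g}-\tfrac{a^{g}-a^{\tilde g}}{\tilde g}$, discarding the nonnegative last term, and invoking $(1-a^s)/s\le\ln(1/a)$, the bracket on $B$ is at most $2\eps\ln(1/a)/\tilde g(y)$; replacing $e^{-h_g(y)}$ by the larger $e^{-h_{\tilde g}(y)}$ and telescoping $\int_B e^{-h_{\tilde g}(y)}/\tilde g(y)\,\mathrm dy\le 1$ gives a $B$-part of at most $2\eps\ln(1/a)/(1-a)$, which is $O(\sqrt\eps)$ as soon as $a\ge\sqrt\eps$. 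Summing the two parts yields the claim.

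I expect the main obstacle to be the corner $a\to 0$ inside $B$: there the two integrands defining $R_g$ and $R_{\tilde g}$ are each of order $\Theta\bigl(1/(1-a)\bigr)$, so they cannot be estimated separately and one must extract the \emph{cancellation} coming from $\tilde g\approx g$. The way to handle it is to use that $\tilde g$ grows linearly with slope $1$ past the last intersection point, so $h_{\tilde g}(1)=\int_0^1\mathrm dz/\tilde g(z)$ is $\Omega(\log(1/\eps))$ whenever $g$ becomes tiny near $y=1$; this makes the truncated exponential $e^{-h_{\tilde g}(1)}$ of order $O(\eps)$ and absorbs the remaining $B$-contribution for very small $a$ (and if $g$ is bounded below instead, $B$ is empty for small $\eps$ and there is nothing to prove). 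Verifying this uniform $O(\sqrt\eps)$ rate in the joint small-$g$, small-$a$ regime — rather than the weaker $O(\eps^{1/3})$ a naive optimization of the split point would yield — together with the uniform-in-$a$ derivative bound used on $A$, is the delicate part; the rest is routine.
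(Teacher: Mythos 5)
Your overall strategy is essentially the paper's: truncate the bad region where $g$ is tiny (you cut by height, $B=\{g<\sqrt\eps\}$, while the paper cuts in $y$ at $1-\varepsilon$; for the specific $g$ from \autoref{subsec:unconstrained} these differ only by constants), then show that on the good region the perturbation enters multiplicatively as $1-O(\sqrt\eps)$ and that the bad region contributes only $O(\sqrt\eps)$. Your $A$-part argument is correct, and the derivative bound $|\phi'(s)|\le C(1-a)/s^2$ does hold uniformly for $s$ bounded above (which it is, since $g\le\bar y$ and $\tilde g\le\bar y+2\eps$).

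The gap you flag is real: the chain $\text{bracket}\le 2\eps\ln(1/a)/\tilde g$, telescoping, and dividing by $1-a$ gives $B$-part $\le 2\eps\ln(1/a)/(1-a)$, which is unbounded as $a\to0$ since there $\ln(1/a)\to\infty$ while $1-a\to1$. However, your proposed repair does not work. You claim that $h_{\tilde g}(1)=\int_0^1\mathrm dz/\tilde g(z)$ is $\Omega(\log(1/\eps))$ because $\tilde g$ grows linearly past the last grid intersection; but $\tilde g\ge g$ everywhere, so $h_{\tilde g}(1)\le h_g(1)$, and for the optimal $g(y)=2\sqrt{\muopt(1-y)}/\muopt$ one has $h_g(1)=\int_0^1\mathrm dy/g(y)=\sqrt{\muopt}$, a finite constant. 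So $e^{-h_{\tilde g}(1)}$ is bounded away from $0$, not $O(\eps)$, and the exponential factor cannot absorb the $\ln(1/a)$ blow-up. (Your motivating heuristic is also off: for $a\to0$ the prefactor $(1-a)^{-1}\to 1$, so the two integrands are $\Theta(1)$, not $\Theta(1/(1-a))$; there is no divergence to cancel.)

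The correct fix for $B$ is simpler, and is in fact what the paper does. On $B$, drop $\phi(\tilde g)\ge0$ rather than trying to exploit cancellation, so the $B$-part is at most $(1-a)^{-1}\int_B e^{-h_g(y)}\phi(g(y))\,\mathrm dy$. Since $g<1$ on $B$, $\phi(g)/(1-a)=\frac{1-a^{g}}{g(1-a)}\le 1/g$, and by the fundamental theorem of calculus $\int_B e^{-h_g}/g\,\mathrm dy=e^{-h_g(\inf B)}-e^{-h_g(1)}\le h_g(1)-h_g(\inf B)=\int_B\mathrm dy/g$. For $g(y)\asymp\sqrt{1-y}$ and $|B|=O(\eps)$ this last integral is $O(\sqrt\eps)$, which closes the argument. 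This is precisely the paper's tail estimate in \eqref{tilde_equation}; once it is in place, your $A$-part and the monotonicity step for the exponential give the remainder, matching the paper's conclusion.
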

\begin{proof}
As shown previously, $g$ satisfies~\eqref{eq:MRS-goal-pointwise}:
For all $a\in[0,1]$,
\begin{equation}  \label{g_equation}
	\int_0^1\exp\left(-\int_0^y\frac{1}{g(z)}\;\mathrm{d}z\right)\cdot\frac{1}{g(y)}\cdot\left(1-a^{g(y)}\right)\mathrm{d}y\geq\alpha\cdot(1-a).
\end{equation}
Moreover, to prove our claim, it is enough to show that the above equation holds, replacing $g$ by $\tilde{g}$ and $\alpha$ by $\alpha-O(\sqrt{\eps})$. Note that, as $\varepsilon\rightarrow 0$ and uniformly in $a$, 
\begin{equation} \label{tilde_equation} 
	(1-a)^{-1}\int_{1-\varepsilon}^1 \exp\left(-\int_0^y\frac{1}{g(z)}\;\mathrm{d}z\right)\cdot\frac{1}{g(y)}\cdot\left(1-a^{g(y)}\right)\mathrm{d}y=O(\sqrt{\varepsilon}).
\end{equation}
 Second, for $y \in [0,1-\varepsilon]$, we have
$g(y) \geq \tilde{g}(y)-2 \varepsilon \geq \tilde{g}(y) \times (1-2\varepsilon/\tilde{g}(y))
\geq \tilde{g}(y)(1-2\varepsilon/g(1-\varepsilon))=\tilde{g}(y)(1-O(\sqrt{\varepsilon}))$, and thus $1/\tilde{g}(y) \geq 1/g(y)-O(\sqrt{\varepsilon})$. Hence, by~\eqref{g_equation} and~\eqref{tilde_equation}, as $\varepsilon$ tends to 0 and uniformly in $a$,
\begin{equation*}
	(1-a)^{-1}\int_0^{1-\varepsilon} \exp\left(-\int_0^y\frac{1}{\tilde{g}(z)}\;\mathrm{d}z\right)\cdot\frac{1}{\tilde{g}(y)}\cdot\left(1-a^{\tilde{g}(y)}\right)\mathrm{d}y \geq \alpha-O(\sqrt{\varepsilon}).
\end{equation*}
To obtain the above inequality, we have used in addition the fact that the left-hand side term and right-hand side term in the integrand of equation \eqref{g_equation} increase when one replaces $g$ by $\tilde{g}$. 
This completes the proof. 
\end{proof}

To implement our approach as a streaming algorithm, for each $i=0,\dots,\gamma$, we construct the maximum of the corresponding random subset on the fly: We count how many random positions are left to consider and include the current position with probability proportional to that count.

\begin{lemma}\label{lem:on-the-fly}
For each $x_i$ the threshold corresponding to $g(x_i)$ can be computed with a single pass over the data and $O(\log{n})$ space.
\end{lemma}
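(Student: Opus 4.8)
The plan is to implement, for a fixed intersection point $x_i$, the maximum of a uniformly random subset of prescribed size $\ell_i = \lceil g(x_i)\cdot n\rceil$ drawn from the pool $\{S_1,\dots,S_{\beta n},X_1,\dots,X_{\lceil x_i n\rceil - 1}\}$, using only $O(\log n)$ space and one pass. Since $g$ is bounded by some constant $\bar y$ (we have $\ell_i \le (\bar y+o(1))\cdot n = O(n)$ by the hypotheses of \autoref{thm:streaming}), storing counters up to $n$ costs $O(\log n)$ bits, and storing a single running maximum value costs $O(\log n)$ bits by the convention in the Preliminaries section on space complexity. The key observation is that we never need to materialize the whole random subset: to compute its maximum it suffices to maintain the maximum of whatever part of the subset has been revealed so far, together with enough bookkeeping to make the correct on-the-fly inclusion decisions.

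Concretely, I would use the standard sequential-sampling (reservoir-style) identity: if we still need to pick $r$ more elements from the remaining $s$ pool elements (so the pool has $P$ elements total and we want $\ell_i$ of them), then the next pool element should be included in the subset with probability exactly $r/s$, and conditioned on each inclusion/exclusion the future choices are again a uniformly random subset of the right size. This is exactly the recursive description alluded to in the paragraph preceding the lemma ("we count how many random positions are left to consider and include the current position with probability proportional to that count"). So the algorithm keeps two integer counters — $s$, the number of pool elements not yet processed, and $r$, the number of subset slots still to fill — both initialized appropriately ($s = \beta n + \lceil x_i n\rceil - 1$, $r = \ell_i$), and a single real register $M$ holding the maximum of the subset elements chosen so far (initialized to $-\infty$). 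On reading the next pool element $v$ from the stream, with probability $r/s$ set $M \leftarrow \max\{M, v\}$ and decrement $r$; always decrement $s$. After the pool is exhausted, $M$ is the threshold. Correctness follows by a one-line induction on $s$ showing the included elements form a uniform $\ell_i$-subset; since all we read off is $\max$, this yields exactly the distribution of the maximum of $\ell_i$ fresh samples by \autoref{lem:bayes}. The space used is the two counters plus $M$, i.e. $O(\log n)$, and it is a single left-to-right pass.

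There is essentially no deep obstacle here; the only point requiring a word of care is the \emph{timing} of when pool elements become available relative to when the threshold for $x_i$ is needed. In the streaming model the $\beta n$ samples arrive first, followed by $X_1,X_2,\dots$ in order, so by the time the algorithm reaches step $\lceil x_i n\rceil$ it has already streamed past exactly the pool $\{S_1,\dots,S_{\beta n},X_1,\dots,X_{\lceil x_i n\rceil-1}\}$; thus the on-the-fly subset construction for $x_i$ can be run during the prefix of the stream ending just before step $\lceil x_i n\rceil$, and its output $M$ is ready precisely when it is first used. Running the $\gamma = O_\varepsilon(1)$ such constructions for $x_1,\dots,x_\gamma$ in parallel multiplies the space by a constant depending only on $\varepsilon$, so the total space remains $O_\varepsilon(\log n)$, matching the bound claimed in \autoref{thm:streaming}. (One should also note $r/s$ is a rational with $O(\log n)$-bit numerator and denominator, so sampling a single biased coin with this bias is within the model; this is the sort of routine detail I would not belabor.)
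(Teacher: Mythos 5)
Your proof is correct and takes essentially the same approach as the paper: both use the sequential-sampling scheme that includes each pool element with probability (remaining slots needed)/(remaining pool size) while maintaining only two integer counters and a running maximum. The only cosmetic difference is that the paper verifies uniformity of the resulting subset via an explicit telescoping product yielding $1/\binom{k+j-1}{q}$, whereas you invoke the standard induction; the substance is identical.
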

\begin{proof}
	Consider the first $j$ such that $j/n \geq x_i$ and let $q = \lceil g(x_i) \cdot n \rceil$. We will construct a $0/1$-vector of length $k+j-1$ with exactly $q$ many $1$'s on the fly such that the positions where the bit vector is $1$ correspond to a subset of size $q$ chosen uniformly at random without replacement from $\{S_1, \dots, S_k, X_1, \dots, X_{j-1}\}$. We can then compute the threshold in an online fashion by remembering the maximum $T$ of all values where we have set the bit to $1$.

We do this as follows: We remember the number $s$ of $1$'s that we still need and the number of positions $t$ still to come. Initially, $s = q$ and $t = k+j-1$. Then for $\ell = 1$ to $k+j-1$ we toss a biased coin that comes up $1$ with probability $s/t$ and is $0$ otherwise. If it comes up $1$ we update $s = s-1$ and $t = t-1$, otherwise we keep $s$ and just set $t = t-1$.

It now suffices to show that this process always yields a $0/1$-vector of length $k+j-1$ with exactly $q$ many $1$'s, and that all such bit vectors are equally likely. The former follows from the fact that the probability of seeing another $1$ is set to zero once there are already $q$ many $1$'s and that once the number of remaining positions equals the number of $1$'s that are still needed the probability of seeing a $1$ is set to one for all remaining steps.

It remains to show that all $0/1$-vectors of length $k+j-1$ with $q$ many $1$'s are equally likely, i.e., that the likelihood of seeing any such vector vector is $1/{k+j-1 \choose q}$. 
Indeed, consider an arbitrary such vector $z$. Let $E = \{e_1, e_2, \dots, e_q\} \subseteq [k+j-1]$ with $e_1 < e_2 < \dots < e_q$ be the indices $\ell$ where $z_\ell = 1$  and let $N = \{n_1, \dots, n_{k+j-1-q}\}$ with $n_1 < \dots < n_{k+j-1-q}$ be the indices $\ell$ where $z_\ell = 0$. Then,
\allowdisplaybreaks
\begin{align*}
\Pr{z} &= \prod_{\ell=1}^{q} \frac{q-\ell+1}{k+j-e_\ell} \cdot \prod_{\ell = 1}^{k+j-1-q} \frac{k+j-q-\ell}{k+j-n_\ell} 
= \frac{1}{{k+j-1 \choose q}}.
\end{align*}

The space complexity is $O(\log(n))$ because all the algorithm needs to store is the threshold, the remaining number of positions, and the number of ones that are still required.
\end{proof}

\autoref{thm:streaming} then follows by combining the above lemmata.

\begin{figure}
\begin{center}
\begin{tikzpicture}
\begin{axis}[axis x line = left, axis y line = left, x=8cm, y=1.7cm, xmin = 0.0, xmax = 1.1, ymin = 0.0, ymax = 1.6, font=\small, line width = 1pt]
\addplot [no marks] coordinates {
	( 0.0 , 1.4433756729740645 )
	( 0.01 , 1.4361406616345072 )
	( 0.02 , 1.4288690166235207 )
	( 0.03 , 1.4215601757693317 )
	( 0.04 , 1.4142135623730951 )
	( 0.05 , 1.4068285846778443 )
	( 0.06 , 1.399404635312222 )
	( 0.07 , 1.3919410907075054 )
	( 0.08 , 1.3844373104863459 )
	( 0.09 , 1.3768926368215255 )
	( 0.1 , 1.3693063937629153 )
	( 0.11 , 1.3616778865306827 )
	( 0.12 , 1.3540064007726602 )
	( 0.13 , 1.3462912017836262 )
	( 0.14 , 1.3385315336840842 )
	( 0.15 , 1.3307266185559428 )
	( 0.16 , 1.3228756555322954 )
	( 0.17 , 1.3149778198382918 )
	( 0.18 , 1.3070322617798436 )
	( 0.19 , 1.299038105676658 )
	( 0.2 , 1.2909944487358056 )
	( 0.21 , 1.282900359861721 )
	( 0.22 , 1.2747548783981963 )
	( 0.23 , 1.2665570127975554 )
	( 0.24 , 1.2583057392117916 )	
	( 0.25 , 1.25 )
	( 0.26 , 1.241638702145945 )
	( 0.27 , 1.233220715579062 )
	( 0.28 , 1.2247448713915892 )
	( 0.29 , 1.2162099599438687 )
	( 0.3 , 1.20761472884912 )
	( 0.31 , 1.19895788082818 )
	( 0.32 , 1.1902380714238086 )
	( 0.33 , 1.181453906563152 )
	( 0.34 , 1.1726039399558574 )
	( 0.35 , 1.1636866703140785 )
	( 0.36 , 1.1547005383792517 )
	( 0.37 , 1.14564392373896 )
	( 0.38 , 1.136515141415488 )
	( 0.39 , 1.1273124382057236 )
	( 0.4 , 1.118033988749895 )
	( 0.41 , 1.1086778913041726 )
	( 0.42 , 1.09924216318941 )
	( 0.43 , 1.0897247358851683 )
	( 0.44 , 1.0801234497346432 )
	( 0.45 , 1.0704360482220943 )
	( 0.46 , 1.0606601717798212 )
	( 0.47 , 1.0507933510765408 )
	( 0.48 , 1.0408329997330663 )
	( 0.49 , 1.0307764064044151 )
	( 0.5 , 1.0206207261596576 )
	( 0.51 , 1.0103629710818451 )
	( 0.52 , 1.0 )
	( 0.53 , 0.9895285072531597 )
	( 0.54 , 0.9789450103725609 )
	( 0.55 , 0.9682458365518541 )
	( 0.56 , 0.957427107756338 )
	( 0.57 , 0.9464847243000455 )
	( 0.58 , 0.9354143466934853 )
	( 0.59 , 0.9242113755341183 )
	( 0.6 , 0.9128709291752769 )
	( 0.61 , 0.9013878188659973 )
	( 0.62 , 0.8897565210026094 )
	( 0.63 , 0.8779711460710616 )
	( 0.64 , 0.8660254037844387 )
	( 0.65 , 0.8539125638299665 )
	( 0.66 , 0.841625411530173 )
	( 0.67 , 0.82915619758885 )
	( 0.68 , 0.816496580927726 )
	( 0.69 , 0.8036375634160795 )
	( 0.7 , 0.7905694150420948 )
	( 0.71 , 0.7772815877574013 )
	( 0.72 , 0.7637626158259735 )
	( 0.73 , 0.75 )
	( 0.74 , 0.7359800721939873 )
	( 0.75 , 0.7216878364870323 )
	( 0.76 , 0.7071067811865475 )
	( 0.77 , 0.6922186552431729 )
	( 0.78 , 0.67700320038633 )
	( 0.79 , 0.6614378277661477 )
	( 0.8 , 0.6454972243679027 )
	( 0.81 , 0.6291528696058956 )
	( 0.82 , 0.6123724356957945 )
	( 0.83 , 0.595119035711904 )
	( 0.84 , 0.577350269189626 )
	( 0.85 , 0.5590169943749475 )
	( 0.86 , 0.5400617248673216 )
	( 0.87 , 0.5204164998665333 )
	( 0.88 , 0.49999999999999994 )
	( 0.89 , 0.47871355387816916 )
	( 0.9 , 0.4564354645876384 )
	( 0.91 , 0.4330127018922192 )
	( 0.92 , 0.408248290463863 )
	( 0.93 , 0.38188130791298647 )
	( 0.94, 0.35355339059327373 )
	( 0.95 , 0.3227486121839512 )
	( 0.96 , 0.2886751345948128 )
	( 0.97 , 0.2500000000000002 )
	( 0.98 , 0.2041241452319315 )
	( 0.99 , 0.14433756729740685 )
	( 1.0 , 0.0 )
};
\foreach \x in {0.2,0.4,0.6,0.8,1.0}
	{
		\addplot [no marks, smooth, line width = 0.5pt,dotted] coordinates {
		(\x,0)
		(\x,1.6)
		};
	}
\foreach \y in {0.2,0.4,0.6,0.8,1.0,1.2,1.4,1.6}
	{
		\addplot [no marks, smooth, line width = 0.5pt,dotted] coordinates {
		(0,\y)
		(1,\y)
		};
	}
\addplot  [no marks, smooth, line width = 1pt, color=green!50!black] coordinates {
	(0,1.44338)
	(0.0592,1.50258)
};
\addplot  [no marks, smooth, line width = 1pt, color=green!50!black] coordinates {
	(0.0592,1.4)
	(0.2,1.5408)
};
\addplot  [no marks, smooth, line width = 1pt, color=green!50!black] coordinates {
	(0.2, 1.29099)
	(0.3088,1.39979)
};
\addplot  [no marks, smooth, line width = 1pt, color=green!50!black] coordinates {
	(0.3088, 1.2)
	(0.4, 1.2 + 0.4 - 0.3088)
};
\addplot  [no marks, smooth, line width = 1pt, color=green!50!black] coordinates {
	(0.4, 1.11803)
	(0.52, 1.11803 + 0.52- 0.4)
};
\addplot  [no marks, smooth, line width = 1pt, color=green!50!black] coordinates {
	(0.52, 1)
	(0.6, 1 + 0.6- 0.52)
};
\addplot  [no marks, smooth, line width = 1pt, color=green!50!black] coordinates {
	(0.6, 0.912871)
	(0.6928, 0.912871 + 0.6928 - 0.6)
};
\addplot  [no marks, smooth, line width = 1pt, color=green!50!black] coordinates {
	(0.6928, 0.8)
	(0.8, 0.8 + 0.8 - 0.6928)
};
\addplot  [no marks, smooth, line width = 1pt, color=green!50!black] coordinates {
	(0.8, 0.645497)
	(0.8272, 0.645497 + 0.8272 - 0.8)
};
\addplot  [no marks, smooth, line width = 1pt, color=green!50!black] coordinates {
	(0.8272, 0.6)
	(0.9232, 0.6 + 0.9232 - 0.8272)
};
\addplot  [no marks, smooth, line width = 1pt, color=green!50!black] coordinates {
	(0.9232, 0.4)
	(0.9808, 0.4 + 0.9808 - 0.9232)
};
\addplot  [no marks, smooth, line width = 1pt, color=green!50!black] coordinates {
	(0.9808, 0.2)
	(1, 0.2 + 1 - 0.9808)
};
\end{axis}
\end{tikzpicture}
\end{center}
\caption{Visualization of the $g$ function from \autoref{subsec:unconstrained} (bold, black), the grid with $\varepsilon = 0.2$ (dotted, black), and the resulting $\tilde{g}$ function (green, bold).}
\label{fig:streaming-algo}
\end{figure}
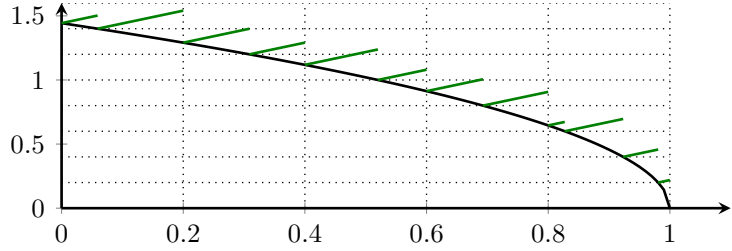

\section{Unknown Distribution With Prior}  \label{sec:exchangeable}
 
Let $\theta$ be a distribution over a set of real distributions $\left\{F^1,\dots,F^m\right\}$, where $m \geq 1$. 
Consider the following learning problem, which we will call \emph{unknown i.i.d.\@ problem with prior $\theta$}:
\begin{itemize}
\item Nature draws $F$ according to $\theta$ and $(X_1,\dots,X_n)$ i.i.d.\@ according to $F$;
\item the gambler knows $\theta$ but not $F$, and has to choose a stopping rule with stopping time $\tau$ in order to maximize $\mathbb{E}(X_\tau)$. 
\end{itemize}
The key difference with the unknown i.i.d.\@ problem without samples studied by \citet{CorreaDFS19} is that the gambler \emph{knows} $\theta$. Thus, it is plausible that a constant better than $1/e$ can be guaranteed in this model. 
Our main result in this section shows that $1/e$ remains best possible.
\begin{theorem} \label{thm:known}
For any $\delta>0$, there exists $n_0 \in \mathbb{N}$ such that for any $n \geq n_0$, there exists a finite number of finitely supported distributions $F^1,\dots,F^m$ and a distribution $\theta$ over $\left\{F^1,\dots,F^m\right\}$ such that for any 
stopping rule with stopping time~$\tau$,
\[
\mathbb{E}[X_\tau] \leq \left(\frac{1}{e}+\delta\right) \cdot \mathbb{E}[\max\{X_1,\dots,X_n\}].
\]
\end{theorem}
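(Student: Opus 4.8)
The plan is to reduce the statement to a finitary version of the impossibility result of \citet{CorreaDFS19} by means of a minimax argument; fix $\delta>0$ throughout. \emph{Step 1 (the game).} View the problem as a two-person zero-sum game in which player~1 chooses a prior $\theta$ over a fixed finite family $\{F^1,\dots,F^m\}$ of finitely supported distributions and player~2 chooses a stopping rule, the payoff being $\mathbb{E}[X_\tau]-\tfrac1e\,\mathbb{E}[\max\{X_1,\dots,X_n\}]$, with expectation over $F\sim\theta$, then $X_1,\dots,X_n$ i.i.d.\ from $F$, and the rule's internal randomness. Identify the $F^j$ with player~1's pure strategies, so her strategy space is the compact simplex $\Delta^{m-1}$; take as player~2's pure strategies the \emph{deterministic} stopping rules, of which there are only finitely many, because the $F^j$ are finitely supported and finitely many, so realized values lie in a common finite set and there are only finitely many histories. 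The payoff is affine in $\theta$ and, since player~2 has finitely many pure strategies, a minimax theorem for finite versus compact-metric strategy spaces (see, e.g., \citet[Proposition~1.17]{MSZ}) applies. Using Kuhn's theorem \citep{Kuhn53} to identify mixtures of deterministic stopping rules with behavioral ones (the gambler has perfect recall), and noting that for a fixed $\theta$ the supremum over stopping rules is the same whether or not they may condition on $\theta$, we obtain
\[
	\inf_{\theta}\ \sup_{\vr}\Bigl(\mathbb{E}_\theta[X_\tau]-\tfrac1e\,\mathbb{E}_\theta[\max]\Bigr)\ =\ \sup_{\vr}\ \inf_{\theta}\Bigl(\mathbb{E}_\theta[X_\tau]-\tfrac1e\,\mathbb{E}_\theta[\max]\Bigr),
\]
where $\vr$ ranges over all stopping rules of \autoref{sec:prelims}; only ``$\le$'' is needed below, and the left-hand side is precisely the quantity to be bounded in the theorem.

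\emph{Step 2 (the hard family).} For fixed $\vr$ the map $\theta\mapsto\mathbb{E}_\theta[X_\tau]-\tfrac1e\,\mathbb{E}_\theta[\max]$ is affine, hence its infimum over $\Delta^{m-1}$ is attained at a vertex and equals $\min_j\bigl(\mathbb{E}_{F^j}[X_\tau]-\tfrac1e\,\mathbb{E}_{F^j}[\max]\bigr)$, so the right-hand side above equals $\sup_{\vr}\min_j\bigl(\mathbb{E}_{F^j}[X_\tau]-\tfrac1e\,\mathbb{E}_{F^j}[\max]\bigr)$, the best additive advantage of a gambler facing an adversarially chosen member of the family. It therefore suffices to exhibit, for all $n$ larger than some $n_0=n_0(\delta)$, a finite family $F^1,\dots,F^m$ of finitely supported distributions for which this quantity is at most some small $\delta'>0$. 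Such a family arises from a variant of the construction of \citet{CorreaDFS19}: their construction has this property but produces distributions of infinite support, via an appeal to the infinite Ramsey theorem, which is exactly what obstructs the swap of $\inf$ and $\sup$. One re-runs it with only finitely many ``scales'' — enough scales to force the $1/e$ bound for the given $n$ and $\delta$ — and replaces the infinite Ramsey theorem by a finite one (see, e.g., \citet{CFS10}), whose Ramsey number, though possibly large, is finite and bounds the required support size. Carrying out this finitary construction, and checking that truncating to finitely many scales does not spoil the $1/e$ bound, is the main obstacle; the rest of the argument is routine.

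\emph{Step 3 (conclusion).} Combining Steps~1 and~2, $\inf_\theta\sup_{\vr}\bigl(\mathbb{E}_\theta[X_\tau]-\tfrac1e\,\mathbb{E}_\theta[\max]\bigr)\le\delta'$, so some prior $\theta^\star$ satisfies $\mathbb{E}_{\theta^\star}[X_\tau]\le\tfrac1e\,\mathbb{E}_{\theta^\star}[\max]+\delta'$ for every stopping rule. Arranging the construction so that $\mathbb{E}_{F^j}[\max]$ is bounded away from $0$ (and hence so is $\mathbb{E}_{\theta^\star}[\max]$), and taking $\delta'$ small enough in terms of $\delta$ and that bound, converts this into $\mathbb{E}[X_\tau]\le(\tfrac1e+\delta)\,\mathbb{E}[\max]$, as claimed. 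Note that $\theta^\star$ is extracted non-constructively from the minimax theorem and need not be, for instance, the uniform prior on the family: a stopping rule might perform badly against one $F^j$ yet very well against the rest, so a naive prior need not be hard, and it is precisely this slack that the minimax theorem eliminates.
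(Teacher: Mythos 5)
The high-level plan---a minimax argument reducing to a finitary Ramsey variant of the construction of \citet{CorreaDFS19}, with Kuhn's theorem bridging mixed and behavioral stopping rules---is the same as the paper's, and you correctly identify why the finite Ramsey theorem is needed (to obstruct nothing but the swap of $\inf$ and $\sup$, the support must be bounded uniformly over stopping rules).

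There is, however, a genuine gap between your Steps~1 and~2. Step~1 applies the minimax theorem \emph{within an already-given} finite family $\{F^1,\dots,F^m\}$, reducing the problem to exhibiting such a family with $\sup_{\vr}\min_j\bigl(\mathbb{E}_{F^j}[X_\tau]-\tfrac1e\,\mathbb{E}_{F^j}[\max]\bigr)\leq\delta'$. Step~2 then asserts that ``such a family arises from a variant of the construction of \citet{CorreaDFS19}.'' But the finitary Ramsey variant (the paper's Proposition~\ref{prop:supp}) gives something strictly weaker: for every stopping rule there is \emph{some} hard distribution whose support lies in a fixed finite set $[p]$, where $p=p(n)$ is independent of the rule. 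The hard distribution depends on the rule, and as the rule varies over the continuum of behavioral rules the hard distributions range over the infinite simplex $B_p$; nothing in that construction produces a single finite subfamily that defeats every rule simultaneously. The paper closes this gap in a different order: it runs the minimax theorem not over a pre-chosen finite family but over the \emph{full compact simplex} $B_p$ (finite pure-strategy set for the gambler, compact metric strategy space for the adversary), deduces $V^+=V^-<0$, and only \emph{then} invokes the fact that in such a game a mixed strategy achieving a strictly negative value can be taken to be finitely supported \citep[Proposition~I.1.9]{MSZ}. The finite family $\{F^1,\dots,F^m\}$ and the prior $\theta$ are the support and weights of that finitely supported mixed strategy---they come out of the minimax theorem itself, not out of the Ramsey construction. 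Your Step~1 minimax, being confined to a fixed finite family, cannot supply that family, and Step~2's claim that the construction hands you one is where the proof fails. (A secondary, cosmetic point: the paper bakes $1/e+\delta$ directly into the payoff $g(a,b)$, which avoids the ``bound $\mathbb{E}[\max]$ away from zero'' maneuver of your Step~3, though your version could be made to work.)
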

An important consequence of \autoref{thm:known} is that it answers a question of \citet{HillKertz92}. Indeed, in their survey on the state of the art in prophet inequalities, \citeauthor{HillKertz92} stated five groups of open questions. The second group, Question~2 on Page~203, concerns exchangeable random variables. Regarding the maximum value obtainable by a stopping rule~$\tau$ and the expected maximum value in the sequence they specifically asked for the largest universal constant $\gamma$ such that,
if $X_1,X_2,\dots,X_n$ are exchangeable random variables taking values in $[0,1]$ then
\begin{equation} \label{constant}
	\mathbb{E}[X_\tau] \geq \gamma \cdot \mathbb{E}\left[\max\{X_1,\dots,X_n\}\right].
\end{equation}
We resolve this question by showing that the universal constant is $\gamma=1/e\approx 0.368$. Prior to this work it was only known that the tight approximation ratio for $n=2$ is $1/2$,  that the ratio is weakly decreasing in~$n$, and that it is at least $1/e$~\citep{EltonKertz91}.

Note that the assumption that $X_1,\dots,X_n$ take value in $[0,1]$ is actually without loss of generality. Indeed, given non-negative random variables $X_1,\dots,X_n$ and $M\geq 1$, 
one can define $X_i'=(X_i/M) 1_{X_i \leq M}$ for all $i\in[n]$. The variables $X'_1,\dots,X'_n$ then are exchangeable and take values in $[0,1]$, and for $M$ large enough, the best approximation guarantees achievable for both instances are almost the same. We will therefore omit the assumption when stating our results.

\medskip

\begin{corollary}
\label{thm:exch}
For any $\delta>0$, there exists $n_0 \in \mathbb{N}$ such that for any $n \geq n_0$, there exists a sequence of exchangeable random variables $X_1,X_2,\dots,X_n$ such that for any stopping rule with stopping time~$\tau$,
\[
\mathbb{E}[X_\tau] \leq \left(\frac{1}{e}+\delta\right) \cdot \mathbb{E}[\max\{X_1,\dots,X_n\}].
\]
In particular, the largest constant $\gamma$ in~\eqref{constant} is $1/e$. 
\end{corollary}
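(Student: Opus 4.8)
The plan is to read the corollary off Theorem~\ref{thm:known} almost directly: the hard instance constructed there is already an exchangeable sequence, and a stopping rule for an exchangeable sequence is no more powerful than a stopping rule that is additionally handed the prior~$\theta$.

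Concretely, I would fix $\delta>0$, take $n_0$ as provided by Theorem~\ref{thm:known}, and for $n\geq n_0$ consider the finitely supported distributions $F^1,\dots,F^m$, the prior $\theta$ over them, and the sequence $X_1,\dots,X_n$ obtained by drawing $F\sim\theta$ and then $X_1,\dots,X_n$ i.i.d.\ from $F$. The joint law of $(X_1,\dots,X_n)$ is the mixture $\sum_j \theta(\{F^j\})\cdot(F^j)^{\otimes n}$; each factor $(F^j)^{\otimes n}$ is invariant under permutations of the coordinates because it is i.i.d., and a convex combination of permutation-invariant laws is permutation-invariant, so $(X_1,\dots,X_n)$ is exchangeable. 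This is the sequence the corollary asks for, and it remains only to check that Theorem~\ref{thm:known} controls the right class of stopping rules. A stopping rule for an exchangeable sequence is a family $r_i\colon\R_+^i\to[0,1]$, which is exactly a stopping rule for the unknown-i.i.d.-with-prior problem (a family $r_i\colon\R_+^i\times\Theta\to[0,1]$) with the second argument frozen at the $\theta$ built into the instance; hence Theorem~\ref{thm:known} yields $\E{X_\tau}\leq(1/e+\delta)\cdot\E{\max\{X_1,\dots,X_n\}}$ for every stopping rule, which is the first statement of the corollary.

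For the claim that the constant $\gamma$ in~\eqref{constant} equals $1/e$, I would combine this with the known lower bound of \citet{EltonKertz91}, which gives $\E{X_\tau}\geq(1/e)\cdot\E{\max\{X_1,\dots,X_n\}}$ for an optimal stopping rule on any exchangeable $[0,1]$-valued sequence, so $\gamma\geq 1/e$. In the other direction the instances above are bad for arbitrarily large $n$, so $\gamma\leq 1/e+\delta$ for every $\delta>0$ and thus $\gamma\leq 1/e$; the only mismatch is that those instances are finitely supported rather than $[0,1]$-valued, which is bridged by the normalization noted just before the corollary—replace $X_i$ by $(X_i/M)\mathbf{1}_{\{X_i\leq M\}}$ and let $M\to\infty$—transferring the upper bound to the $[0,1]$-valued setting with vanishing loss. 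Together these give $\gamma=1/e$.

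I do not anticipate a genuine obstacle; the one point that needs a sentence of care is the identification of the two stopping-rule classes, i.e.\ that the best value of $\E{X_\tau}$ achievable by a stopping rule on the exchangeable sequence is at most the best achievable in the prior problem to which Theorem~\ref{thm:known} applies. This is immediate because, once the instance is fixed, knowledge of the joint law of $(X_1,\dots,X_n)$ and knowledge of $\theta$ are equivalent forms of side information.
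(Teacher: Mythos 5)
Your proposal is correct and follows essentially the same route as the paper: take the hard mixture instance from Theorem~\ref{thm:known}, note that a mixture of i.i.d.\ laws is exchangeable, and combine the resulting upper bound with the Elton--Kertz lower bound and the truncation remark to pin down $\gamma=1/e$. The only (welcome) addition is your explicit remark identifying stopping rules for the exchangeable sequence with stopping rules in the prior problem with $\theta$ frozen, which the paper leaves implicit.
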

\begin{proof}
Let $\delta>0$, $n_0 \in \mathbb{N}$, $n \geq n_0$, $m \geq 1$, $F^1,\dots,F^m$ and $\theta$ be given by \autoref{thm:known}. Let $F$ be a random variable distributed according to $\theta$, and
$(X_1,\dots,X_n)$ a sequence of random variables such that its distribution conditional to $F$ is the one of i.i.d.\@ random variables from $F$. The random variables $(X_1,\dots,X_n)$ are exchangeable, 
and by \autoref{thm:known}, no stopping rule can guarantee a constant better than $1/e+\delta$. 
\end{proof}
The proof of \autoref{thm:known} extends on a technique developed recently by \citet{CorreaDFS19} to show a lower bound of~$1/e$ for unknown i.i.d.\@ random variables with finite support and without any additional samples. We modify the central element of this technique, which relies on the infinite version of Ramsey's theorem, to instead use the finite version (\autoref{sec:unknown}). We then deduce the desired impossibility using a minimax argument (\autoref{sec:minmax}).

\subsection{Hard Finite Instances for Unknown Distribution} \label{sec:unknown}

Let us return to the setting considered by \citet{CorreaDFS19}, where random variables are drawn independently from the same unknown distribution and we do not have access to any additional samples and there is no prior over the unknown distribution. For $p\in\N$, denote by $B_p$ the set of probability measures on $[p]$. 
We prove the following result.
\begin{proposition} \label{prop:supp}
For all $\delta>0$, there exists $n_0 \in \N$ such that for any $n \geq n_0$, there exists $p \in \N$ such that for any stopping rule $\vr$ with associated stopping time $\tau$, there exists $b \in B_p$ such that 
when $X_1,\dots,X_n$ are i.i.d.\@ random variables drawn from $b$, 
\[
\mathbb{E}[X_\tau] < \left(\frac{1}{e}+\delta \right) \cdot \mathbb{E}[\max\{X_1,\dots,X_n\}].
\]
\end{proposition}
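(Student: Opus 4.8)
The plan is to adapt the Correa--D\"utting--Fischer--Svensson argument, whose only non-finite ingredient is an application of the \emph{infinite} Ramsey theorem, and to replace that ingredient with the \emph{finite} Ramsey theorem so that the hard instances live on a bounded support $[p]$ with $p=p(n)$. The high-level structure is as follows. One fixes a large parameter $N$ (the number of ``levels''), and considers distributions on a geometric-like ladder of values $v_1 \ll v_2 \ll \cdots$; the point of a rapidly growing ladder is that the prophet's expected maximum is essentially the largest value that appears, so the gambler must essentially guess correctly which level is the top one. A stopping rule, restricted to a window of consecutive levels, ``behaves'' in one of finitely many qualitative ways (e.g.\ roughly, at which level does it first accept), and the finite Ramsey theorem produces a long monochromatic sub-ladder on which the rule behaves homogeneously. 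On such a homogeneous ladder the problem collapses to the classical secretary-type analysis: against the best response $b$, which puts most of its mass just below the top level and a tiny atom on the top, any stopping rule is essentially choosing a single cutoff position among $n$ i.i.d.\ draws, and the $1-1/e$-type bound forces $\mathbb{E}[X_\tau] \le (1/e+\delta)\,\mathbb{E}[\max_i X_i]$.

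\textbf{Key steps, in order.} (1) \emph{Setup and reduction to a finite coloring.} Choose $N$ large depending on $\delta$; on a ladder $v_1 < \cdots < v_N$ with ratios growing fast enough that $\mathbb{E}[\max_i X_i] = (1+o(1))\,v_{j^*}\cdot \Pr[\text{some }X_i = v_{j^*}]$ whenever $v_{j^*}$ is the top value used. For a stopping rule $\vr$ and any window of, say, $L$ consecutive ladder levels, define a color recording the rule's coarse behavior on that window --- concretely, discretize to finitely many buckets the probability, as a function of the position $i\in[n]$ and of ``how many distinct high values have been seen so far'', that the rule accepts; since $n$ is finite and we discretize probabilities, the number of colors is finite and depends only on $n$ (and the discretization granularity). (2) \emph{Apply the finite Ramsey theorem} (as in \citet{CFS10}) to obtain, for $N$ large enough in terms of $n,L$, and the number of colors, a monochromatic sub-ladder of $L$ levels $v_{j_1} < \cdots < v_{j_L}$. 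Set $p = v_N$ (an integer after rescaling), so all the distributions we construct are supported on $[p]$ and $p$ depends only on $n$ and $\delta$ --- this is the crucial quantitative gain over the infinite-Ramsey version. (3) \emph{Construct the bad distribution $b$ on the homogeneous ladder.} On the $L$ homogeneous levels, let $b$ place an atom of size roughly $1/n$ (suitably scaled per level) so that the sequence of record-values seen by the gambler mimics $n$ i.i.d.\ draws from a ``secretary-like'' distribution; tune the ratio between consecutive used levels so that only the single top level matters for the prophet. (4) \emph{Analysis on the homogeneous ladder.} Because the rule is monochromatic, its acceptance behavior at each level depends (up to the discretization error) only on the position and the number of records seen, i.e.\ it reduces to a threshold rule on $n$ i.i.d.\ samples; the expected value it collects is at most $(1+o(1))\,v_{\mathrm{top}}\cdot(1-(1-1/n)^n + o(1)) \cdot (\text{something}\le 1/e)$ --- more precisely one mirrors the calculation of \citet{CorreaDFS19} showing the ratio is at most $1/e + o(1)$ as $N\to\infty$ and the discretization refines. (5) \emph{Collect constants:} choose the discretization fine enough and $N,L$ large enough that the accumulated errors are below $\delta$, fix $n_0$ accordingly, and read off $p$.

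\textbf{Main obstacle.} The delicate part is Step~(1)--(2): setting up the coloring so that (a) the number of colors is genuinely \emph{finite} and controlled (so that finite Ramsey applies with a quantitative bound on $N$, hence on $p$), while (b) monochromaticity is strong enough to force the rule into near-threshold behavior in Step~(4). Unlike the infinite-Ramsey argument, where one can afford an unbounded color set and still extract an infinite homogeneous set, here one must discretize the rule's behavior and then argue that the discretization error does not destroy the $1/e$ bound --- this requires the error in describing ``how the rule behaves'' to be propagated carefully through the prophet-versus-gambler comparison. A secondary subtlety is that a stopping rule may depend on the \emph{exact} realized values, not just their ladder-levels/ranks; one must argue (using that $b$'s support is the ladder and the rule only ever sees ladder values) that without loss of generality the rule depends only on the sequence of ranks, so that its description on a window is finite in the first place. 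Handling these two points cleanly --- finitely many colors, and reduction to rank-based behavior --- is where the bulk of the work lies; the remaining secretary-style computation is essentially that of \citet{CorreaDFS19}.
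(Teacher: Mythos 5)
Your proposal takes essentially the same approach as the paper: replace the infinite Ramsey theorem in the construction of \citet{CorreaDFS19} by the finite version (as in \citet{CFS10}), discretize the acceptance probabilities of an (order-oblivious) stopping rule into finitely many buckets to get a finite coloring, extract a homogeneous ladder of values of bounded size $p=p(n)$, concentrate a hard distribution on that ladder (a heavy top atom plus $n^3$ small values), and then reduce to the secretary-problem lower bound. The paper implements your step (1)--(2) by iterating finite Ramsey once per index $j=1,\dots,n$ on $j$-hypergraphs colored by the discretized $r_j$, yielding nested sets $S^n\subset\cdots\subset S^0$ of controlled cardinalities, which is a cleaner way to realize the homogeneity you describe informally via ``windows''; otherwise the plan, the identified obstacles (finitely many colors, reduction to rank-dependence via order-obliviousness), and the final secretary computation all match the paper's proof.
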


What distinguishes this result from Theorem~3.2 of \citeauthor{CorreaDFS19} is that the distribution~$b$ has finite support and the cardinality of its support is independent of the stopping rule. This property will be crucial when applying the minimax theorem in \autoref{sec:minmax}. 

The result can be obtained by modifying the central construction in the proof of \citeauthor{CorreaDFS19} while keeping much of the structure of that proof intact. To make it easier to compare the two results we will follow the original structure, and begin by recalling the definition of oblivious stopping rules.
\begin{definition}\label{def:val-obl}
	Let $\eps>0$ and $V\subset\N$. 
	\begin{itemize}
	\item
	A stopping rule $\vr$ is \emph{$(\eps,i)$-value-oblivious on $V$} if, there exists a $q_i\in[0,1]$ such that, for all pairwise distinct $v_1,\dots,v_i\in V$ with $v_i>\max\{v_1,\dots,v_{i-1}\}$, it holds that $r_i(v_1,\dots,v_i)\in[q_i-\eps,q_i+\eps)$.	
	\item
	A stopping rule $\vr$ is \emph{$\eps$-value-oblivious on $V$} if, for all $i\in[n]$, it is 
	$(\eps,i)$-value-oblivious on $V$. 
	\item
	A stopping rule $\vr$ is \emph{order-oblivious} if for all $j\in[n]$, all pairwise distinct $v_1,\dots,v_j\in\mathbb{R}_+$ and all permutations $\pi$ of $[j-1]$, $r_i(v_1,\dots,v_j)=r_i(v_{\pi(1)},\dots,v_{\pi(j-1)},v_j)$.
	\end{itemize}
\end{definition}

The cornerstone of our proof is the following lemma.
\begin{lemma}\label{lem:structure1}
	Let $\eps>0$. For any $n \in \N$, there exists $p \in \N$ such that if there exists a stopping rule with guarantee $\alpha$, then there exists a stopping rule~$\vr$ with guarantee $\alpha$ such that $\vr$ is $\eps$-value-oblivious on $V$, for some finite set $V \subset [p]$ with cardinality $n^3+1$. 
\end{lemma}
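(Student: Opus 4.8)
The goal is to pass from an arbitrary stopping rule with guarantee $\alpha$ to one that is $\eps$-value-oblivious on a small finite set $V$, while losing nothing in the guarantee and keeping $|V|$ and the ambient support $[p]$ bounded by quantities depending only on $n$. The plan is to combine two ingredients: a Ramsey-type "regularization" argument that produces a large set of values on which the stopping rule behaves uniformly at each step, and the observation that, since the hard instances we ultimately care about put all their mass on the values in $V$, only the behavior of $\vr$ on inputs drawn from $V$ matters. The key departure from the argument of \citet{CorreaDFS19} is to use the \emph{finite} Ramsey theorem (as in \citet{CFS10}) in place of the infinite one, so that the resulting $V$ sits inside an \emph{a priori bounded} ground set $[p]$ with $p=p(n)$ independent of $\vr$.

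First I would fix $\eps>0$ and $n$, and discretize the range of each function $r_i$ into finitely many buckets of width $\eps$, say the intervals $[\ell\eps,(\ell+1)\eps)$ for $\ell=0,\dots,\lceil 1/\eps\rceil-1$. Given any stopping rule $\vr$, this induces, for each $i\in[n]$ and each increasing tuple $(v_1,\dots,v_i)$, a coloring by the bucket containing $r_i(v_1,\dots,v_i)$; since the behavior relevant to instances supported on a value set should (after an order-oblivious reduction, cf.\ \autoref{def:val-obl}) depend only on which values appear and which is the current maximum, one can view this as a finite coloring of $i$-subsets (together with a marked maximal element) of the ground set. Iterating the finite hypergraph Ramsey theorem over $i=1,\dots,n$ — each step shrinking a monochromatic set for the previous colorings — yields a set $V$ of size $n^3+1$ on which, for every $i$ and every increasing $i$-tuple from $V$, $r_i$ lands in a single bucket; this is exactly $\eps$-value-obliviousness on $V$, with the constants $q_i$ being the left endpoints of the selected buckets. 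The point of the finite Ramsey theorem is that, to guarantee a monochromatic set of size $n^3+1$ after $n$ rounds of coloring with $O(1/\eps)$ colors, it suffices to start from a ground set $[p]$ whose size $p$ is given by an explicit (tower-type) bound depending only on $n$ and $\eps$ — crucially \emph{not} on $\vr$.

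The one subtlety is that the original rule $\vr$ need not be order-oblivious, whereas value-obliviousness as defined is naturally extracted from the order-oblivious structure; I expect this to be handled exactly as in \citet{CorreaDFS19} by first symmetrizing $\vr$ over the order of the first $i-1$ arrivals (which preserves the guarantee by exchangeability of i.i.d.\ variables, or equivalently by averaging — and note that a mixture of stopping rules is again a stopping rule), and only then applying the Ramsey regularization. One must also check that restricting attention to the set $V$ and rounding the acceptance probabilities to the bucket endpoints does not change the guarantee: but the guarantee $\alpha$ is a worst-case quantity over distributions, so it is inherited automatically by the modified rule on the sub-family of instances supported on $V$, and the $\eps$-slack in the definition of value-obliviousness absorbs the rounding.

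\textbf{Main obstacle.} The technical heart, and the place where care is needed, is setting up the Ramsey coloring so that (i) the colored objects are genuinely finite-type — i.e.\ that the decision at step $i$, for instances supported on $V$, depends only on a bounded amount of combinatorial data (the set of distinct values seen, the identity of the running maximum) rather than on the full real-valued history — and (ii) the $n$ successive applications of finite Ramsey compose into a \emph{single} bound $p(n,\eps)$. Step (i) is where the order-oblivious reduction is indispensable and where the construction of \citet{CorreaDFS19} must be adapted rather than quoted verbatim; step (ii) is routine once (i) is in place, since finite Ramsey numbers for $k$-uniform hypergraphs with a bounded number of colors are finite and one simply takes an $n$-fold iterate. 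Everything else — the bucketing, the inheritance of $\alpha$, the bookkeeping that $|V|=n^3+1$ suffices for the downstream argument in \autoref{sec:minmax} — is straightforward.
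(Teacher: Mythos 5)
Your proposal is correct and takes essentially the same approach as the paper: first symmetrize to an order-oblivious rule, then discretize acceptance probabilities into $O(1/\eps)$ buckets, color increasing $i$-tuples by bucket, and iterate the finite hypergraph Ramsey theorem over $i=1,\dots,n$ starting from a ground set $[p]$ whose size is determined by the backward composition of Ramsey numbers (in the paper, $p_n=n^3+1$, $p_{i-1}=R(i,p_i,c)$). The one small imprecision — worrying about "rounding the acceptance probabilities to bucket endpoints" — is unnecessary, since the Ramsey step does not modify $\vr$ at all but merely exhibits a set on which the already order-obliviousized rule happens to be $\eps$-value-oblivious, so the guarantee $\alpha$ is preserved trivially.
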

The difference to Lemma~3.4 of \citeauthor{CorreaDFS19} is that the set $V$ is finite, and in addition is uniformly bounded by an integer $p$ that depends only on~$n$. Consequently, instead of the infinite version of Ramsey's theorem used by \citeauthor{CorreaDFS19}, we need the following finite version given for example by \citet{CFS10}.
\begin{lemma}
\label{lem:ramsey}
There exists a function $R:\N^3 \rightarrow \N$ such that for all $n \geq 1$, for all complete $m$-hypergraph with $c$ colors and order larger than $R(m,n,c)$, there exists a sub-hypergraph of order $n$ that is monochromatic.  
\end{lemma}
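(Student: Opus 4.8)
This is the classical finite Ramsey theorem stated for $m$-uniform hypergraphs (as in \citet{CFS10}), so the plan is simply to reproduce its standard proof; since we need $R$ as an actual function I would give the inductive argument rather than appeal to a black box. The induction is on the uniformity~$m$. Write $\chi$ for the given $c$-coloring of the $m$-subsets of the vertex set. The base case $m=1$ is the pigeonhole principle: among $c(n-1)+1$ vertices colored with $c$ colors some color class has size at least~$n$, so one may take $R(1,n,c)=c(n-1)+1$.

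For the inductive step I would assume $R(m-1,\cdot,\cdot)$ is already defined, fix $n$ and $c$, put $s:=R(m-1,n,c)$, and from a vertex set~$X$ greedily build distinct vertices $x_1,\dots,x_s$ and a decreasing chain $X=Y_0\supseteq Y_1\supseteq\cdots\supseteq Y_s$ of subsets with $x_{i+1}\in Y_i$ and $Y_{i+1}\subseteq Y_i\setminus\{x_{i+1}\}$, maintaining the invariant that for every $(m-1)$-subset $A$ of $\{x_1,\dots,x_i\}$ the color $\chi(A\cup\{z\})$ is the same for all $z\in Y_i$. To pass from step~$i$ to step~$i+1$ one picks $x_{i+1}\in Y_i$ and partitions $Y_i\setminus\{x_{i+1}\}$ according to the \emph{profile} of a vertex~$z$, i.e.\ the tuple $(\chi(A'\cup\{x_{i+1},z\}))_{A'}$ indexed by the $(m-2)$-subsets $A'$ of $\{x_1,\dots,x_i\}$; there are at most $c^{\binom{s}{m-2}}$ profiles, so pigeonhole yields $Y_{i+1}$ of size at least $(|Y_i|-1)/c^{\binom{s}{m-2}}$ on which the profile is constant, which restores the invariant both for $(m-1)$-subsets avoiding $x_{i+1}$ (as $Y_{i+1}\subseteq Y_i$) and for those containing it. Taking $|X|\geq(2c^{\binom{s}{m-2}})^{s}$ keeps every $Y_i$ nonempty so that the construction runs to completion, and this bound defines $R(m,n,c)$.

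To close the induction I would note that the invariant forces the color of an $m$-set $\{x_{i_1},\dots,x_{i_m}\}$ with $i_1<\cdots<i_m$ to equal $\chi^\star(\{x_{i_1},\dots,x_{i_{m-1}}\})$, where $\chi^\star(A)$ denotes the common value of $\chi(A\cup\{z\})$ over the relevant~$Y_i$ (this is well defined since $x_{i_m}\in Y_{i_m-1}\subseteq Y_{i_{m-1}}$). Then $\chi^\star$ is a $c$-coloring of the $(m-1)$-subsets of $\{x_1,\dots,x_s\}$, and because $s=R(m-1,n,c)$ the inductive hypothesis gives a size-$n$ set $T\subseteq\{x_1,\dots,x_s\}$ all of whose $(m-1)$-subsets have a single $\chi^\star$-color; every $m$-subset of~$T$ then has that same color, so $T$ spans a monochromatic sub-hypergraph of order~$n$.

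I do not expect a real obstacle here: the only care needed is the bookkeeping in the profile-pigeonhole step, checking the index ranges so that $x_{i_m}$ always lies in the set on which $\chi^\star$ is evaluated, and a (deliberately coarse) bound on $|X|$. An alternative that avoids all quantitative estimates is to deduce the finite statement by compactness from the infinite Ramsey theorem of \citet{Rams30a} invoked in \autoref{sec:exchangeable}: if no $R(m,n,c)$ existed, the finitely branching tree whose nodes are colorings of the $m$-subsets of initial segments $[1],[2],\dots$ with no monochromatic $n$-set would be infinite, hence by K\"onig's lemma would have an infinite branch, i.e.\ a $c$-coloring of all $m$-subsets of~$\N$ with no monochromatic $n$-set, contradicting the infinite version.
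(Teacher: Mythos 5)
The paper gives no proof of this lemma; it is stated as the classical finite Ramsey theorem for $m$-uniform hypergraphs and attributed to \citet{CFS10}, so there is no ``paper proof'' to compare against. Your proposal is correct and fills in what the paper leaves as a citation. The inductive Erd\H{o}s--Rado-style argument is sound: the profile-pigeonhole step restores the invariant both for $(m-1)$-subsets avoiding $x_{i+1}$ (because $Y_{i+1}\subseteq Y_i$) and for those containing it (because the profile is constant on $Y_{i+1}$); the derived coloring $\chi^\star$ on $(m-1)$-subsets of $\{x_1,\dots,x_s\}$ is well defined thanks to the nesting of the $Y_i$, which is exactly the point you use when placing $x_{i_m}\in Y_{i_m-1}\subseteq Y_{i_{m-1}}$; and the coarse bound $|X|\geq\bigl(2c^{\binom{s}{m-2}}\bigr)^{s}$ is enough to keep all of $Y_0,\dots,Y_{s-1}$ nonempty so the construction reaches $x_s$. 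The alternative compactness deduction from the infinite Ramsey theorem via K\"onig's lemma is likewise correct and shorter, though nonconstructive. Either route is adequate for the paper's use in \autoref{lem:structure1}, which only needs some finite $R(m,n,c)$ in order to define the sequence $p_{i-1}=R(i,p_i,c)$.
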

\begin{proof}[Proof of \autoref{lem:structure1}]
Fix $\eps>0$ and set $c=\lfloor (2 \eps)^{-1} \rfloor$.
Define an integer sequence $(p_i)_{0 \leq i \leq n}$ by induction as $p_n=n^3+1$ and 
$p_{i-1}=R(i,p_i,c)$.
Consider a stopping rule $\vr$ with guarantee $\alpha$. By Lemma~3.6 of \citeauthor{CorreaDFS19} it is without loss of generality to assume that $\vr$ is order-oblivious. We show by induction on $j \in \left\{0,1,\dots,n\right\}$ that there exists a set $S^j\subset S^0$ such that 
$|S^j|=p_j$ and for all $i\in\left[j\right]$, $\vr$ is $(\eps,i)$-value-oblivious on $S^j$.

The set $S_0=\left\{n^{3s}:s=0,\dots,p_0 \right\}$ satisfies the induction hypothesis for $j=0$. We proceed to show it for $j>0$. First, observe that we only need to find a set $S^j \subset S^{j-1}$ such that $|S_j|=p_j$ and $\vr$ is $(\eps,j)$-value oblivious on $S^j$, because it follows from the induction hypothesis that for all $i \in [j-1]$, $\vr$ is $(\eps,i)$-value-oblivious on $S^i$ and thus on the subset $S^j \subset S^i$. 

Toward the application of \autoref{lem:ramsey}, we construct a complete $j$-hypergraph $H$ with vertex set $S^{j-1}$. Consider any set ${v_1,...,v_j} \subset S^{j-1}$ of cardinality $j$ such that 
$v_j > max(v_1,...,v_{j-1})$. Note that there exists a unique $u\in \left\{1,2,...,c\right\}$ such that $r_j(v_1,...,v_j) \in [(2u-1) \eps-\eps,(2u-1) \eps+\eps)$, and color the hyperedge $\left\{v_1,...,v_j\right\}$ of $H$ with color $u$. By \autoref{lem:ramsey}, there exists a finite set $S^j$ of vertices with cardinality $p_j$ that induces a complete monochromatic sub-hypergraph of~$H$. Let~$u$ be the color of this sub-hypergraph, set $q=(2u-1)\eps$, and consider distinct $v_1,...,v_j \in S^j$ with $v_j>max(v_1,...,v_{j-1})$. Since the edge $\left\{v_1,...,v_j\right\}$ in $H$ has color $u$, $r_j(v_{\pi(1)},...,v_{\pi(j-1)},v_j) \in [q-\eps,q+\eps)$ for some permutation $\pi$ of $S^{j-1}$. But since $\vr$ is order-oblivious, also $r_j(v_1,...,v_{j-1},v_j) \in [q-\eps,q+\eps)$. So $\vr$ is $(\eps,j)$-value oblivious on $S^j$. This completes the induction step.
\end{proof}

With \autoref{lem:structure1} at hand we are now ready to prove \autoref{prop:supp}.

\begin{proof}[Proof of \autoref{prop:supp}]
Let $\delta>0$ and $n \in \N$. We proceed by contradiction, and consider a stopping rule $\vr$ with performance guarantee $1/e+\delta$. Set $\eps=1/n^2$. By \autoref{lem:structure1}, there exists a stopping rule $\vr$ with performance guarantee~$1/e+\delta$ and a set $V\subset [p]$ with cardinality $n^3+1$ on which~$\vr$ is $\eps$-value-oblivious. Let $u$ be the maximum of $V$, and write $V=\left\{v_1,\dots,v_{n^3},u\right\}$. Denote by $\tau$ the stopping time of~$\vr$. 
By construction, we have $u\geq n^3\max\{v_1,\dots,v_{n^3}\}$. The rest of the proof proceeds in the same way as the proof of Theorem~3.2 of \citeauthor{CorreaDFS19}, and we give an informal summary for completeness. For each $i\in[n]$, let 
\[
	X_i = \begin{cases}
		v_1 & \text{w.p.\@ $\frac{1}{n^3}\cdot(1-\frac{1}{n^2})$} \\
		\vdots & \\
		v_{n^3} &\text{w.p.\@ $\frac{1}{n^3}\cdot(1-\frac{1}{n^2})$} \\
		u &\text{w.p.\@ $\frac{1}{n^2}$}
	\end{cases}.
\]
For this particular instance, the performance of any stopping rule corresponds approximately to the probability of picking $u$. Let us, therefore, investigate the probability that $\vr$ picks $u$. First note that with probability almost one, $X_1,\dots,X_n$ are distinct. Moreover, because $\vr$ is $\eps$-value-oblivious on~$V$, it can be changed with an error of $\eps=n^{-2}$ into a stopping rule that considers only the relative ranks of the values it has seen before making its decision. As there are only~$n$ stages, the error is insignificant.

The problem thus reduces to the classic secretary problem, for which it is known that no stopping rule can guarantee a probability of picking the maximum that is higher than $1/e+o(1)$ as $n$ goes to infinity~\citep{Ferguson89}. However, the stopping rule constructed from $\vr$ considers only relative ranks and selects the maximum with probability at least $1/e+\delta-o(1)$, which is a contradiction.
\end{proof}

\subsection{From Unknown to Unknown with Prior: A Minimax Argument}  \label{sec:minmax}

We now use a minimax argument to convert \autoref{prop:supp}, which concerns the unknown i.i.d.\@ case, into an upper bound (impossibility result) for the case of \textit{unknown i.i.d.\@ with prior}. Fix $\delta>0$ and let $n_0$ and $p$ be as in \autoref{prop:supp}. Fix $n\geq n_0$. We need the following definitions.

\newpage

\begin{definition} \mbox{}
\begin{itemize}
	\item A deterministic stopping rule is a sequence $a=(a_1,...,a_n)$ such that $a_i:[p]^{i-1} \rightarrow \left\{0,1\right\}$. Denote the set of such rules by~$A$. 
	\item A mixed stopping rule is a distribution over $A$. Denote the set of such rules by $\mathcal{P}(A)$.
	\item A behavior stopping rule is a sequence $\vr=(r_1,...,r_n)$ such that $r_i : [p]^{i-1} \rightarrow [0,1]$. Denote the set of such rules by~$C$. 
\end{itemize}
\end{definition}
Note that we have defined each class of stopping rule to consider only values in $[p]$ as inputs because we will consider random variables supported on~$[p]$. As before, considering inputs in $\R$ would not have a significant effect on the performance guarantee. Note further that behavior stopping rules correspond to stopping rules as defined in \autoref{sec:prelims}. The minimax argument will require us to consider mixed stopping rules. We will apply Kuhn's theorem~\citep{Kuhn53} to prove that mixed stopping rules and behavior stopping rules provide the same performance guarantee. 

According to our purpose any of these stopping rules will have the two interpretations, (i)~as a stopping rule in the unknown i.i.d.\@ problem, and (ii)~as a stopping rule in the i.i.d.\@ problem with some prior distribution 
$\theta$, where in the latter case we have omitted the dependence of $a$ on $\theta$. 

Recall that $B_p$ is the set of probability distributions over $[p]$. 
Because $p$ has been fixed and for ease of exposition, we will henceforth write $B$ instead of $B_p$. Let $\mathcal{P}(B)$ be the set of probability distributions over $B$. For given $a\in A$ and $b\in B$, define
\begin{equation*}
	g(a,b) = \E{X_{\tau}}-\left(\frac{1}{e}+\delta\right)\cdot\E{\max\{X_1,\dots,X_n\}},
\end{equation*}
where $\tau$ is the stopping time of the stopping rule $a$ in the i.i.d.\@ problem where $X_1,X_2,...,X_n$ are drawn from $b$. We can extend $g$ linearly to $\mathcal{P}(A)\times\mathcal{P}(B)$ by letting 
\begin{equation*}
	g(x,y) = \int_A \int_B g(a,b)  x(\mathrm{d}a) y(\mathrm{d}b).
\end{equation*}
Let
\begin{equation*}
V^-=\max_{x \in \mathcal{P}(A)} \min_{b \in B} g(x,b) \quad \text{and} \quad
V^+=\min_{y \in \mathcal{P}(B)} \max_{a \in A} g(a,y) .
\end{equation*}
Note that in the above expressions, by linearity of $g$ with respect to $x$ and $y$, 
$\min_{b \in B} g(x,b)=\min_{y \in \mathcal{P}(B)} g(x,y)$ and $\max_{a \in A} g(a,y)=\max_{x \in \mathcal{P}(A)} g(x,y)$. 
The key point is that~$V^-$ is related to the universal constant in the unknown i.i.d.\@ problem, while~$V^+$ is related to the universal constant in the unknown i.i.d.\@ problem with prior. 
Indeed, the following proposition holds:
\begin{proposition}
If $V^+<0$, then there exists $y \in \mathcal{P}(B)$ a finitely supported distribution over $B$ such that, in the unknown i.i.d. problem with prior $y$, for all behavior stopping rule with stopping time $\tau$,  
\[
\mathbb{E}[X_\tau] \leq \left(\frac{1}{e}+\delta\right) \cdot \mathbb{E}[\max\{X_1,\dots,X_n\}].
\]
\end{proposition}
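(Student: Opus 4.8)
The plan is to view $V^+=\min_{y\in\mathcal P(B)}\max_{a\in A}g(a,y)$ as the value of the zero-sum game in which the minimizing player chooses a prior over $B$ and the maximizing player chooses a \emph{deterministic} stopping rule, to extract from a (near-)optimal choice of the minimizer a prior that is finitely supported, and finally to upgrade the resulting bound from deterministic to behavior stopping rules via Kuhn's theorem. Note that $V^+<0$ even gives the strict inequality $\E{X_\tau}<(\tfrac1e+\delta)\E{\max\{X_1,\dots,X_n\}}$, which implies the one in the statement.

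First I would record that for each fixed $a\in A$ the map $b\mapsto g(a,b)$ is a polynomial in the point masses $b(1),\dots,b(p)$ of $b\in B$: since $a$ is deterministic, $\tau$ is a fixed function of $(X_1,\dots,X_n)$, so both $\E{\max\{X_1,\dots,X_n\}}$ and $\E{X_\tau}$ are polynomials in these probabilities. In particular $g(a,\cdot)$ is continuous on $B$, so, equipping $\mathcal P(B)$ with the weak-$*$ topology (under which it is compact, $B=B_p$ being a compact metric space), each $y\mapsto g(a,y)=\int_B g(a,b)\,y(\mathrm db)$ is continuous, and hence so is $\varphi(y):=\max_{a\in A}g(a,y)$, being the pointwise maximum of the \emph{finitely many} functions $g(a,\cdot)$, $a\in A$. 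Since $\inf_{y}\varphi(y)=V^+$ and finitely supported probability measures are weak-$*$ dense in $\mathcal P(B)$, continuity of $\varphi$ produces a finitely supported $y^\star\in\mathcal P(B)$ with $\varphi(y^\star)<0$; alternatively, Carathéodory's theorem applied to the compact set $\{(g(a,b))_{a\in A}:b\in B\}\subset\R^{|A|}$ yields an exact minimizer supported on at most $|A|+1$ points of $B$. For this $y^\star$ we then get $g(a,y^\star)\le\varphi(y^\star)<0$ for all $a\in A$, and hence, by linearity of $g$ in its first argument, $g(x,y^\star)=\int_A g(a,y^\star)\,x(\mathrm da)<0$ for every mixed stopping rule $x\in\mathcal P(A)$.

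It then remains to handle an arbitrary behavior stopping rule $\vr\in C$. The stopping problem --- nature first draws $F$, then reveals $X_1,\dots,X_n$ i.i.d.\ from $F$, and at each step the gambler decides whether to stop --- is a finite single-agent extensive-form game with perfect recall, so Kuhn's theorem~\citep{Kuhn53} supplies a mixed stopping rule $x\in\mathcal P(A)$ realization-equivalent to $\vr$: for every $b\in B$, running $\vr$ or $x$ on i.i.d.\ draws from $b$ induces the same joint law of $(X_1,\dots,X_n,\tau)$, so $g(\vr,b)=g(x,b)$. Integrating over $b\sim y^\star$ gives $g(\vr,y^\star)=g(x,y^\star)<0$, and since the performance of $\vr$ in the unknown i.i.d.\ problem with prior $y^\star$ is exactly $\int_B g(\vr,b)\,y^\star(\mathrm db)=g(\vr,y^\star)$, unravelling the definition of $g$ yields $\E{X_\tau}\le(\tfrac1e+\delta)\,\E{\max\{X_1,\dots,X_n\}}$, as required. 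I expect the main obstacle to be the finite-support step in the second paragraph: it is the crux of the statement (and, downstream, of \autoref{thm:known}), and it relies on $A$ being finite and $B$ compact, which is possible only because \autoref{prop:supp} lets us fix the ground set $[p]$ once and for all. The passage from deterministic to behavior rules is the second delicate point, and it is exactly what Kuhn's theorem resolves, guaranteeing that a mixture of stopping rules is itself realizable as a stopping rule.
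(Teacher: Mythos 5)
Your argument is correct and follows the same overall structure as the paper's proof: extract a finitely supported prior $y^\star$ that defeats every deterministic stopping rule, extend to mixed rules by linearity, and upgrade to behavior rules via Kuhn's theorem. The one place you depart is in how the finite support of $y^\star$ is obtained: the paper delegates this entirely to a citation of \citet[Proposition~I.1.9]{MSZ} (the same result it reuses afterward to establish $V^+=V^-$), whereas you prove it directly, either via weak-$*$ continuity of $\varphi(y)=\max_{a\in A}g(a,y)$ together with density of finitely supported measures in $\mathcal{P}(B)$, or, more cleanly, via Carath\'eodory's theorem applied to the compact set $\{(g(a,b))_{a\in A}:b\in B\}\subset\R^{|A|}$, which gives an exact minimizer supported on at most $|A|+1$ points. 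This is a self-contained replacement that makes transparent exactly which structural facts are being used --- finiteness of $A$, compactness of $B=B_p$, and continuity of $b\mapsto g(a,b)$ --- and hence why the uniform bound $p$ from \autoref{prop:supp} is indispensable. The Kuhn/realization-equivalence step is identical in both proofs, and your observation that $V^+<0$ in fact yields the strict inequality $\E{X_\tau}<(1/e+\delta)\,\E{\max\{X_1,\dots,X_n\}}$ is a harmless strengthening of what the paper states.
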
 
\begin{proof}
	Assume that $V^+<0$. By \citep[Proposition I.1.9]{MSZ}, there exists $y\in\mathcal{P}(B)$ with finite support such that $\max_{a \in A} g(a,y) \leq 0$. This means that no deterministic stopping rule provides a better guarantee than $(1/e+\delta)$ in the unknown i.i.d.\@ problem with prior distribution~$y$. Then no distribution over deterministic stopping rule can provide a better guarantee by linearity of expectation, neither can a behavior stopping rule by Kuhn's theorem \citep{Kuhn53}, which proves the result.
\end{proof}
To complete the proof of \autoref{thm:known} it is thus enough to show that $V^+< 0$. To this aim, we first prove that $V^+=V^-$, and then that $V^-<0$. 
\begin{proposition}
We have $V^+=V^-$.
\end{proposition}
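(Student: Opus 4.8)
The plan is to read $V^{+}=V^{-}$ as the assertion that the zero-sum game with payoff $g$ has a value, and to deduce it from a standard minimax theorem. First I would fix the topological setting: the set $A$ of deterministic stopping rules with inputs in $[p]$ is \emph{finite} (each coordinate $a_i$ is one of finitely many maps into $\{0,1\}$), so $\mathcal{P}(A)$ is the standard simplex in $\mathbb{R}^{|A|}$, a compact convex set; and $B=B_p$ is the probability simplex in $\mathbb{R}^{p}$, hence a compact metric space, so that $\mathcal{P}(B)$, equipped with the topology of weak convergence of measures, is a compact, metrizable, convex set.

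Next I would check the only hypothesis that is not purely formal, namely continuity of $g$. Fix a deterministic rule $a\in A$. When $X_1,\dots,X_n$ are i.i.d.\@ with law $b\in B$, the stopping time $\tau$ is a deterministic function of the realisation $(X_1,\dots,X_n)\in[p]^{n}$, so both $\E{X_\tau}$ and $\E{\max\{X_1,\dots,X_n\}}$ are finite sums of products of the weights $b(1),\dots,b(p)$; in particular $b\mapsto g(a,b)$ is a polynomial in the coordinates of $b$, hence continuous on $B$ and bounded (as $B$ is compact). Consequently $y\mapsto g(a,y)=\int_B g(a,b)\,y(\mathrm{d}b)$ is affine and continuous for weak convergence on $\mathcal{P}(B)$, and $g(x,y)=\sum_{a\in A}x(a)\,g(a,y)$ is bi-affine and continuous on $\mathcal{P}(A)\times\mathcal{P}(B)$.

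I would then invoke a classical minimax theorem for zero-sum games in which one side has a finite strategy set and the other a compact metric strategy set (e.g.\@ \citep[Proposition I.1.17]{MSZ}, or Sion's minimax theorem applied to $(\mathcal{P}(A),\mathcal{P}(B),g)$): both strategy sets are convex, both are compact, and $g$ is concave--convex (indeed bilinear) and continuous in each variable, so the game has a value,
\[
	\max_{x\in\mathcal{P}(A)}\min_{y\in\mathcal{P}(B)} g(x,y) = \min_{y\in\mathcal{P}(B)}\max_{x\in\mathcal{P}(A)} g(x,y).
\]
Finally, using linearity of $g$ in each coordinate---already observed in the text, whereby $\min_{y\in\mathcal{P}(B)}g(x,y)=\min_{b\in B}g(x,b)$ and $\max_{x\in\mathcal{P}(A)}g(x,y)=\max_{a\in A}g(a,y)$---the left-hand side above is exactly $V^{-}$ and the right-hand side is exactly $V^{+}$, which yields $V^{+}=V^{-}$.

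I do not expect a genuine obstacle here; the only point that needs care is picking the right topology on $\mathcal{P}(B)$---the weak-convergence topology, in which $\mathcal{P}(B)$ is compact precisely because $B$ is a compact metric space---together with the observation that $b\mapsto g(a,b)$ is continuous there, which is what makes $y\mapsto\int_B g(a,b)\,y(\mathrm{d}b)$ continuous and hence licenses the finite-versus-compact minimax theorem.
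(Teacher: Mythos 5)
Your argument is essentially the same as the paper's: both observe that $A$ is finite, $B=B_p$ is compact metric, and $b\mapsto g(a,b)$ is continuous, and then invoke the MSZ minimax theorem for finite-vs-compact-metric zero-sum games to conclude that the mixed extension of $(A,B,g)$ has a value. You supply additional (correct) detail on the topology of $\mathcal{P}(B)$ and on why $b\mapsto g(a,b)$ is continuous (it is a polynomial in the weights $b(1),\dots,b(p)$), but the route and the key citation are identical.
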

\begin{proof}
The set $A$ is finite, the set $B$ compact metric. Moreover, for all $a\in A$, the mapping $b\rightarrow g(a,b)$ is continuous. By the minimax theorem~\citep[Proposition I.1.9]{MSZ} it follows that the mixed extension of the normal-form game $(A,B,g)$ has a value, \ie that $V^+=V^-$.
\end{proof}

To conclude the proof of \autoref{thm:known} it remains to show that $V^-< 0$. First note that by Kuhn's theorem \citep{Kuhn53},
\begin{equation*}
	V^-=\max_{r \in C} \min_{b \in B} \E{X_{\tau}}-\left(\frac{1}{e}+\delta\right)\cdot\E{\max\{X_1,\dots,X_n\}}.
\end{equation*}
By definition of $p$, which we have chosen as in \autoref{prop:supp}, for each stopping rule in~$C$ there exists $b\in B$ such that
\begin{equation*}
	\E{X_{\tau}}-\left(\frac{1}{e}+\delta\right)\cdot\E{\max\{X_1,\dots,X_n\}} < 0. 
\end{equation*}
It follows that $V^-<0$, which proves \autoref{thm:known}.

\bibliographystyle{abbrvnat}
\bibliography{abb,bibliography}

\end{document}